\documentclass[sigconf, nonacm]{acmart}

\newcommand\vldbdoi{XX.XX/XXX.XX}
\newcommand\vldbpages{XXX-XXX}
\newcommand\vldbvolume{20}
\newcommand\vldbissue{1}
\newcommand\vldbyear{2026}
\newcommand\vldbauthors{\authors}
\newcommand\vldbtitle{\shorttitle} 
\newcommand\vldbavailabilityurl{URL_TO_YOUR_ARTIFACTS}
\newcommand\vldbpagestyle{plain} 

\usepackage{fancyhdr}
\usepackage[linesnumbered,vlined,ruled]{algorithm2e}
\usepackage{algpseudocode}
\usepackage{graphicx}
\usepackage{textcomp}
\usepackage{xcolor}
\usepackage{bm}
\usepackage{booktabs}
\usepackage{multirow}
\usepackage{enumitem}
\usepackage{microtype}
\usepackage{subfigure}
\usepackage{booktabs} 
\usepackage{amsfonts} 

\usepackage{amssymb}
\usepackage{amsmath}
\usepackage{amsthm}
\usepackage{sansmath}
\usepackage{subfigure}
\usepackage{amsthm}
\usepackage{tcolorbox}
\usepackage{colortbl}
\usepackage{tikz}
\usepackage{url}
\usepackage{pifont}
\usepackage{makecell}
\usepackage{cleveref}
\usepackage{pgfplots}
\usepackage{setspace}
\usepackage{hyperref}
\usepackage[dvipsnames]{xcolor}
\usepackage{soul}

\newcommand{\CSV}{\textsc{CSV}\xspace}
\newcommand{\CSVU}{\textsc{UniCSV}\xspace}
\newcommand{\CSVS}{\textsc{SimCSV}\xspace}

\newcommand{\basefilter}{{Reference}\xspace}

\newcommand{\lotus}{{Lotus}\xspace}
\newcommand{\BARGAIN}{{BARGAIN}\xspace}

\crefformat{section}{\S#2#1#3} 
\crefformat{subsection}{\S#2#1#3}
\long\def\comment#1{}

\AtBeginDocument{%
  }

\setcopyright{acmcopyright}
\copyrightyear{2026}
\acmYear{2026}
\acmDOI{XXXXXXX.XXXXXXX}

\acmConference[SIGMOD '26]{the 2026 International Conference on Management of Data}{May 31--June 5,
  2026}{Bengaluru, India}

\acmPrice{15.00}
\acmISBN{978-1-4503-XXXX-X/18/06}

\begin{document}

\title{Beyond Linear LLM Invocation: An Efficient and Effective Semantic Filter Paradigm}

\author{Nan Hou}
\affiliation{%
  \institution{The Chinese University of Hong Kong}
  \country{}
}
\email{nhou@se.cuhk.edu.hk}

\author{Kangfei Zhao}
\affiliation{%
  \institution{Beijing Institute of Technology}
  \country{}
}
\email{zkf1105@gmail.com}

\author{Jiadong Xie}
\affiliation{%
  \institution{The Chinese University of Hong Kong}
  \country{}
}
\email{jdxie@se.cuhk.edu.hk}

\author{Jeffrey Xu Yu}
\affiliation{%
  \institution{HKUST (Guangzhou)}
  \country{}
}
\email{jeffreyxuyu@hkust-gz.edu.cn}

\newcommand{\nthesection}{\arabic{section}}

\newcounter{remark}[section]
\renewcommand{\theremark}{\nthesection.\arabic{remark}}
\newenvironment{remark}{\begin{em}
        \refstepcounter{remark}
        {\vspace{1ex}\noindent\bf Remark \theremark:}}{
        \end{em}\vspace{1ex}} 



\newcommand{\proofsketch}{\noindent{\bf Proof Sketch: }}
\newcommand{\myproof}{\noindent{\bf Proof: }}


\newcommand{\eop}{\hspace*{\fill}\mbox{$\Box$}\vspace*{1ex}}
\newcommand{\stitle}[1]{\vspace{1ex} \noindent{\bf #1}}
\newcommand{\etitle}[1]{\vspace{1ex}\noindent{\underline{\em #1}}}
\newcommand{\kw}[1]{{\ensuremath {\mathsf{#1}}}\xspace}

\newcommand{\mat}[2]{{\begin{tabbing}\hspace{#1}\=\+\kill #2\end{tabbing}}}
\newcommand{\m}{\hspace{0.05in}}
\newcommand{\ls}{\hspace{0.1in}}
\newcommand{\beqn}{\begin{eqnarray*}}
\newcommand{\eeqn}{\end{eqnarray*}}

\newcounter{ccc}
\newcommand{\bcc}{\setcounter{ccc}{1}\theccc.}
\newcommand{\icc}{\addtocounter{ccc}{1}\theccc.}

\newcommand{\tabincell}[2]{\begin{tabular}{@{}#1@{}}#2\end{tabular}} 

\newcommand{\kwnospace}[1]{{\ensuremath {\mathsf{#1}}}}

\newcommand{\radd}[1]{\textcolor{red}{#1}}
\newcommand{\revise}[1]{\textcolor{blue}{#1}}
\newcommand{\cmark}{\ding{51}}%
\newcommand{\xmark}{\ding{55}}%

\newcommand{\bigo}{\ensuremath{\mathcal{O}}}%

\newcommand{\AGM}{\kw{AGM}}
\newcommand{\GHD}{\kw{GHD}}
\newcommand{\ALSS}{\kw{ALSS}}
\newcommand{\LSS}{\kw{LSS}}
\newcommand{\AL}{\kw{AL}}
\newcommand{\MLP}{\kw{MLP}}
\newcommand{\IN}{\kw{IN}}
\newcommand{\LSSFRE}{{$\texttt{LSS-fre}$}\xspace}
\newcommand{\LSSEMB}{{$\texttt{LSS-emb}$}\xspace}
\newcommand{\LSSCON}{{$\texttt{LSS-con}$}\xspace}
\newcommand{\EmptyHeaded}{{\sl EmptyHeaded}\xspace}
\newcommand{\Graphflow}{{\sl Graphflow}\xspace}
\newcommand{\GraphQL}{{\sl GraphQL}\xspace}
\newcommand{\wordtovec}{{\sl word2vec}\xspace}
\newcommand{\ProtBert}{{\sl ProtBert}\xspace}
\newcommand{\DeepWalk}{{\sl DeepWalk}\xspace}
\newcommand{\prone}{{\sl ProNE}\xspace}
\newcommand{\GCN}{{\sl GCN}\xspace}
\newcommand{\GNN}{{\sl GNN}\xspace}
\newcommand{\GIN}{{\sl GIN}\xspace}
\newcommand{\GAT}{{\sl GAT}\xspace}
\newcommand{\SAGE}{{\sl GraphSAGE}\xspace}
\newcommand{\GCARE}{{\sl G-CARE}\xspace}
\newcommand{\WJ}{{$\texttt{WJ}$}\xspace}
\newcommand{\CS}{{$\texttt{CS}$}\xspace}
\newcommand{\CSET}{{$\texttt{CSET}$}\xspace}
\newcommand{\IMPR}{{$\texttt{IMPR}$}\xspace}
\newcommand{\JSUB}{{$\texttt{JSUB}$}\xspace}
\newcommand{\BS}{{$\texttt{BS}$}\xspace}
\newcommand{\SumRDF}{{$\texttt{SumRDF}$}\xspace}
\newcommand{\NRP}{{\sl NRP}\xspace}
\newcommand{\DeepDB}{{\sl DeepDB}\xspace}
\newcommand{\Naru}{{\sl Naru}\xspace}
\newcommand{\DBEst}{{\sl DBEst}\xspace}
\newcommand{\CONF}{{$\texttt{CON}$}\xspace}
\newcommand{\MAR}{{$\texttt{MAR}$}\xspace}
\newcommand{\ENT}{{$\texttt{ENT}$}\xspace}
\newcommand{\CTC}{{$\texttt{CTC}$}\xspace}
\newcommand{\RAND}{{$\texttt{RAN}$}\xspace}
\newcommand{\ENS}{{$\texttt{ENS}$}\xspace}
\newcommand{\GQL}{{$\texttt{GQL}$}\xspace}
\newcommand{\ORI}{{$\texttt{ORI}$}\xspace}
\newcommand{\GFlow}{{$\texttt{GFlow}$}\xspace}

\newcommand{\imdb}{\kw{IMDB}}
\newcommand{\forest}{\kw{forest}}
\newcommand{\dblp}{\kwnospace{DBLP}}
\newcommand{\gene}{\kwnospace{GENE}}
\newcommand{\aids}{\kw{aids}}
\newcommand{\hprd}{\kw{hprd}}
\newcommand{\yeast}{\kw{yeast}}
\newcommand{\wordnet}{\kw{wordnet}}
\newcommand{\youtube}{\kw{youtube}}
\newcommand{\eu}{\kw{eu2005}}
\newcommand{\yago}{\kw{yago}}
\newcommand{\uncertain}{\varphi}
\newcommand{\uncertainLC}{\varphi_{\kw{CON}}}
\newcommand{\uncertainM}{\varphi_{\kw{MAR}}}
\newcommand{\uncertainH}{\varphi_{\kw{ENT}}}
\newcommand{\uncertainCT}{\varphi_{\kw{CTC}}}
\newcommand{\Fagg}{\phi_{{a}}}
\newcommand{\Fcom}{\phi_{{c}}}

\newcommand{\ordemb}{\kw{OrderEmbedding}}
\newcommand{\erm}{\kw{ERM}}

\newcommand{\loss}{\mathcal{L}}
\newcommand{\mse}{\mathcal{L}_{mse}}
\newcommand{\batch}{\mathcal{B}}
\newcommand{\relu}{\mathsf{ReLU}}
\newcommand{\softmax}{\mathsf{softmax}}
\newcommand{\COUNT}{\kw{cnt}}
\newcommand{\AVG}{\kw{avg}}
\newcommand{\SUM}{\kw{sum}}
\newcommand{\MIN}{\kw{min}}
\newcommand{\MAX}{\kw{max}}
\newcommand{\dsb}{\kw{dsb}}
\newcommand{\job}{\kwnospace{JOB}-\kw{light}}

\newcommand{\tbd}{{\color{green}TBD}}

\newcommand{\Model}{\mathcal{M}}
\newcommand{\Dataset}{\mathcal{Q}}
\newcommand{\Prob}{\mathcal{P}}

\newcommand{\Real}{\mathbb{R}}

\newcommand{\Qerror}{\kwnospace{q}\mbox{-}\kw{error}}

\newcommand{\PyTorch}{{\sl PyTorch}\xspace}
\newcommand{\Accelerate}{{\sl Accelerate}\xspace}

\newcommand{\norm}[1]{\left\lVert#1\right\rVert}

\newcommand{\ccross}{\ding{56}}

\newcounter{example}
\renewcommand{\theexample}{\arabic{example}}

\newcommand{\score}{\ensuremath{\text{score}}\xspace}
\newcommand{\SelectFromBatch}{\textit{SelectFromBatch}\xspace}

\renewcommand{\shortauthors}{Trovato et al.}

\begin{abstract}
Large language models (LLMs) are increasingly used for semantic query processing over large corpora. A set of semantic operators derived from relational algebra has been proposed to provide a unified interface for expressing such queries, among which the semantic filter operator serves as a cornerstone. Given a table $T$ with a natural language predicate $e$, for each tuple in the relation, the execution of a semantic filter proceeds by constructing an input prompt that combines the predicate $e$ with its content, querying the LLM, and obtaining the binary decision. However, this tuple-by-tuple evaluation necessitates a complete linear scan of the table, incurring prohibitive latency and token costs. Although recent work has attempted to optimize semantic filtering, it still does not break the linear LLM invocation barriers. To address this, we propose Clustering-Sampling-Voting (CSV), a new framework that reduces LLM invocations to sublinear complexity while providing error guarantees. CSV embeds tuples into semantic clusters, samples a small subset for LLM evaluation, and infers cluster-level labels via two proposed voting strategies: UniVote, which aggregates labels uniformly, and SimVote, which weights votes by semantic similarity. Moreover, CSV triggers re-clustering on ambiguous clusters to ensure robustness across diverse datasets. The results conducted on real-world datasets demonstrate that CSV reduces the number of LLM calls by 1.28-355$\times$ compared to the state-of-the-art approaches, while maintaining comparable effectiveness in terms of Accuracy and F1 score.
\end{abstract}

\maketitle

\pagestyle{\vldbpagestyle}
\begingroup\small\noindent\raggedright\textbf{PVLDB Reference Format:}\\
\vldbauthors. \vldbtitle. PVLDB, \vldbvolume(\vldbissue): \vldbpages, \vldbyear.\\
\href{https://doi.org/\vldbdoi}{doi:\vldbdoi}
\endgroup
\begingroup
\renewcommand\thefootnote{}\footnote{\noindent
This work is licensed under the Creative Commons BY-NC-ND 4.0 International License. Visit \url{https://creativecommons.org/licenses/by-nc-nd/4.0/} to view a copy of this license. For any use beyond those covered by this license, obtain permission by emailing \href{mailto:info@vldb.org}{info@vldb.org}. Copyright is held by the owner/author(s). Publication rights licensed to the VLDB Endowment. \\
\raggedright Proceedings of the VLDB Endowment, Vol. \vldbvolume, No. \vldbissue\ %
ISSN 2150-8097. \\
\href{https://doi.org/\vldbdoi}{doi:\vldbdoi} \\
}\addtocounter{footnote}{-1}\endgroup

\ifdefempty{\vldbavailabilityurl}{}{
\vspace{.3cm}
\begingroup\small\noindent\raggedright\textbf{PVLDB Artifact Availability:}\\
The source code, data, and/or other artifacts have been made available at \url{https://github.com/Anto-an/CSV_SemanticFilter}.
\endgroup
}

\section{Introduction}

\begin{figure}
    \begin{tabular}[h]{c}
    \subfigure[\lotus on IMDB-Review]{
        
         \includegraphics[width=0.45\columnwidth]{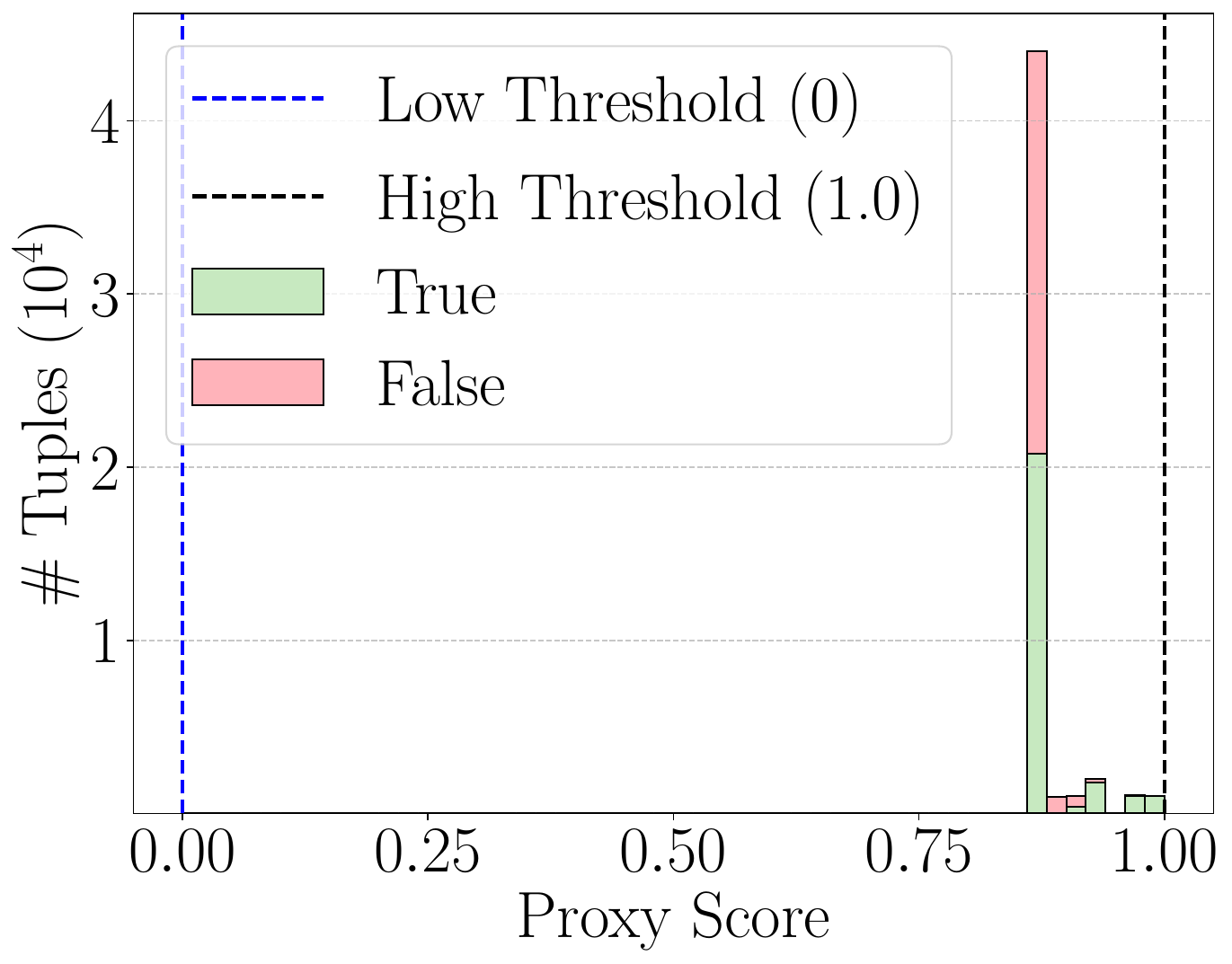}
         \label{fig: lotus_review}
    }
    
    \subfigure[\lotus on Airdialogue]{
        \includegraphics[width=0.45\columnwidth]{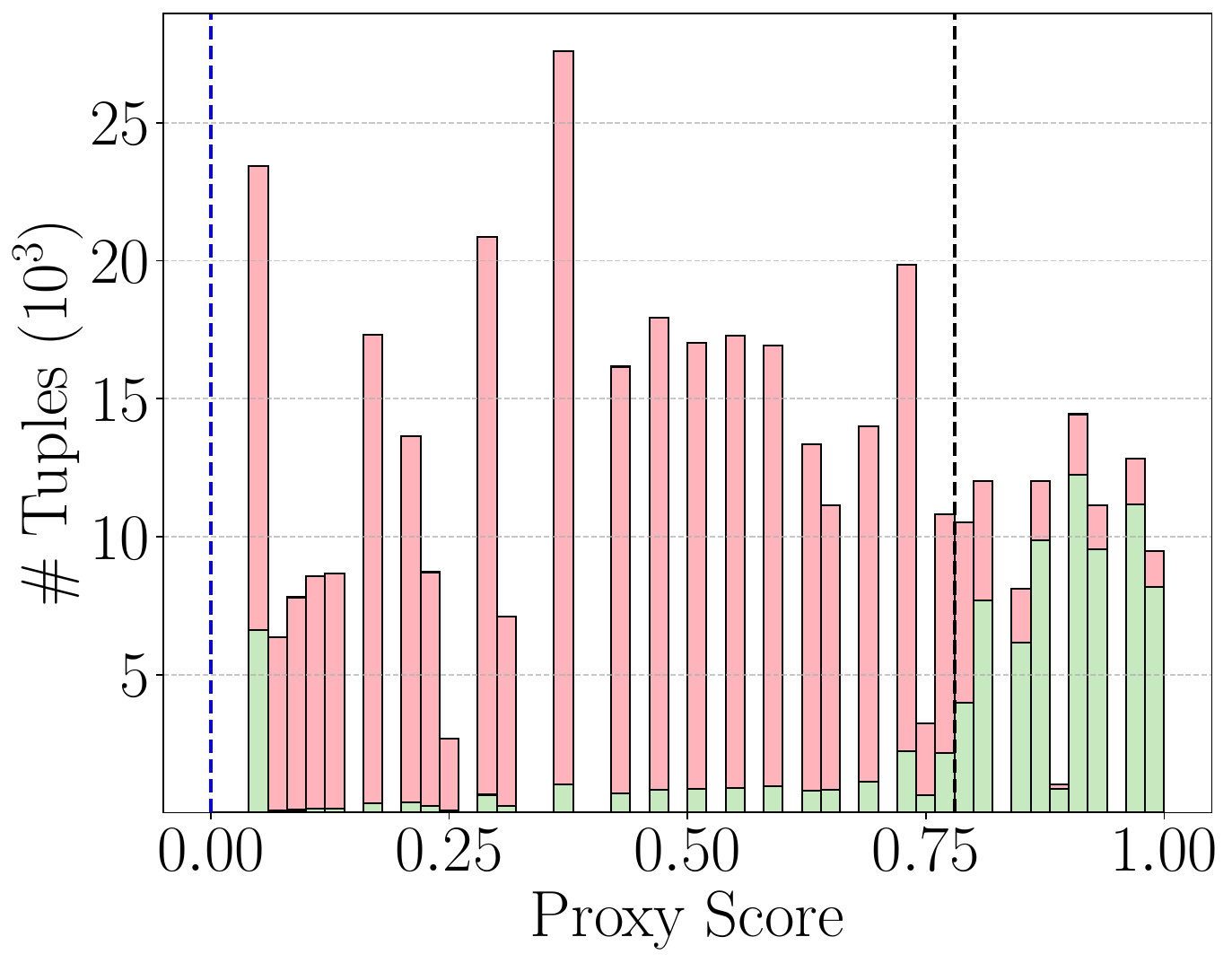}
        \label{fig: lotus_airdialogue}
        
    }
    \\
    \subfigure[\CSV on IMDB-Review]{
        \includegraphics[width=0.45\columnwidth]{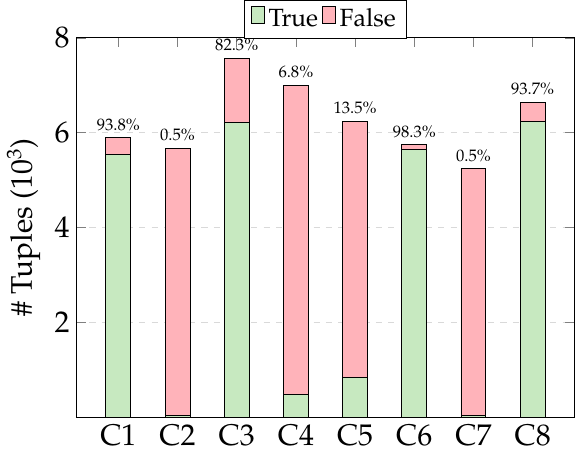}
        \label{fig:csv_review}
    }
    \subfigure[\CSV on Airdialogue]{
        \includegraphics[width=0.45\columnwidth]{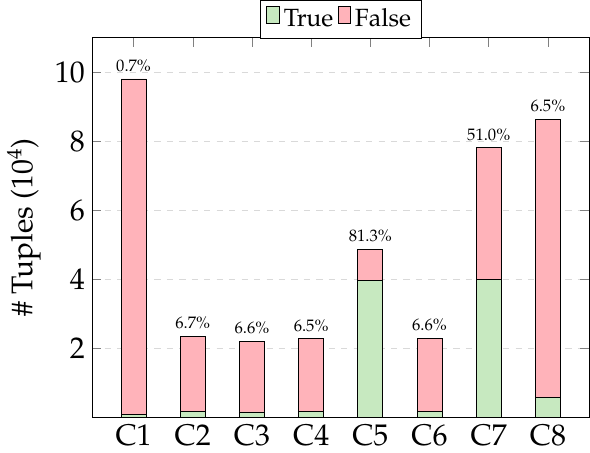}
        \label{fig:csv_airdialogue}
    }
    \end{tabular}
    \vspace{-3mm}
    \caption{Data Label Distributions of Lotus and Ours (CSV)}
    \label{fig:intro}
\end{figure}

Large language models (LLMs) have rapidly become essential tools in modern data processing, demonstrating an unprecedented capacity to understand and reason over natural language queries using their embedded world knowledge.
A wide range of complex analytical applications have been empowered by LLMs, including multi-document summarization~\cite{zhang2020pegasus}, causal relation extraction~\cite{chi2024unveiling}, and legal contract analysis~\cite{chalkidis2021lexglue}.
As semi-structured and unstructured data continue to grow, there is an increasing need for analytical systems to support bulk semantic query processing, i.e., evaluating a large volume of data through LLM-based semantic transformations. Unlike classical relational operators that follow symbolic reasoning, semantic query processing is oriented towards pragmatics and relies on context, shared knowledge, and consensus for reasoning.

To bridge the gap between semantic query processing and scalable data systems, many recent systems~\cite{DBLP:conf/icde/WangF25, DBLP:journals/corr/abs-2410-12189, DBLP:journals/corr/abs-2409-00847, DBLP:journals/corr/abs-2405-04674, DBLP:journals/corr/abs-2504-14764, dai2024uqe, liu2025palimpzest, patel2024lotus, russo2025abacus} have been proposed that enable users to specify semantic queries via DataFrame APIs or extended SQL with user-defined functions (UDFs). 
Despite differences in query expression and underlying system design, a coreset of semantic transformation operators can be `distilled' from the queries they support.
Analogous to the relational operators, \lotus~\cite{patel2024lotus} formulates a set of semantic operators, including \textit{semantic filter, semantic map, semantic join}, etc., forming a declarative and composable query interface that enables operator-level algorithmic optimization. 
Among these operators, the semantic filter, serving as the counterpart of selection in relational algebra, is the cornerstone of queries in almost all systems.
Given a table $T$ with one or more textual attributes and a natural language (NL) predicate $e$, the semantic filter returns all tuples in $T$ whose textual attributes satisfy the predicate $e$ by LLM judgment. 
Users can use this operator to express a binary classification intrinsically. For example, in the IMDB-Review dataset~\cite{IMDB-review}, the semantic filter with the NL predicate \textit{`the review is positive'} can be used for sentiment analysis. 
Distinguished from relational operator evaluation, semantic operators combine the context of the NL predicate with the input tuple, together with the LLM's underlying knowledge learned from large corpora. Consequently, the outputs are inherently uncertain with respect to prompt structure, LLM backbone, and model configurations.

The conventional approach adopted by all existing systems for semantic filtering is to scan the table $T$ linearly and issue an LLM invocation for each tuple. When the table is large, this approach incurs high query latency and substantial token consumption. 
To improve the query efficiency of the semantic filter, \lotus~\cite{patel2024lotus} introduces a two-stage model cascading algorithm, taking into account that LLM inference constitutes the bottleneck of query processing. 
The algorithm employs a lightweight proxy model, typically a small-sized LLM, to prefilter tuples and collect a fraction of tuples to learn thresholds for forwarding uncertain tuples to a more powerful LLM for final verification.
However, in practice, we find that the cascading algorithm encounters dilemmas regarding both query accuracy and efficiency on real-world datasets.
As illustrated in Fig.s~\ref{fig: lotus_review} and \ref{fig: lotus_airdialogue}, all tuples with proxy scores between the low and high thresholds must be forwarded to the powerful LLM. 
In \lotus, even in the best case, all data must be processed linearly by the proxy model. In the worst case, a poorly calibrated threshold causes nearly all data to be passed to the powerful LLM, negating the intended efficiency gains and exacerbating the overall inference cost.
This issue is critical; however, optimization for the fundamental semantic filter operator has unfortunately been neglected.

To this end, we are motivated to reconsider optimizations for processing semantic filters at an algorithmic level, where a question arises: \textbf{can we reduce the complexity of LLM invocations while providing error guarantees for semantic filters?}
For relational filtering, database systems construct B+-tree or Bitmap indices on the query attributes, replacing full table scans with local index lookups.
For semantic filtering, although existing semantic analysis systems~\cite{liu2023} allow users to build semantic indices using various pre-trained embedding models, it remains unclear whether the vectorized semantic indices can be leveraged for algorithm speedup and how to use them to reduce LLM invocations in semantic filtering.
Furthermore, theoretical error analysis, which is highly intertwined with algorithm design, is more challenging but critical.
Users typically treat LLMs as black boxes due to their sophisticated learning mechanisms, making query accuracy guarantees a necessary concern for system and algorithm designers. 
Despite the opacity of model belief from the deep learning perspective, a rigorous error guarantee from the filtering algorithm perspective helps interpret the query results and guide the configurations of the algorithm and system.

Our approach for optimizing semantic filtering is first inspired by recent progress in LLM inference for batch processing~\cite{liu2025semantic, mohandoss2024context, regmi2024gpt}. Given the high inference cost of LLMs, recent research has explored input caching strategies to accelerate online queries, where historical input queries together with their inference results are persisted in a cache. 
When a new query arrives, its semantically similar queries are retrieved from the cache, and their cached results are used directly as the proxy output. 
The intuition behind input caching for inference acceleration is that the more similar the input query, the more similar its output from LLMs. 
For a semantic filter query, the input of its LLM invocation is composed of three parts: a system prompt (i.e., a uniform instruction for semantic filtering), the predicate expression, and the value of a tuple, where the system prompt and predicate expression are identical for all the tuples for a given filter predicate.
It is the tuple value that ultimately determines the results of LLM invocations, and the semantic similarity can be captured by clustering approaches.

In this paper, we propose a new paradigm dubbed \emph{Clustering-Sampling-Voting (CSV)} for LLM-powered semantic filter processing, aiming to reduce the number of LLM invocations while preserving accuracy. The CSV framework is based on the observation that semantically similar inputs tend to elicit consistent outputs from LLMs. 
By exploiting this property, CSV amortizes inference costs across similar tuples while maintaining high labeling fidelity.
The framework consists of three key phases:
\ding{172} Clustering: All tuples in a relation are embedded using a pretrained encoder and grouped into semantically similar clusters using algorithms such as K-means.
This step is performed offline and can be reused across queries.
\ding{173} Sampling:
Within each cluster, a small sample of tuples is drawn and evaluated with the LLM as representatives.
\ding{174} Voting:
For the remaining tuples in each cluster, labels are inferred based on the LLM results from the sampled set, using either majority voting (Uniform Voting) or similarity-weighted voting (Similarity-based Voting).
To ensure correctness, CSV provides theoretical guarantees that bound the discrepancy between the voting result and the expected LLM output. This is achieved by constraining the label frequency distribution within each cluster: 
if a cluster lacks semantic purity (i.e., labels are not well-separated), the framework recursively re-clusters the set of data into subsets until, in the worst case, it falls back to direct LLM evaluation for ambiguous subsets.
Furthermore, our theoretical analysis explicitly bridges the expected error with the sample ratio, offering an intuitive way to balance theoretical assurance and LLM invocations.
As illustrated in Fig.s~\ref{fig:csv_review} and \ref{fig:csv_airdialogue}, the label distributions across eight clusters in two datasets demonstrate that most clusters are dominated by a single label, validating the effectiveness of the clustering step. In rare cases, such as Cluster 7 in Fig.~\ref{fig:csv_airdialogue}, where label purity is low, the framework triggers online re-clustering to preserve accuracy. This adaptive mechanism makes CSV both flexible and robust across diverse datasets and predicates.

The contributions of this paper are summarized as follows.

\begin{enumerate}[leftmargin=*]
    \item \textbf{Algorithm Development.} We devise a new algorithm for LLM-powered semantic filtering that reduces the scale of LLM invocations to sublinear in the average case. 
    \item \textbf{Theoretical Analysis.} We provide a detailed theoretical analysis of the proposed algorithm, which explicitly connects the error bound to the sample ratio.
    \item \textbf{Experimental Validation.} We conduct substantial experiments on multiple datasets and benchmarks to demonstrate the effectiveness and efficiency of our approaches. Our results show significant improvements in execution time, LLM call reduction, and token consumption compared to existing methods.
\end{enumerate}

\section{Preliminaries \& Existing Approaches}
\label{sec:preliminary}

A semantic operator is a declarative transformation on relational data, which is parametrized by a natural language expression as a predicate. In this paper, we consider the semantic filter powered by Large Language Models (LLMs). 

\subsection{Problem Statement}

We use $T(A_1, \cdots, A_j)$, or $T$ for short, to denote a table, where $A_1, \cdots, A_j$ are the attributes of $T$. $t \in T$ denotes a tuple in the table $T$ and $t(A_j)$ denotes the value of the attribute $A_j$ in the tuple $t$. $M: \mathcal{T} \mapsto \mathcal{T}$ is a pre-trained LLM that accepts input and generates output in textual space $\mathcal{T}$.

\stitle{Semantic Filter.}
Given a natural language expression $e$,  a semantic filter $\sigma_M$ over one or multiple textual attributes of $T$ by an LLM $M$, analogous to the relational selection operator, returns all the tuples in $T$ that satisfy the semantic predicate defined by $e$. In the operator of Eq.~\eqref{eq:sem_filter}, $M(t, e)$ denotes one LLM invocation, where the involved attributes in the tuple $t$, together with the expression $e$ and an instruction, comprise the input of LLM, prompting the LLM to determine whether or not $t$ can pass the semantic predicate $e$. 
\begin{align}
    \sigma_{M(e)}(T) = \{t \in T \mid M(t, e) = \texttt{True}\}.
    \label{eq:sem_filter}
\end{align}

\subsection{Existing Approaches and Issues}

There are three existing approaches to semantic filtering in the algorithm aspect~\cite{patel2024lotus, liu2025palimpzest, DBLP:journals/pacmmod/ZeighamiSP25}, with the exclusion of discussing other approaches due to their focus on system-level optimization over algorithmic enhancements~\cite{liu2025palimpzest,russo2025abacus}.

\stitle{Reference Algorithm~\cite{patel2024lotus, liu2025palimpzest, DBLP:journals/pacmmod/ZeighamiSP25}:}
A straightforward algorithm to process a semantic filter is to invoke one LLM call $M(t, e)$ for each $t \in T$ sequentially or in parallel via table scan. The algorithm entails $\bigo(|T|)$ LLM invocations in total, which is a huge cost.

\stitle{\lotus~\cite{patel2024lotus}:}
To reduce the cost of per-tuple LLM evaluation, \lotus proposes a two-stage algorithm that uses a cheaper proxy LLM to avoid invoking an expensive oracle LLM on every tuple in $T$.
For a semantic filter $\sigma_{M(e)}$, \lotus first samples a fixed number of tuples from $T$ and learns two thresholds, $\tau_{+}$ and $\tau_{-}$, as decision boundaries of $\texttt{True}$ and $\texttt{False}$, respectively. 
For each tuple $t \in T$, the proxy LLM is invoked to compute a proxy score, typically derived from the log-likelihood of LLM's output, which is available from open-source LLMs or certain APIs. If the proxy score is below $\tau_{-}$, \lotus labels the result as $\texttt{False}$; if it exceeds $\tau_{+}$, the result is marked as $\texttt{True}$; 
Tuples whose proxy scores fall in the ambiguous interval $[\tau_{-}, \tau_{+}]$ are forwarded to the more powerful oracle LLM for final verification.

\noindent\underline{Issues:}
Although \lotus provides an {accuracy} guarantee~\cite{DBLP:journals/pvldb/KangGBHZ20}, its practical performance can degrade for several reasons.  First, proxy cores are not always consistently well-aligned with the true labels. In favorable cases (e.g., Fig.~\ref{fig: lotus_airdialogue} on Airdialogue), tuples spread across the score range and $\texttt{True}$ tuples tend to receive higher scores than $\texttt{False}$ tuples, making it easy to learn suitable thresholds that confidently label non-trivial fraction of tuples.
However, in common cases (e.g., Fig.~\ref{fig: lotus_review} on IMDB-Review), proxy scores concentrate in a narrow interval (i.e., $[0.8, 0.85]$), and $\texttt{True}$ tuples do not reliably score higher than $\texttt{False}$ tuples. This overlap makes threshold learning unstable from a small sample, and can yield degenerate thresholds, essentially reverting to \basefilter algorithm.
Second, the sampling procedure used to fit $\tau_{+}$ and $\tau_{-}$ can be mismatched to typical semantic filter workloads. The sampling strategy of \lotus is designed for scenarios requiring the extraction of low-selectivity tuples~\cite{DBLP:conf/cidr/KangRBKZ22}. For semantic filters where labels are closer to balanced (as in Fig.s~\ref{fig: lotus_review} and \ref{fig: lotus_airdialogue}), this strategy can produce a biased sample, which in turn drives the thresholds towards extreme values and reduces the proxy’s ability to confidently rule out negatives.
Third, the two-stage cascade algorithm does not break the bottleneck of linear LLM invocation and can even increase total cost. \lotus must invoke the proxy model for each tuple, which is still an LLM with considerable inference cost. When thresholds are poorly calibrated, a large fraction of tuples is forwarded to the oracle LLM, leading to an additional near-linear pass. In such cases, the two-stage cascade can exceed the cost of directly applying the Reference algorithm. 

\stitle{\BARGAIN~\cite{DBLP:journals/pacmmod/ZeighamiSP25}:}
Similarly, \BARGAIN begins by invoking a proxy LLM on each tuple $t \in T$ to obtain a proxy prediction and an associated proxy score. It then partitions tuples into score regions defined by pre-specified intervals (e.g., 20 score subranges). \BARGAIN starts sampling from the region with the highest proxy scores and progressively expands to lower-score regions. Within each region, it samples tuples and invokes the oracle LLM to test whether the tuples in that region can meet a user-specified quality target. The region-wise procedure terminates once \BARGAIN identifies a minimum proxy-score threshold such that including any lower-score region would violate the target quality. Finally, tuples whose proxy scores fall below this threshold are forwarded to the oracle LLM for final verification.

\noindent\underline{Issues:}
Since \BARGAIN requires scanning the dataset with a substantial proxy LLM, akin to \lotus, it needs a considerable overhead on this linear processing. Importantly, recent studies~\cite{DBLP:conf/icml/GalG16, DBLP:conf/naacl/GengCWKNG24} indicate that neural networks frequently yield poorly calibrated confidence scores, inaccurately representing true correctness probabilities due to their inherent overconfidence of neural networks. This issue could potentially impact \BARGAIN's performance, as the proxy score forms the foundation of its approach.

\section{Our Approach for Semantic Filter}
\label{sec:sem_filter}
In this section, we present a new semantic filter algorithm based on a clustering-sampling-voting (CSV) paradigm, designed to address the issues discussed in \cref{sec:preliminary}. 
At a high-level, our approach replaces per-tuple proxy LLM calling with embedding-based grouping. We compute an embedding for each tuple using an embedding model; these embeddings can be generated offline (e.g., at table creation time or upon tuple insertion) and are substantially cheaper than invoking an LLM to obtain proxy scores.
We cluster tuple embeddings and infer a label for each cluster based on whether $\texttt{True}$ or $\texttt{False}$ is dominated within the cluster. As illustrated in Fig.s~\ref{fig:intro}(c) and (d), this enables labeling a large number of tuples without an LLM invocation. 

Our CSV paradigm builds on the intuition that embedding-similar tuples tend to elicit similar responses from LLMs; thereby, embedding-based grouping can serve as an initial proxy.  This intuition is commonly exploited in inference caching: when two prompts are close in semantic space, their outputs are likely to be similar in meaning, enabling the reuse of prior inference results~\cite{liu2025semantic, mohandoss2024context, regmi2024gpt}.  In semantic filtering, the system prompt and predicate expression are fixed, and only the tuple content varies across invocations. Consequently, variability in the LLM's output is fully driven by the tuple, and semantic proximity between tuples can be captured by distances in an embedding space.

We empirically validate this assumption by quantifying the relationship between embedding distance and label agreement across multiple predicates. 
For each tuple, we compute its embedding using E5-Large~\cite{wang2022e5} and obtain from the LLM the probability of the $\texttt{True}$ label.
For each tuple pair $(t_i, t_j)$, we compute the Euclidean distance between their embeddings and the probability that the two tuples share the same label of $\texttt{True}$ or $\texttt{False}$. 
By splitting the range of all-pair distances into 50 bins, Fig.~\ref{fig:curve} illustrates the trends of the mean probability of label agreement as the distances increase across 6 queries on two datasets.
Across predicates, the results show a consistent pattern: as embedding distance increases, the probability of label agreement decreases. The magnitude of the effect varies by predicate—e.g., RV-Q1, RV-Q2, and CB-Q2 exhibit a sharper decline, whereas RV-Q3 and CB-Q1 decrease more gradually—but the overall trend remains stable. These findings provide quantitative support for using embedding-based clustering as a mechanism to amortize LLM evaluations in semantic filtering. In the following, we present the details of CSV in \cref{sec:cluster-vote} and provide its theoretical analysis in \cref{sec:theo}. 

\begin{figure}[t]
\begin{tabular}{c}
     \subfigure[RV-Q1]{
     \includegraphics[width=0.32\columnwidth]{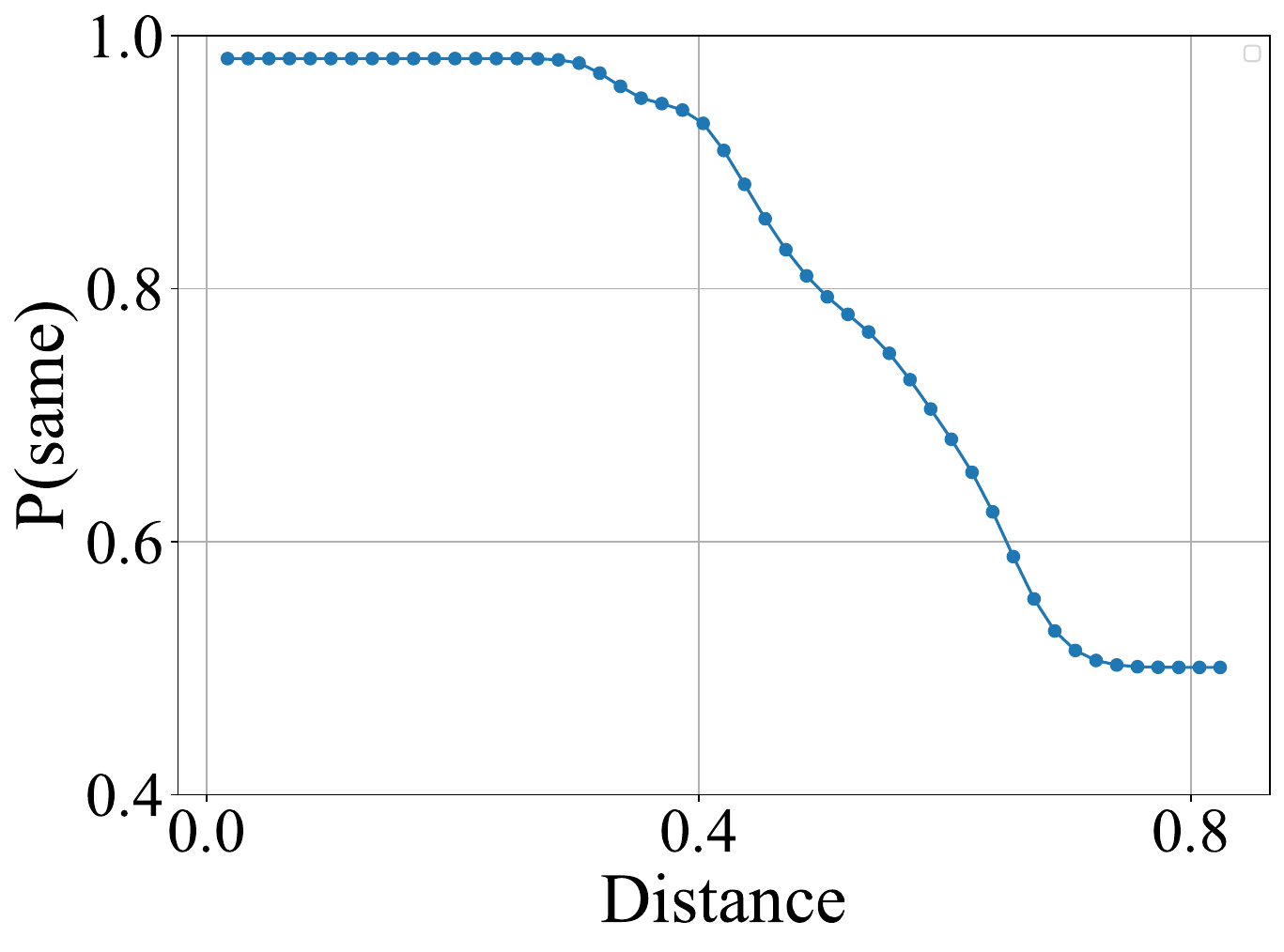}
     \label{fig:curve_RV_Q1}
     }
     \hspace{-0.3cm}
     \subfigure[RV-Q2]{
        \includegraphics[width=0.32\columnwidth]{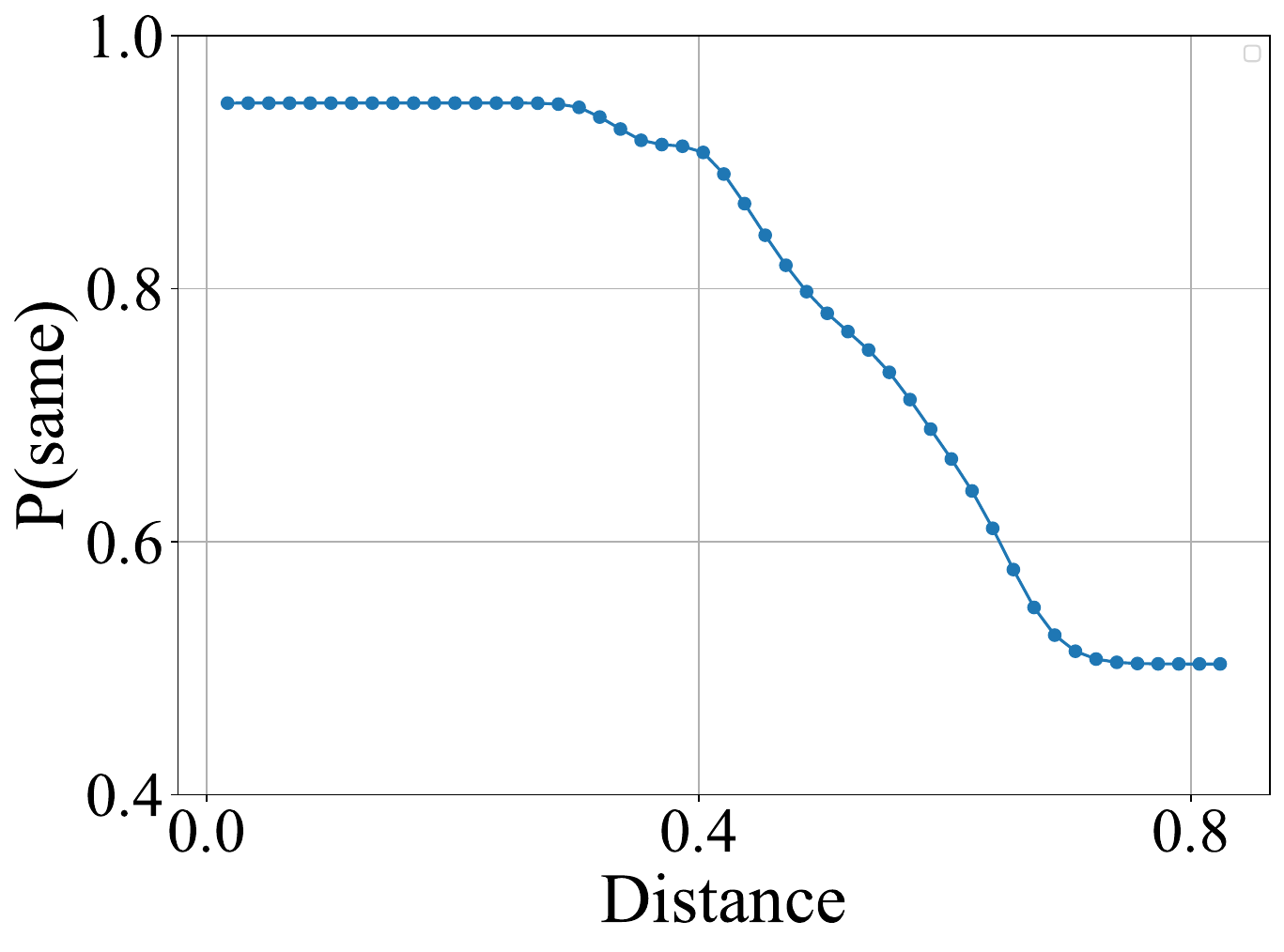}
        \label{fig:curve_RV_Q2}
     }
     \hspace{-0.3cm}
     \subfigure[RV-Q3]{
        \includegraphics[width=0.32\columnwidth]{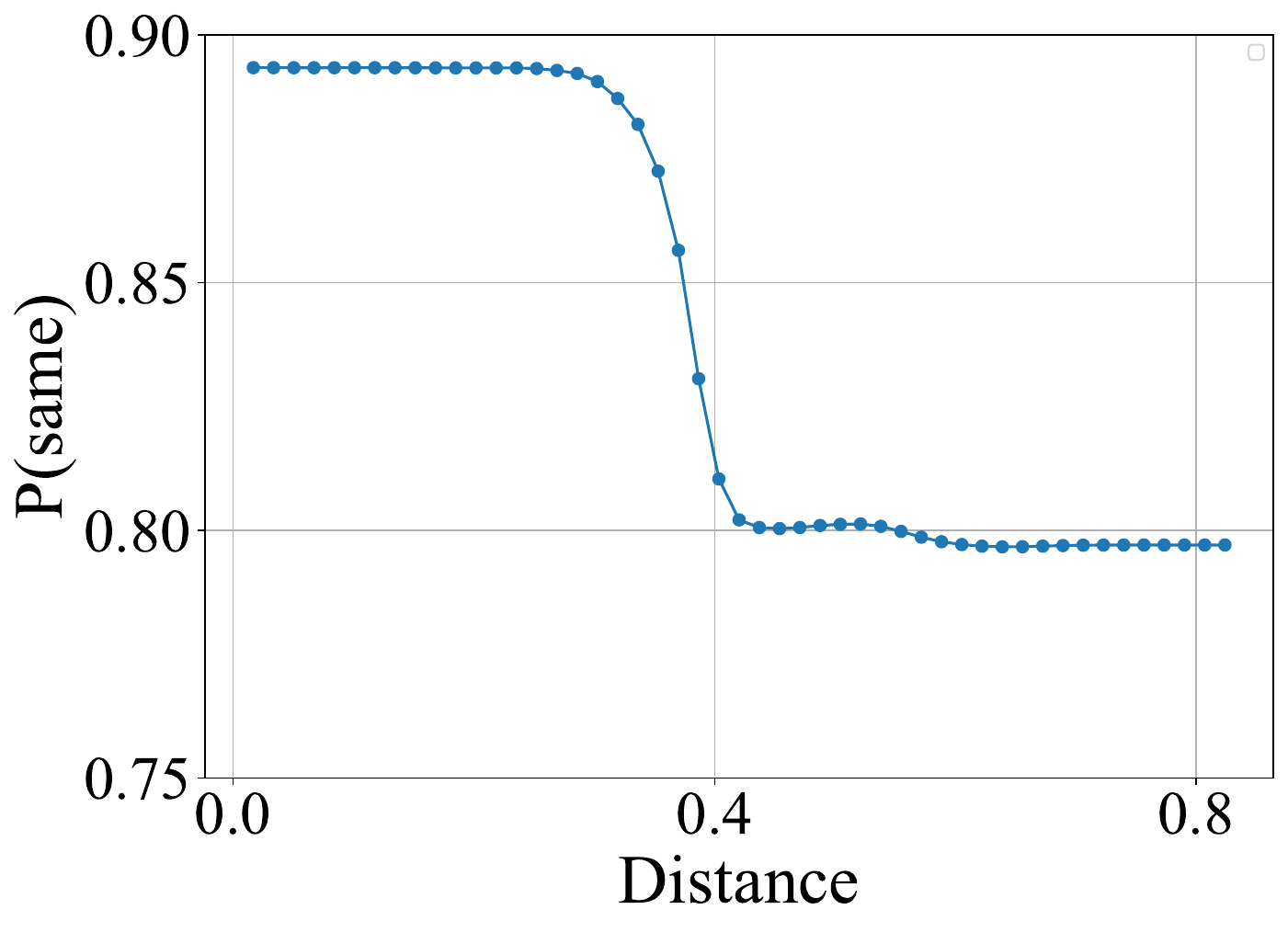}
        \label{fig:curve_RV_Q3}
     }
     
\end{tabular}
\vspace{-0.5mm}
\begin{tabular}{c}
     \subfigure[CB-Q1]{
        \includegraphics[width=0.32\columnwidth]{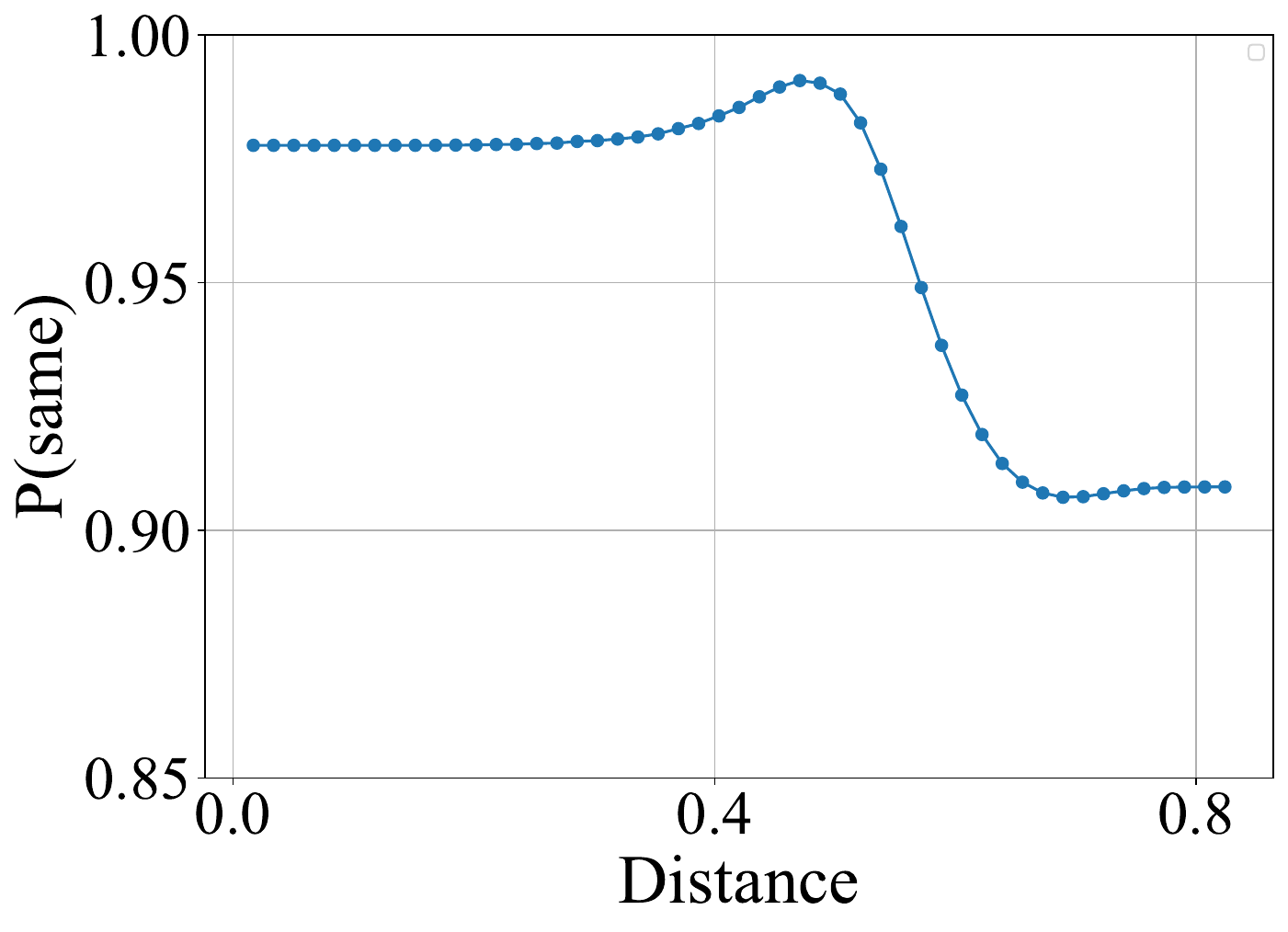}
        \label{fig:curve_CB_Q1}
     }
     \hspace{-0.3cm}
     \subfigure[CB-Q2]{
        \includegraphics[width=0.32\columnwidth]{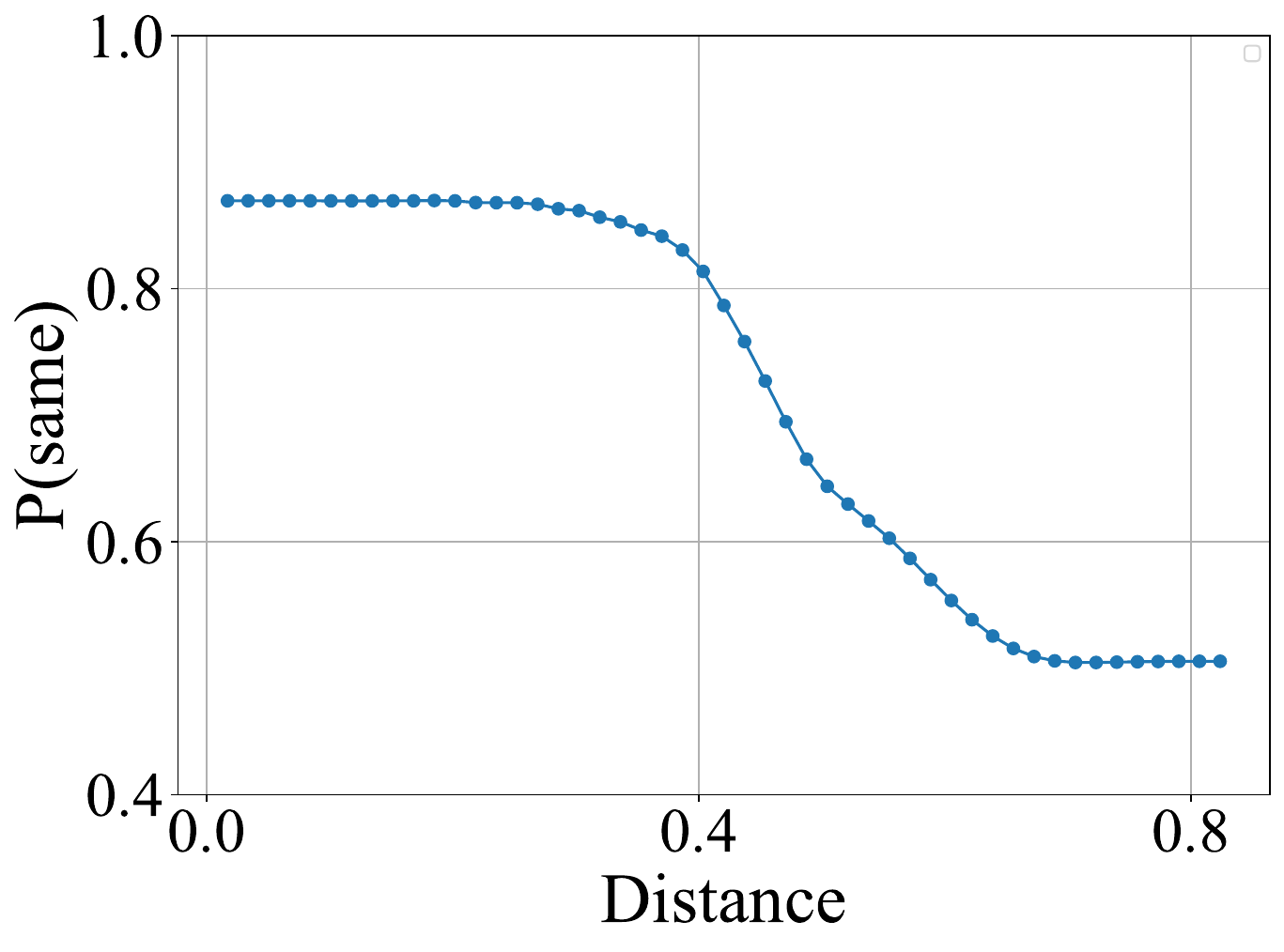}
        \label{fig:curve_CB_Q2}
     }
     \hspace{-0.3cm}
     \subfigure[CB-Q3]{
        \includegraphics[width=0.32\columnwidth]{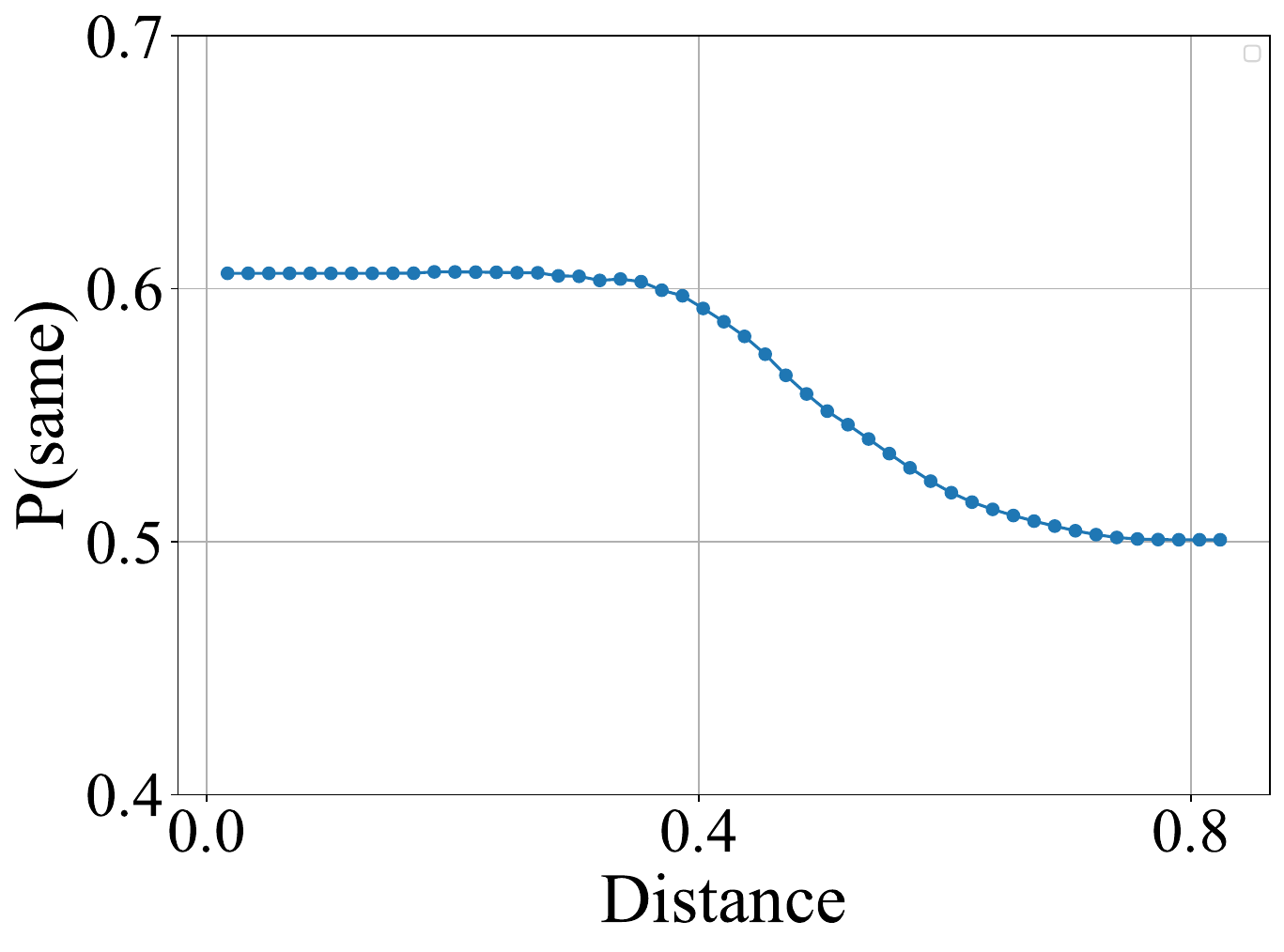}
        \label{fig:curve_CB_Q3}
     }

\end{tabular}
\vspace{-0.5mm}
     \caption{Probability Distribution Across Distance}
     \label{fig:curve}
\end{figure}

\subsection{A Clustering-Sampling-Voting Paradigm}
\label{sec:cluster-vote}

As this name indicates, the Clustering-Sampling-Voting (CSV) paradigm processes semantic filter queries in two phases: an offline clustering phase and an online sampling with voting phase. 
In the offline phase, we employ a clustering algorithm such as K-means to partition the tuples in the table $T$ into clusters, which is query-agnostic. 
In the online phase, we sample a subset of tuples from each cluster, invoke LLM to evaluate these sampled tuples regarding the semantic predicates, and ensemble their results via voting to determine the remaining tuples in the cluster. As the voting mechanism will directly bypass LLM invocations for unsampled tuples, to preserve high accuracy, we use two thresholds $ub$ and $lb$ to limit this direct filtering for tuples with low confidence. Also, in the following theoretical analysis part, we prove that the error is guaranteed by thresholds $ub$ and $lb$.
The algorithm initiates a fallback mechanism for these low-confident tuples, resorting to online fine-grained re-clustering, re-sampling, and re-voting to refine decisions. If uncertainty persists and the recursive depth reaches its maximum threshold, the framework ultimately falls back to direct LLM invocation for the remaining ambiguous tuples.

\begin{algorithm}[t]
    \caption{\textsc{SemanticFilter}($T$, $e$, $M$, $k$, $\xi$)}
    \label{alg: semantic_filter}
    \SetKwData{Up}{up}  \SetKwInOut{Input}{Input} \SetKwInOut{Output}{Output}
	\Input{input table $T$, expression $e$, LLM $M$, \#clusters $k$, and sample ratio $\xi$}
	\Output{output table $R$}
    Initialize $R \gets \emptyset$ \; 
    Generate the embedding $\mathbf{e}_i$ for each tuple $t_i \in T$ \; 
    Partition $\{(\mathbf{e}_1, t_1), \cdots, (\mathbf{e}_n, t_n)\}$ into $k$ clusters $\mathcal{C} = \{ C_1, \cdots, C_k\}$ by $k$-means over $\{ \mathbf{e}_1, \cdots, \mathbf{e}_n\}$\; 
    Initialize $Q \gets {\mathcal{C}}$ \;
    \While{$Q$ is not empty}{
        Initialize $\mathcal{U} \gets \emptyset$ \;
        \For{$C_j \in Q$} {
            $Q.\text{remove}(C_j)$ \;
            Randomly sample subset $C'_j \subset C_j$ with ratio $\xi$ \;
            Allocate $O \gets \emptyset$ \;
            \For{$(\mathbf{e}_i, t_i) \in C'_j$} {
                $o_i \gets M(t_i, e)$ \;
                $O \gets O \cup \{o_i, t_i, \mathbf{e}_i)\}$ \;
                \If{$o_i$ is \texttt{True}}{
                    $R \gets R \cup \{t_i\}$ \;
                }
            }
            $C^*$, $R^+$ $\gets$ \textsc{Vote}($O$, $C_j$, $C'_j$) \;
            $R \gets R \cup R^{+}$ \;
            \If{$C^*$ is not empty} {
                $\mathcal{U} \gets \mathcal{U} \cup C^*$ \;
            }
        }
        \If{$C^*$ is not empty} {
            Partition $\mathcal{U}$ into new clusters $\{C_{j_1}, \cdots, C_{j_m}\}$ by $k$-means \;
            $Q \gets \{C_{j_1}, \cdots, C_{j_m}\}$ \;
        }
        
    }
    Return $R$ \;
\end{algorithm}

Algorithm~\ref{alg: semantic_filter} presents the procedure of CSV for the semantic filter, given a table $T$, a semantic expression $e$, an LLM $M$, and a parameter for the number of clusters $k$. In the offline phase, the algorithm adopts an embedding model to encode each tuple $t_i$ into a vector representation $\mathbf{e}_i$ (line 2), and then partitions the tuples into $k$ clusters $\{C_1, \cdots, C_k\}$ by K-means regarding the semantic similarity of the embeddings (line 3).
In the online sampling and voting phase, we process each cluster individually by voting strategy. Here, we use a set $Q$ to maintain all the clusters to be processed. For each cluster layer, we initialize a temporary set $\mathcal{U}$ (line 8) to collect all low-confidence tuples that require further refinement.
For a cluster $C_i$, we first draw a collection of tuples, $C'_i$, uniformly with a sampling rate $\xi$ (line 9). Afterwards, we invoke the LLM $M$ to infer the result for each sampled tuple, respectively, regarding the semantic predicate $e$ (lines 10-15), where the results are persisted in a temporary set $O$. Sampled tuples that pass the LLM-based semantic filter are directly inserted into the result table $R$ (lines 14-15).
For the remaining tuples in $C_j \setminus C'_j$, the algorithm introduces the voting mechanism, i.e., the function $\textsc{Vote}$ in line 16, which takes the results of the sampled tuples, all tuples in the current cluster, and the sampled tuples as inputs. The function returns two sets of tuples, $C^*$ and $R^+$, which denote the low-confident tuple set that needs fallback verification and the high-confident tuple set that is determined to pass the filter by voting, respectively.
We append $R^+$ to the result table line 17. For the low-confidence tuples, we collect all such sets into $\mathcal{U}$ (line 18). After finishing the current layer, if $U$ is nonempty, we repartition $\mathcal{U}$ as a whole by K-means (line 19) and push the resulting subclusters back into the set $Q$ for the next iteration (line 20). Each new sub-cluster is then processed following the same procedure (lines 10-19), i.e., uniform sampling and voting, until a maximum split limit is reached.
We equip two voting strategies for the clustering \& voting algorithms, i.e., uniform voting (Algorithm~\ref{alg:univote}) and similarity-based voting (Algorithm~\ref{alg:simvote}). Both of these strategies use two thresholds $ub, lb \in (0, 1)$ to evaluate the voting belief. We elaborate on these two algorithms in the following.

\begin{figure*}
    \centering
    \includegraphics[width=\linewidth]{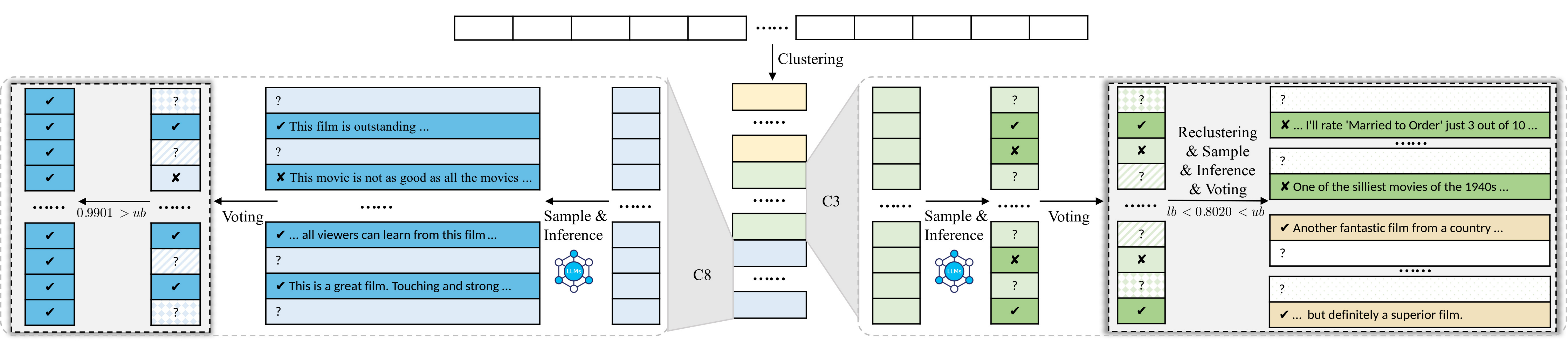}
    \caption{An Running Example of Clustering-Sampling-Voting (CSV) with \textsc{UniVote} on IMDB-Review Dataset}
    \label{fig:example:univote}
\end{figure*}

\begin{algorithm}[t]
    \caption{\textsc{UniVote}($O$, $C$, $C'$)}
    \label{alg:univote}
    \SetKwData{Up}{up}  \SetKwInOut{Input}{Input} \SetKwInOut{Output}{Output}
	\Input{a set of sampled tuples and LLM invocation results $O$, a cluster of tuples with their embedding $C$, the sample subset of tuples with their embeddings $C'$}
	\Output{undetermined tuples with their embedding $C^*$ and positive tuples $R^+$}
    $O^{+} \gets \{o_i = \texttt{True} \mid (o_i, t_i, \mathbf{e}_i) \in O \}$ \;
    \text{score} $\gets \frac{|O^{+}|}{|O|}$ \;
    \If{$\text{score} \ge \text{upperbound}$}{
        $C^* \gets \emptyset$, $R^+ \gets \{t_i \mid  (t_i, \mathbf{e}_i) \in C \setminus C' \}$ \;
    }
            
    \ElseIf{$\text{score} \le \text{lowerbound}$} {
        $C^* \gets \emptyset$, $R^+ \gets \emptyset$ \; 
    }
    \Else {
        $C^* \gets C \setminus C'$, $R^+ \gets \emptyset$ \; 
    }
    return $C^*, R^+$ \;
\end{algorithm}

\stitle{Uniform Voting.} Given the results of the sampled tuples $O$, Uniform Voting computes the ratio of tuples that pass the filter (line 1 in Algorithm~\ref{alg:univote}). Comparing it with the two thresholds $ub$ and $lb$ results in three cases as shown in Eq.~\eqref{eq:univote:proxy}.

\begin{align}
\text{score}(t_i) &= \frac{|O^+|}{|O|}, \forall t_i \in C \setminus C', \\
\hat{M}(t_i; e) &= \begin{cases}
        \text{True} & \text{if}~\text{score}(t_i) \ge ub \\
        \text{False} & \text{if}~\text{score}(t_i)  \le lb  \\
        \text{Undetermined} & \text{if}~\text{score}(t_i)  \in (lb, ub)
        \end{cases}.  \label{eq:univote:proxy}
\end{align}
Here, we use $\hat{M}(t, e)$ to denote the vote result on a tuple $t$ regarding $e$. Under each case, Uniform Voting will return corresponding tuples which are regarded as passing the filter, $R^+$, and the undetermined tuples $C^*$ (lines 3-8).
In Uniform Voting, the result of sampled tuples in $C'$ contributes equally, and the voting results $\hat{M}(t; e)$ are equal for all the remaining tuples in the cluster. 
To achieve fine-grained decision making, we further propose a similarity-based voting mechanism that takes the embedding similarity between sampled tuples and remaining tuples into account.

\stitle{Similarity-based Voting.} Algorithm~\ref{alg:simvote} presents the process of Similarity-based Voting, which also returns the two sets $R^+$ and $C^*$, as additional results and undetermined tuples. In Algorithm~\ref{alg:simvote}, for each remaining tuple $t_i$, we aggregate the voting results of each sampled tuple $t_j$ weighted by its normalized similarity between $t_i$, and the score function is reformulated as Eq.~\eqref{eq:simvoate:score}
\begin{align}
    \text{score}(t_i) &= \sum_{t_j \in C'} \frac{\text{sim}(e_i, e_j) }{\sum_{t_k \in C'} \text{sim}(e_i, e_k)} \mathbb{I}(M(t_j, e)), \forall t_i \in C \setminus C', \label{eq:simvoate:score} 
\end{align}
where $\mathbb{I}(M(t_j, e))$ is the identity function of LLM invocations on the sampled tuples. Similarly, the score function is compared with $lb$ and $ub$ following Eq.~\eqref{eq:univote:proxy}. 

Compared to Uniform Voting, Similarity-based Voting can provide robustness when initial clustering does not perform well. Specifically, when the distribution of positive and negative labels within a cluster does not satisfy our criteria, i.e., $lb$ and $ub$, Uniform Voting will re-cluster the entire cluster, whereas Similarity-based Voting may vote successfully on certain items. This reduces the amount of unprocessed data in subsequent steps. For example, the results of AD-Q1 and AD-Q4 shown in \S~\ref{sec:exp-filter}, Similarity-based Voting outperforms Uniform Voting, in reducing LLM calls.

\begin{algorithm}[t]
    \caption{\textsc{SimVote}($O$, $C$, $C'$)}
    \label{alg:simvote}
    \SetKwData{Up}{up}  \SetKwInOut{Input}{Input} \SetKwInOut{Output}{Output}
	\Input{a set of sampled tuple and LLM invocation results $O$, a cluster of tuples with their embedding $C$, the sample subset of tuples with their embeddings $C'$}
	\Output{undetermined tuples with their embedding $C^*$ and positive tuples $R^+$}
    $C^* \gets \emptyset$, $R^+ \gets \emptyset$\;
    \For{$(t_j, \mathbf{e}_j) \in C \setminus C'$}{
        $\text{score} \gets 0$\;
        $\text{norm} \gets\sum_{t_k \in C'} \text{sim}(e_j, e_k)$\;
        \For{ $(o_i, t_i, \mathbf{e}_i) \in O$ }{
            $\text{flag} \gets 1$ if $o_i$ is \texttt{True} else 0 \;
            $\text{score} \gets \text{score} + \frac{\text{sim}(\mathbf{e}_i, \mathbf{e}_j)}{\text{norm}} \cdot \text{flag}$ \;
        }
        \If{\text{score} $>$ upperbound}{
            $R^+ \gets R^+ \cup \{ t_j\}$ \; 
        }
        \ElseIf{\text{score} $<$ upperbound $\wedge$ \text{score} $>$ lowerbound}{
            $C^* \gets C^* \cup \{ (t_j, \mathbf{e}_j)\}$ \; 
        }
    }
    return $C^*, R^+$ \;
\end{algorithm}

\begin{example}(Clustering-Sampling-Voting).
Fig.~\ref{fig:example:univote} illustrates a running example of Clustering \& Uniform Voting on IMDB-Review dataset. The dataset consists of a table of tuples, each representing a movie review. The semantic filter query applied here is ``The {review} is positive.''
We first partition the table into 8 clusters using $K$-means clustering based on the tuple embeddings. The label distributions of each cluster are depicted in Fig.~\ref{fig:csv_review}.
Next, voting is conducted independently for each cluster. We use clusters C3 (highlighted in green) and C8 (highlighted in blue) as examples. In this running example, we set {$lb=0.15$ and $ub=0.85$}.
In cluster C8, {$101$ tuples are uniformly sampled}, and their semantic filter results are obtained from individual LLM invocations, e.g., the semantic filter yields \texttt{True} for the tuple ``this file is outstanding...''. Among the sampled tuples, $99.01\%$ are labeled as \texttt{True}, surpassing the upper bound ($0.9901>ub$), leading us to categorize all tuples in this cluster as \texttt{True}.
For cluster C3, {where $101$ tuples are uniformly sampled}, $80.20\%$ are classified as \texttt{True}. Since $lb<0.8020<ub$, we apply a re-cluster strategy. This strategy results in two clusters, distinguished by green and yellow colors in the far right of Fig.~\ref{fig:example:univote}. Similarly, we sample tuples from each of these two clusters, utilize LLM for individual tuple analysis, and subsequently conduct voting.
\end{example}

\noindent\textbf{Complexity Analysis.} 
Given the table $T$ and a sampling rate of $\xi$, the CSV paradigm issues $\bigo(\xi|T|)$ LLM invocations in the best case and issues $\bigo(|T|)$ LLM invocations in the worst case.

The CSV framework can support filters on multiple columns by generating a fused semantics embedding for multiple columns. 
For example, concatenating the text of multiple columns together with the column names and feeding the long context into a pre-trained language model to produce the row-wise embedding, or indexing the embeddings for all columns by hierarchical clustering. 
In addition, our approach can be extended to support tuple insert/delete
by incorporating incremental clustering algorithms to maintain the
clusters.  
Our update handling distinguishes between small and large batches.
(1) Small, occasional updates. When the update batch is small, the cluster centroids are assumed to be unchanged. New data are inserted into the clusters with the nearest centroid, and we reuse the existing cluster-level voting outcomes. If a cluster previously had no consensus, we apply linear LLM invocations to incorporate the minor updates while keeping the overhead low.
(2) Larger, periodic updates. For larger batches of updates, we use mini-batch K-means to update the clustering. Subsequent sampling, voting, and re-clustering are fully compatible with the underlying data. We cache the LLM outcomes for sampled tuples from the original datasets for potential reuse. For samples in the updated datasets, LLM is invoked to process tuples missed in the cache. 
For deletion, deleted tuples will be marked, and clusters will be merged when the number of them reaches a specific number.

\subsection{Analysis on $\xi$ Setting of Voting Strategies}
\label{sec:theo}

\cref{sec:cluster-vote} introduces two voting strategies, \textsc{UniVote} and \textsc{SimVote}. Both strategies are governed by a key parameter, the sample ratio $\xi$, which directly impacts both the time complexity (requiring $O(\xi|T|)$ LLM invocations) and the theoretical error (as elaborated in the subsequent discussion).  
Concretely, from a given cluster, the voting strategies sample a subset of tuples, constituting a proportion $\xi$ of the entire tuples in the cluster, invoke the LLM on the sampled tuples, and use the sampled outcomes to infer the cluster-level label. 
The choice of $\xi$ induces a natural trade-off. A larger $\xi$ increases the number of LLM calls but reduces estimation error; conversely, a smaller $\xi$ reduces LLM invocations with the compromise of a higher error. 
In practice, however, users typically lack the prior knowledge needed to choose  $\xi$ to balance cost and quality. 

To make principled parameter tuning, we introduce a user-friendly parameter, $\epsilon$, representing a user-specified error tolerance in our theoretical analysis. 
In the following, we derive a theoretical relationship between $\xi$ and $\epsilon$, enabling us to select an appropriate sampling ratio $\xi$ for a given $\epsilon$, so that our CSV framework achieves the desired accuracy with high probability.

First, we introduce the cornerstone of our theoretical analysis, Bernstein Inequality~\cite{bernsteinBound} in Lemma~\ref{lemma:b-bound}, which will be frequently used subsequently.
\begin{lemma}(Bernstein Inequality~\cite{bernsteinBound})
Given $X = \{x_1, \cdots, x_n\}$ is the finite population of size $n$, where $x_1, \cdots, x_n$ are binary variables with a mean $\mu$. Let $\hat{\mu}$ be the mean of $k$ variables $c_1,\cdots,c_k$ obtained by simple random sampling without replacement from $X$. For any $\epsilon>0$, we have $$Pr\left[|\hat{\mu}-\mu|\ge \epsilon\right]\le 2 \text{exp}\left(-\frac{k\epsilon^2}{2\hat{\sigma}^2 + {2R\epsilon}/{3}} \cdot \frac{n - k}{n - 1} \right),$$
where $\hat{\sigma}$ is the sample standard deviation of $k$ sampled variables $c_1,\cdots,c_k$, i.e., $\hat{\sigma}^2=\frac{1}{k-1}\sum_{i=1}^k(c_i-\hat{\mu})^2$, $R$ is the almost sure bound for $x_i$, which means $|x_i| \le R$ almost surely.
\label{lemma:b-bound}
\end{lemma}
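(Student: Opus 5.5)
The plan is to treat the statement as a finite-population (Serfling-type) refinement of the classical Bernstein inequality. I would prove it first with the \emph{population} variance $\sigma^2=\frac{1}{n}\sum_{i=1}^n (x_i-\mu)^2$ in place of $\hat{\sigma}^2$, and then pass to the sample variance. One-sided bounds for $\hat{\mu}-\mu$ and $\mu-\hat{\mu}$ are symmetric: apply the same argument to $-x_i$. A union bound then gives the factor $2$, so it suffices to bound $Pr[\hat{\mu}-\mu\ge\epsilon]$.

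\textbf{Step 1 (Chernoff bound with Hoeffding's reduction).} By Markov's inequality, for $\lambda>0$,
\begin{align*}
Pr[\hat{\mu}-\mu\ge\epsilon]\le e^{-\lambda k\epsilon}\,\mathbb{E}\bigl[e^{\lambda\sum_{i=1}^k (c_i-\mu)}\bigr].
\end{align*}
Hoeffding's classical comparison theorem states that $\mathbb{E} f(\sum_i c_i)$ under sampling without replacement is at most the same quantity under sampling with replacement, for every continuous convex $f$. This reduces the moment generating function to a product of $k$ i.i.d.\ factors. Each factor is bounded by the standard Bernstein estimate $\log\mathbb{E}e^{\lambda(x-\mu)}\le \frac{\lambda^2\sigma^2/2}{1-\lambda R/3}$, using $|x_i|\le R$. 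Optimizing over $\lambda$ gives $\exp\bigl(-k\epsilon^2/(2\sigma^2+2R\epsilon/3)\bigr)$.

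\textbf{Step 2 (the finite-population factor $\frac{n-k}{n-1}$).} The with-replacement reduction alone loses the factor that makes the bound sharpen as $k\to n$. To recover it, I would follow Serfling's martingale argument, in the form later refined by Bardenet and Maillard. Let $\mu_j$ be the mean of the $n-j$ unsampled items after $j$ draws. The sequence $Z_j=(\mu_j-\mu)/(n-j)$, or the equivalent reverse martingale built from partial sums, is a martingale whose increments have conditional variance of order $\sigma^2/(n-j)^2$ and are bounded by $R/(n-j)$. Applying the Bernstein (Freedman) martingale inequality and summing $\sum_{j<k}(n-j)^{-2}$-type terms yields a variance proxy of order $\sigma^2\frac{n-1}{k(n-k)}$. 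Substituted into the Chernoff exponent, this gives exactly the multiplicative factor $\frac{n-k}{n-1}$. I expect this to be the main obstacle. Getting the constant $\frac{n-k}{n-1}$ exactly, rather than $1-\frac{k-1}{n}$ or $1-\frac{k}{n}$, requires careful bookkeeping of the increments. I would first try to match the Bardenet--Maillard computation and accept a slightly weaker factor if the exact one does not fall out.

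\textbf{Step 3 (empirical variance).} Finally, $\sigma^2$ must be replaced by $\hat{\sigma}^2$. Because the $x_i$ are binary, $\sigma^2=\mu(1-\mu)$ and $\hat{\sigma}^2=\frac{k}{k-1}\hat{\mu}(1-\hat{\mu})$. So $\hat{\sigma}^2$ is a deterministic function of $\hat{\mu}$, and I would control $|\hat{\sigma}^2-\sigma^2|$ through the Step 2 deviation bound on $\hat{\mu}$ itself, in the style of Maurer--Pontil's empirical Bernstein bound. Strictly, this substitution costs either a union bound over two events or an inflation of the variance term. Without that slack the inequality as displayed only holds up to such a correction. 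In the write-up I would therefore state the sample-variance version as the empirical-Bernstein--Serfling result of Bardenet and Maillard and cite it for the exact constants. This is consistent with the paper's use of the lemma as an imported tool.
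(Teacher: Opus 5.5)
The paper gives no proof of this lemma. It is imported from \cite{bernsteinBound} and used as a black box, so your final move of citing an empirical Bernstein--Serfling result matches what the paper does. Your self-contained argument, however, contains a real error, and it sits in the step you call the main obstacle.

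In Step 2 you treat $\frac{n-k}{n-1}$ as the finite-population gain that sharpens the bound as $k\to n$. In the displayed inequality this factor lies in $[0,1]$ and \emph{multiplies} the exponent, so it weakens the bound. With $\sigma^2$ in place of $\hat{\sigma}^2$, the statement therefore already follows from Step 1 alone, since $\exp(-A)\le\exp\bigl(-A\frac{n-k}{n-1}\bigr)$ for $A\ge 0$.

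The variance proxy you attribute to the Serfling martingale computation is also inverted. The variance of $\hat{\mu}$ under sampling without replacement is $\frac{\sigma^2}{k}\cdot\frac{n-k}{n-1}\le\sigma^2/k$, not $\sigma^2\frac{n-1}{k(n-k)}$. Serfling's argument shrinks the variance term by a factor like $1-\frac{k-1}{n}$, which tightens the bound. Done correctly, Step 2 would give a stronger inequality than the one you target, not ``exactly'' $\frac{n-k}{n-1}$. The obstacle you anticipate does not exist, and you can drop Step 2.

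A minor point: the per-factor Bernstein estimate in Step 1 needs $|x_i-\mu|\le R$ rather than $|x_i|\le R$. For binary data $R=1$ serves both.

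In Step 3 you are right that the display cannot be proved as written. The left side is a fixed number, while the right side depends on the sample through $\hat{\sigma}^2$, and the inequality actually fails on some realizations. Take the following instance:
\begin{itemize}
\item $n=100$, with fifty of the $x_i$ equal to $1$, so $\mu=\frac12$;
\item $k=2$, $\epsilon=\frac12$, and $R=1$.
\end{itemize}
Then $Pr[|\hat{\mu}-\mu|\ge\frac12]=\frac{49}{99}\approx 0.495$. The event $|\hat{\mu}-\mu|\ge\frac12$ is exactly the event $c_1=c_2$, on which $\hat{\sigma}^2=0$. On that event the right side equals $2\exp\bigl(-\frac32\cdot\frac{98}{99}\bigr)\approx 0.453$, which is smaller than the left side.

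So no argument delivers the lemma verbatim. What is true is a high-probability empirical Bernstein(--Serfling) bound, as in Maurer--Pontil or Bardenet--Maillard. Those results require a union bound over a variance-estimation event and inflated constants. Your fallback to that version is the right resolution, but you should state it as replacing the lemma rather than proving it.
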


\stitle{Analysis of \textsc{UniVote}.}
First, we analyze the \textsc{SemanticFilter} with \textsc{UniVote}.
Here, given $X = \{x_1, \cdots, x_n\}$ of $C_j$, where $x_1, \cdots, x_n$ are binary, and $x_i=\mathbb{I}(M(t_i,e))$ for each $(\mathbf{e}_i,t_i)\in C_j$.
Let $\mu(X)$ be the mean of variables in $X$, and $f(t_i)\in \{0,1\}$ indicate whether $(\mathbf{e}_i,t_i)$ is included in $R^+$ by \textsc{UniVote}. Specifically, $f(t_i)=1$ if $|O^{+}|/|O|\ge ub$ and $f(t_i)=0$ if $|O^{+}|/|O|\le lb$, where $|O|=k$ and $\xi=|O|/|C_j|$.
By the Bernstein bound, we confirm that $f(t_i)$ is an accurate estimator of $x_i$ for each $(\mathbf{e}_i,t_i)$ when the sampling rate $\xi$ is sufficiently large.

\begin{theorem}
Given a cluster $C=\{(\mathbf{e}_1,t_1),\cdots, (\mathbf{e}_n,t_n)\}$ and parameters $lb$, $ub$, $l$ and $\epsilon$. Let $X = \{x_1, \cdots, x_n\}$ of $C$, where $x_i=\mathbb{I}(M(t_i,e))$ is binary for each $(\mathbf{e}_i,t_i)\in C$. Assume that $\xi$ satisfies $$\xi\ge \frac{1}{2}-\sqrt{\frac{1}{4}+\ln l \left({\frac{2\hat{\sigma}^2}{\epsilon^2}+\frac{2}{3\epsilon}}\right)}.$$ Then, the following inequality holds with at least $1-2l^n$ probability when $\mu(O)\le lb$ or $\mu(O)\ge ub$: $$Pr[f(t_i)\neq x_i]\le\max(lb+\epsilon,1-(ub-\epsilon)).$$
\label{theo:univote}
\end{theorem}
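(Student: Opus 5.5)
The plan is to apply Lemma~\ref{lemma:b-bound} to the population $X$ of cluster $C$, with sample $O$ of size $k=\xi n$ and $R=1$. Then I choose $\xi$ so that the failure probability in Lemma~\ref{lemma:b-bound} is at most the stated quantity. On the complementary event $|\mu(O)-\mu(X)|<\epsilon$, I translate the voting decision into a bound on the per-tuple misclassification probability. The statement is loosely typed: $l^n$ presumably means a small quantity with $l<1$, so $\ln l<0$. I would therefore read the target as ``the Bernstein tail is at most $2l^{n}$'' and derive the $\xi$ condition from that.

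\textbf{Step 1 (concentration).} Set $k=\xi n$ and $R=1$. Lemma~\ref{lemma:b-bound} gives
$$\Pr[|\mu(O)-\mu(X)|\ge\epsilon]\le 2\exp\!\Big(-\frac{\xi n\,\epsilon^2}{2\hat\sigma^2+2\epsilon/3}\cdot\frac{n-\xi n}{n-1}\Big).$$
I would use $\frac{n-\xi n}{n-1}\ge 1-\xi$ and require the exponent to be at most $n\ln l$. Writing $A=\frac{2\hat\sigma^2}{\epsilon^2}+\frac{2}{3\epsilon}$, this becomes $\xi(1-\xi)\ge -A\ln l$, i.e.\ $\xi^2-\xi-A\ln l\le 0$. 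Solving this quadratic in $\xi$ gives roots $\frac12\pm\sqrt{\frac14+A\ln l}$, and the condition holds for $\xi$ between them. The lower root is exactly the bound in the statement, so the hypothesis (together with the implicit $\xi\le$ the upper root, which I would note as needed) yields failure probability at most $2l^n$.

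\textbf{Step 2 (error when voting).} Condition on $|\mu(O)-\mu(X)|<\epsilon$ and split into cases.
\begin{itemize}
\item If $\mu(O)\ge ub$, then \textsc{UniVote} sets $f(t_i)=1$ for every unsampled $t_i$, and $\mu(X)>ub-\epsilon$. A uniformly chosen $t_i$ in $C$ is then misclassified exactly when $x_i=0$, which happens with probability $1-\mu(X)<1-(ub-\epsilon)$.
\item If $\mu(O)\le lb$, then $f(t_i)=0$, and the error probability is $\mu(X)<lb+\epsilon$.
\end{itemize}
Taking the maximum over the two cases gives $\Pr[f(t_i)\neq x_i]\le\max(lb+\epsilon,\,1-(ub-\epsilon))$. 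Sampled tuples receive the LLM's own label and contribute no error, so they only help.

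\textbf{Main obstacle.} The delicate point is Step 1. Its difficulties are the sign and meaning of $\ln l$ (so that the square root is real and the inequality direction is right), the replacement of the finite-population factor $\frac{n-k}{n-1}$ by $1-\xi$, and the fact that $\hat\sigma$ is a sample statistic appearing in a sampling condition. For the last, I would treat $\hat\sigma$ as given, as Lemma~\ref{lemma:b-bound} does, or else bound it by $\frac14\cdot\frac{k}{k-1}$. I would also make explicit that ``$\Pr[f(t_i)\ne x_i]$'' refers to a uniformly random tuple of $C$ conditioned on the good event.
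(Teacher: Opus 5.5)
Your proposal is correct and follows essentially the same route as the paper: apply Lemma~\ref{lemma:b-bound} with $R=1$ and $k=\xi n$, use the condition on $\xi$ to push the tail below $2\exp(n\ln l)=2l^n$, and then treat $\mu(O)\le lb$ and $\mu(O)\ge ub$ separately by comparing $\mu(X)$ with $lb+\epsilon$ and $ub-\epsilon$. Your explicit quadratic derivation is more careful than the paper's sketch, which asserts the tail bound without noting that it needs $\ln l<0$, $\frac14+A\ln l\ge 0$, and $\xi\le\frac12+\sqrt{\frac14+A\ln l}$ (with your $A=\frac{2\hat\sigma^2}{\epsilon^2}+\frac{2}{3\epsilon}$); the upper-root hypothesis you flag is genuinely missing from the statement and is not a defect of your argument.
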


Theorem~\ref{theo:univote} bounds the probability that the voting decision $f(t_i)$ disagrees with the true LLM output $x_i$. The key insight is that the mean of samples $\mu(O)$ concentrates around the mean of the true population $\mu(X)$ with high probability.  The condition on $\xi$ ensures that we sample enough tuples for the Bernstein inequality to guarantee this concentration within tolerance $\epsilon$. Consequently, if voting assigns a label, the error is bounded by $\max(lb+\epsilon,1-(ub-\epsilon))$, and this guarantee holds with probability at least $1 - 2l^{n}$.

\proofsketch
Let ${\mu}(O)$ be the mean of $k$ sampled elements from $X$, i.e., $k=\xi n\ge \frac{1}{2}n-n\cdot\sqrt{\frac{1}{4}+\ln l \left({\frac{2\hat{\sigma}^2}{\epsilon^2}+\frac{2}{3\epsilon}}\right)}$ and ${\mu}(O)=|O^+|/|O|$.

For the case that ${\mu}(O)\le lb$, we have $f(t_i)=0$. Hence, $Pr[f(t_i)\neq x_i]=Pr[x_i\neq 0]=Pr[x_i= 1]={\mu}(X)$.
Then we have $Pr[Pr[f(t_i)\neq x_i]\ge lb+\epsilon]=Pr[\mu(X)\ge lb+\epsilon]\le Pr[\mu(X)\ge \mu(O)+\epsilon]\le Pr[|\mu(X)-\mu(O)|\ge \epsilon]$.
According to Lemma~\ref{lemma:b-bound}, we have $Pr[|\mu(X)-\mu(O)|\ge \epsilon]\le  2 \exp\left(-\frac{k\epsilon^2}{2\hat{\sigma}^2 + {2\epsilon}/{3}} \cdot \frac{n - k}{n - 1} \right)< 2\exp(n\ln l)=2l^n$.

For another case that $\mu(O)\ge ub$, we have $f(t_i)=1$. Hence, $Pr[f(t_i)\neq x_i]=Pr[x_i\neq 1]=Pr[x_i=0]=1-{\mu}(X)$.
Then we have $Pr[Pr[f(t_i)\neq x_i]\ge 1-(ub-\epsilon)]=Pr[1-\mu(X)\ge 1-(ub-\epsilon)]=Pr[\mu(X)\le ub-\epsilon]\le Pr[\mu(X)\le \mu(O)-\epsilon]\le Pr[|\mu(X)-\mu(O)|\ge \epsilon]$.
Based on the preceding analysis, we have already established that $Pr[|\mu(X)-\mu(O)|\ge \epsilon]\le 2 l^n$.
\eop

Note that, in the scenario where $lb<|O^{+}|/|O|< ub$, the \textsc{SemanticFilter} continues to repartition the cluster while maintaining the values of $lb$ and $ub$. Therefore, the analysis remains consistent. Therefore, here, we only focus on cases where $\mu(O)=|O^{+}|/|O|$ satisfies either $\mu(O)\le lb$ or $\mu(O)\ge ub$.

\stitle{Analysis of \textsc{SimVote}.}
Next, we analyze the \textsc{SemanticFilter} with \textsc{SimVote}.
We consider a weighted version of Lemma~\ref{lemma:b-bound}.

\begin{corollary}
Let $X = \{x_1, ..., x_n\}$ be a finite population of size $n$, where $x_1,\cdots,x_n$ are binary variables with a mean $\mu$.
Let $c_1,\cdots, c_k$ be sampled without replacement from $X$, and each $c_i$ is associated with a weight $w_i$, where $w_i\ge 0$ and $\sum_{i=1}^k w_i = 1$.
Let $\hat{\mu}$ be the mean of $c_1,\cdots, c_k$, i.e., $\hat{\mu} = \frac{1}{k}\sum_{i=1}^k c_i$; and $\hat{\mu}_w$ be the weighted mean of $c_i$ with weight $w_i$, i.e., $\hat{\mu}_w=\frac{1}{k}\sum_{i=1}^k w_ic_i$.
For any $\epsilon > 0$, we have
    $$Pr [| \hat{\mu}_w - \mu| \ge \epsilon] \le 2\exp(\frac{-3k\epsilon^2}{({6}\hat{\sigma}^2 + {2\epsilon)v}}\cdot\frac{n - k}{n - 1}),$$
    \label{Bernsteinbound:extension}
    where $\hat{\sigma}^2 = \frac{1}{k-1}\sum_{i=1}^k(c_i - \hat{\mu})^2$ is the sample standard deviation, and v is a constant such that $\max_i w_i \le \frac{v}{k}$.
\label{corollary:1}
\end{corollary}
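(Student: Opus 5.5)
The plan is to reduce the weighted statement to Lemma~\ref{lemma:b-bound} by rescaling the sampled variables so that the weighted mean becomes an ordinary sample mean of bounded variables. Concretely, I will define $y_i = k w_i c_i$ for $i=1,\dots,k$. Since $\sum_i w_i = 1$, the quantity $\frac{1}{k}\sum_{i=1}^k y_i = \sum_{i=1}^k w_i c_i$ is the normalized weighted average. I will read $\hat{\mu}_w$ in this normalized sense, since only then is it a convex combination of the $c_i$ comparable to $\mu$. The weight condition $\max_i w_i \le v/k$ gives $0 \le y_i \le v$ almost surely, so the range constant in Lemma~\ref{lemma:b-bound} becomes $R = v$. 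If all $w_i = 1/k$ (so $v=1$), everything collapses back to the \textsc{UniVote} setting, which serves as a sanity check on the final form.

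The second step controls the variance term. Because $c_i \in \{0,1\}$, we have $y_i^2 = k^2 w_i^2 c_i \le v\, k w_i c_i$, so the empirical second moment of the $y_i$ is at most $v$ times that of the $c_i$. I will use this to replace the variance proxy $\hat{\sigma}^2$ in Lemma~\ref{lemma:b-bound} by $v\hat{\sigma}^2$, where $\hat{\sigma}^2$ is the sample variance of the unweighted $c_i$. Substituting $R=v$ and the variance proxy $v\hat{\sigma}^2$ into the exponent of Lemma~\ref{lemma:b-bound} gives
\[
\frac{k\epsilon^2}{2v\hat{\sigma}^2 + 2v\epsilon/3}\cdot\frac{n-k}{n-1} \;=\; \frac{3k\epsilon^2}{(6\hat{\sigma}^2+2\epsilon)v}\cdot\frac{n-k}{n-1}.
\]
This is exactly the exponent in Corollary~\ref{corollary:1}, and the finite-population factor $\frac{n-k}{n-1}$ is carried over unchanged from the without-replacement sampling.

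The main obstacle is the applicability of Lemma~\ref{lemma:b-bound} to the $y_i$. That lemma is stated for a simple random sample from a fixed finite population with mean $\mu$. The rescaled values $y_i$, however, depend on data-dependent weights: in \textsc{SimVote} these are similarities to the query tuple. As a result, $y_i$ is not obviously a draw from a fixed population whose mean is $\mu$.

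I would first try to treat the weights as fixed and independent of the random sample. This amounts to conditioning on the embedding geometry and assigning each population element $j$ a predetermined weight, which yields a fixed rescaled population $\{k w_j x_j\}$. Lemma~\ref{lemma:b-bound} then bounds the deviation of $\frac{1}{k}\sum_i y_i$ from the mean of that population. The remaining task is to argue that this population mean coincides with $\mu$, or is close enough to it to be absorbed into $\epsilon$. Equality holds when the weights are uniform in expectation.

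If that identification fails, the fallback is to derive the weighted bound directly. I would redo the Bernstein/Serfling martingale argument for the weighted sum $\sum_i w_i(c_i-\mu)$. Each martingale increment is bounded by $v/k$, and the predictable quadratic variation is bounded by $(v/k)\,\hat{\sigma}^2$. These two bounds produce the same exponent as above.
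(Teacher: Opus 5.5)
Your reduction is the paper's own up to rescaling. The paper applies Lemma~\ref{lemma:b-bound} to $y_i=w_i(c_i-\mu)$ at level $\epsilon/k$, with $R\le v/k$ and variance term $\hat\sigma_y^2\le v\hat\sigma^2/k^2$. Multiplying by $k$ and un-centering gives your $y_i=kw_ic_i$ at level $\epsilon$, with $R=v$ and variance term $v\hat\sigma^2$, and the final algebra is identical. The paper also silently uses your normalized reading $\hat\mu_w=\sum_iw_ic_i$; as literally stated, $\hat\mu_w\le 1/k$ and the bound cannot hold. However, both load-bearing steps fail, in your write-up and in the paper's sketch alike.

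\emph{The variance step.} Summing $y_i^2\le v\,kw_ic_i$ gives $\frac1k\sum_iy_i^2\le v\sum_iw_ic_i$. That is $v$ times the \emph{weighted} mean, not $v$ times the second moment $\frac1k\sum_ic_i^2$ of the $c_i$. In any case, Lemma~\ref{lemma:b-bound} needs the centered sample variance of the $y_i$, and that is not bounded by $v\hat\sigma^2$. If every drawn $c_i=1$ and the weights are not uniform, then $\hat\sigma^2=0$ while $y_i=kw_i$ has strictly positive sample variance. The paper's identity $\hat\sigma_y^2=\frac1k\hat\sigma^2\sum_iw_i^2$ fails on the same sample when $\mu<1$.

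\emph{Applicability of Lemma~\ref{lemma:b-bound}.} The obstacle you flag is real, and your first remedy does not remove it. Per-element weights that sum to $1$ on every $k$-subset (with $k<n$) must all equal $1/k$, which is just Lemma~\ref{lemma:b-bound} again. Some independence hypothesis on the weights is unavoidable, because if the $w_i$ may depend on the drawn values the statement is false; in \textsc{SimVote} they do depend on which tuples were drawn. Take $\mu=1/2$ and $v=2$, and whenever at least $k/2$ ones are drawn, put weight $2/k$ on $k/2$ of them. Then $\hat\mu_w=1$ with probability at least $1/2$, while the right-hand side at $\epsilon=1/2$ is $e^{-\Omega(k)}$ for $n\ge 2k$. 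With weights fixed before sampling and attached to draw positions, $\mathbb{E}\sum_iw_ic_i=\mu$ is automatic. But the $y_i$ are still not a simple random sample from one fixed population, since draw $i$ is scaled by its own $w_i$, so the lemma still does not apply.

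\emph{What does work} is your fallback with the correct variance. Draws without replacement are negatively associated, so for either sign of $\lambda$,
$\mathbb{E}\exp\bigl(\lambda\sum_iw_i(c_i-\mu)\bigr)\le\prod_i\mathbb{E}\exp\bigl(\lambda w_i(c_i-\mu)\bigr)$.
The classical Bernstein bound then applies with summands bounded by $\max_iw_i\le v/k$ and total variance $\sigma^2\sum_iw_i^2\le v\sigma^2/k$. This gives exactly $2\exp\bigl(-3k\epsilon^2/((6\sigma^2+2\epsilon)v)\bigr)$ with $\sigma^2=\mu(1-\mu)$, and the factor $\frac{n-k}{n-1}\le 1$ only weakens it. Note that this is the \emph{population} variance. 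A predictable quadratic variation is determined by the population and the past draws, not by the realized $\hat\sigma^2$. Obtaining the statement with $\hat\sigma^2$ therefore needs a separate empirical-Bernstein step (bounding $\sigma^2$ by $\hat\sigma^2$ plus a deviation term, with a union bound), which changes the constants.
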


The derivation can be written as $\hat{\mu}_w - \mu = \sum_{i = 1}^k y_i$ with $y_i = w_i(c_i - \mu)$, whose expectation is $0$. Applying Lemma ~\ref{lemma:b-bound} to $y_i$ and bounding the variance, Corollary ~\ref{Bernsteinbound:extension} can be proved. Full derivations and detailed proof steps are provided in appendix ~\ref{app:proof}

Here, given $X = \{x_1, \cdots, x_n\}$ of a cluster $C_j$, where $x_1, \cdots, x_n$ are binary, and $x_i=\mathbb{I}(M(t_i,e))$ for each $(\mathbf{e}_i,t_i)\in C_j$.
Let $g(t_i)\in \{0,1\}$ indicate whether $(\mathbf{e_i},t_i)$ is included in $R^+$ by \textsc{SimVote}, i.e., $g(t_i)=1$ if $\text{score}(t_i)\ge ub$ and $g(t_i)=0$ if $\text{score}(t_i)\le lb$, where $\text{score}(t_i)$ is defined in Eq.~\eqref{eq:simvoate:score}.

\begin{lemma}
Given a cluster $C=\{(\mathbf{e}_1,t_1),\cdots, (\mathbf{e}_n,t_n)\}$ and its subset $O\subseteq C$, which is sampled uniformly from $C$.
Let $X(C) = \{x_1, \cdots, x_n\}$ of $C$, where $x_i=\mathbb{I}(M(t_i,e))$ is binary for each $(\mathbf{e}_i,t_i)\in C$, and $\mu(\text{score}(C\setminus O))$ be the mean value of $\text{score}(t_i)$ (defined in Eq.~\ref{eq:simvoate:score}) for $(\mathbf{e_i},t_i)\in C\setminus O$. We have $\mu(\text{score}(C\setminus O))=\mu(X(O))$.
\label{lemma:simvote}
\end{lemma}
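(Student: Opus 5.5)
The plan is to unfold the definition of $\text{score}$ in Eq.~\eqref{eq:simvoate:score} and exchange the order of summation. Write $m=|C\setminus O|$ and $k=|O|$. For each unsampled $t_i\in C\setminus O$ and sampled $t_j\in O$, introduce the normalized weight
\[
w_{ij}=\frac{\text{sim}(\mathbf{e}_i,\mathbf{e}_j)}{\sum_{t_{k'}\in O}\text{sim}(\mathbf{e}_i,\mathbf{e}_{k'})}.
\]
These weights satisfy $w_{ij}\ge 0$ and $\sum_{t_j\in O}w_{ij}=1$ for every $i$, so that $\text{score}(t_i)=\sum_{t_j\in O}w_{ij}x_j$. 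Then
\[
\mu(\text{score}(C\setminus O))=\frac{1}{m}\sum_{t_i\in C\setminus O}\sum_{t_j\in O}w_{ij}x_j=\sum_{t_j\in O}x_j\cdot\Big(\frac{1}{m}\sum_{t_i\in C\setminus O}w_{ij}\Big).
\]
Call $W_j=\frac{1}{m}\sum_{i}w_{ij}$ the aggregate weight that sampled tuple $t_j$ receives. Because each row of $(w_{ij})$ sums to one, we get $\sum_{t_j\in O}W_j=1$. Hence $\mu(\text{score}(C\setminus O))$ is a convex combination of the sampled labels $x_j$, with coefficients $W_j$.

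The target $\mu(X(O))=\frac{1}{k}\sum_{t_j\in O}x_j$ is exactly this convex combination with all coefficients equal to $1/k$. The identity therefore reduces to showing $W_j=1/k$ for each sampled tuple. I would use the uniform sampling of $O$ from $C$ for this step. The sampled and unsampled tuples are exchangeable members of the same cluster, so no sampled tuple is systematically closer to the unsampled ones than another. By this symmetry the column sums $\sum_i w_{ij}$ coincide across $j$, and since they total $m$, each equals $m/k$. This gives $W_j=1/k$.

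Substituting back yields $\mu(\text{score}(C\setminus O))=\frac{1}{k}\sum_{t_j\in O}x_j=\mu(X(O))$. This is what lets the \textsc{SimVote} scores inherit the concentration of $\mu(O)$ used in Theorem~\ref{theo:univote}.

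The main obstacle is the symmetry step equating the column sums. For a fixed realization of $O$, the row normalization alone does not force equal columns. So the argument must lean on uniform sampling together with the within-cluster homogeneity of similarities. I would first phrase it as the statement that, under uniform sampling, the aggregate weight $W_j$ does not depend on which tuple occupies position $j$, so that each equals the common value $1/k$. I would then check that this is exactly the form needed downstream in Corollary~\ref{corollary:1}, where the weights enter only through the bound $\max_i w_i\le v/k$.
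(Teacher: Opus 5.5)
Your argument follows the paper's own route: expand $\text{score}$, swap the two sums, and invoke uniform sampling to claim that every sampled tuple receives the same aggregate weight. Your bookkeeping is the consistent one, with each column sum equal to $|C\setminus O|/|O|$. The paper's sketch instead writes $|C\setminus O|$, which contradicts the rows summing to one and gives $\sum_{t_j\in O}x_j$ rather than a mean. However, the step you flag as the main obstacle is a genuine gap, and it cannot be closed: without an extra hypothesis the lemma is false. For a fixed $O$, the column sums $\sum_{t_i\in C\setminus O}w_{ij}$ are fixed by the similarity matrix and need not agree. Take $C=\{t_1,t_2,t_3\}$, $O=\{t_1,t_2\}$, $x_1=1$, $x_2=0$, $\text{sim}(\mathbf{e}_3,\mathbf{e}_1)=0.9$ and $\text{sim}(\mathbf{e}_3,\mathbf{e}_2)=0.1$. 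Then $\mu(\text{score}(C\setminus O))=\text{score}(t_3)=0.9$, while $\mu(X(O))=0.5$.

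Exchangeability does not rescue the step either. Uniform sampling makes the draw positions exchangeable, but that only gives $\mathbb{E}[W_j]=1/|O|$ position by position. Both $W_j$ and $x_j$ are functions of which tuple lands in position $j$, so they are correlated. Your claim that ``$W_j$ does not depend on which tuple occupies position $j$'' is exactly what fails: a tuple similar to many unsampled tuples collects more weight. The identity fails even in expectation. Take $x_1=1$, $x_2=x_3=0$, $\text{sim}(\mathbf{e}_1,\mathbf{e}_2)=\text{sim}(\mathbf{e}_1,\mathbf{e}_3)=1$ and $\text{sim}(\mathbf{e}_2,\mathbf{e}_3)=\delta>0$. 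Averaging over the three samples of size $2$ gives $\mathbb{E}[\mu(\text{score}(C\setminus O))]=\frac{2}{3(1+\delta)}$ but $\mathbb{E}[\mu(X(O))]=\frac{1}{3}$, and these agree only at $\delta=1$. So ``within-cluster homogeneity'' has to enter as an explicit assumption, for instance $W_j=1/|O|$ for every $t_j\in O$. That holds when all within-cluster similarities are equal, in which case \textsc{SimVote} collapses to \textsc{UniVote}. Alternatively, your convex-combination step already proves an honest weaker statement: $|\mu(\text{score}(C\setminus O))-\mu(X(O))|\le\frac{1}{2}\sum_{t_j\in O}\bigl|W_j-\frac{1}{|O|}\bigr|$.
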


\proofsketch
Since $\mathbb{I}(M(t_j, e))=x_j$, we have
\begin{align}
|C\setminus O|\cdot\mu(\text{score}(C\setminus O))=\sum_{t_i\in C\setminus O}\sum_{t_j \in O} \frac{\text{sim}(t_i, t_j) }{\sum_{t_k \in O} \text{sim}(t_i, t_k)} x_j \nonumber\\
=\sum_{t_j\in O}x_j\cdot \left(\sum_{t_i\in C\setminus O} \frac{\text{sim}(t_i, t_j)}{\sum_{t_k \in O} \text{sim}(t_i, t_k)}\right)\label{eq:cons}.
\end{align}
Since $O$ is sampled uniformly from $C$, we have $\text{r.h.s. of Eq.~\eqref{eq:cons}} =\sum_{t_j\in O}x_j|C\setminus O|$, hence, $\mu(\text{score}(C\setminus O))=\sum_{t_j\in O}x_j=\mu(X(O))$.
\eop

Based on Corollary~\ref{corollary:1} and Lemma~\ref{lemma:simvote}, we confirm that $g(t_i)$ is an accurate estimator of $x_i$ for each $(\mathbf{e_i},t_i)$  when the sampling rate $\xi$ is sufficiently large.

\begin{theorem}
Given a cluster $C=\{(\mathbf{e}_1,t_1),\cdots, (\mathbf{e}_n,t_n)\}$ and parameters $lb$, $ub$, $l$, $v$ and $\epsilon$. Let $X = \{x_1, \cdots, x_n\}$ of $C$, where $x_i=M(t_i,e)$ is binary for each $(\mathbf{e}_i,t_i)\in C$. Assume that $\xi$ satisfies
$$\xi \ge \frac{1}{2} - \sqrt{\frac{1}{4} + \frac{v ln l (6\hat{\sigma}^2 + 2 \epsilon)}{3 \epsilon^2}}$$
Then, the following inequality holds with at least $1-2l^n$ probability when $score(t_i)\le lb$ or $score(t_i)\ge ub$: $$Pr[g(t_i)\neq x_i]\le\max(lb+\epsilon,1-(ub-\epsilon)).$$
\label{theo:simvote}
\end{theorem}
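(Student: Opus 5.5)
The argument mirrors the proof of Theorem~\ref{theo:univote}, with Corollary~\ref{corollary:1} in place of Lemma~\ref{lemma:b-bound} and Lemma~\ref{lemma:simvote} connecting the per-tuple similarity score to a sample mean. Write $k=\xi n$ and, for a fixed unsampled tuple $t_i$, set $w_j=\text{sim}(\mathbf{e}_i,\mathbf{e}_j)/\sum_{t_k\in O}\text{sim}(\mathbf{e}_i,\mathbf{e}_k)$ for $t_j\in O$. These weights are nonnegative and sum to one, and $\text{score}(t_i)=\sum_{t_j\in O} w_j x_j$ is a weighted sample mean of the form treated in Corollary~\ref{corollary:1}. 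The constant $v$ is any number with $\max_j w_j\le v/k$; it measures how far the similarity weights are from uniform. Lemma~\ref{lemma:simvote} says that, averaged over $C\setminus O$, the scores equal the plain sample mean $\mu(X(O))$. This justifies treating $\text{score}(t_i)$ as an unbiased proxy for $\mu(X)$, in the same way that $\mu(O)$ was used in the \textsc{UniVote} analysis.

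\emph{Case 1: $\text{score}(t_i)\le lb$.} Then $g(t_i)=0$, so $Pr[g(t_i)\neq x_i]=Pr[x_i=1]=\mu(X)$, where $t_i$ is viewed as a uniformly random unsampled member of the cluster. Hence
\[
Pr\bigl[\mu(X)\ge lb+\epsilon\bigr]\le Pr\bigl[\mu(X)\ge \text{score}(t_i)+\epsilon\bigr]\le Pr\bigl[|\text{score}(t_i)-\mu(X)|\ge\epsilon\bigr].
\]
Corollary~\ref{corollary:1} bounds the last probability by $2\exp\bigl(-\frac{3k\epsilon^2}{(6\hat\sigma^2+2\epsilon)v}\cdot\frac{n-k}{n-1}\bigr)$.

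\emph{Case 2: $\text{score}(t_i)\ge ub$.} This is symmetric. Here $Pr[g(t_i)\neq x_i]=1-\mu(X)$, and the event $1-\mu(X)\ge 1-(ub-\epsilon)$ is contained in $\{|\text{score}(t_i)-\mu(X)|\ge\epsilon\}$, which has the same tail bound.

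\emph{Choice of $\xi$.} It remains to show that the hypothesis on $\xi$ makes the exponential at most $l^n$. I would use $\frac{n-k}{n-1}\ge\frac{n-k}{n}=1-\xi$, so the exponent is at least $\frac{3n\epsilon^2}{(6\hat\sigma^2+2\epsilon)v}\,\xi(1-\xi)$ in magnitude. The goal then reduces to the quadratic condition $\xi(1-\xi)\cdot\frac{3\epsilon^2}{v(6\hat\sigma^2+2\epsilon)}\ge -\ln l$, i.e.
\[
\xi^2-\xi+\frac{-v\ln l\,(6\hat\sigma^2+2\epsilon)}{3\epsilon^2}\le 0,
\]
whose roots are $\frac12\pm\sqrt{\frac14+\frac{v\ln l(6\hat\sigma^2+2\epsilon)}{3\epsilon^2}}$. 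The stated condition is the lower root, following the paper's convention from Theorem~\ref{theo:univote}, where $\ln l<0$ since $l<1$. So I would verify the inequality exactly in the form already used there, accepting the same sign conventions. This gives $2\exp(n\ln l)=2l^n$, and taking complements yields the bound with probability at least $1-2l^n$.

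\emph{Main obstacle.} The delicate step is applying Corollary~\ref{corollary:1} to $\text{score}(t_i)$. The weights depend on $t_i$ and on the sample itself, whereas the corollary treats them as fixed once the sample is drawn. Also, the corollary's $\hat\mu_w$ carries an extra $1/k$ factor, so its normalization does not literally match $\text{score}(t_i)$. I would handle this by conditioning on the embeddings, so the weights are deterministic functions of positions, and by absorbing the normalization mismatch into $v$ via $\max_j w_j\le v/k$. Lemma~\ref{lemma:simvote} serves as the centering justification that $\text{score}$ targets $\mu(X)$. The rest is routine algebra identical to the \textsc{UniVote} case.
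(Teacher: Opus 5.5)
Your outline is the paper's own argument step for step: the same two cases, Lemma~\ref{lemma:simvote} for centering, Corollary~\ref{corollary:1} for the tail, and the same quadratic in $\xi$. The paper's sketch makes the same leap at exactly the point you flag as delicate. That step is a genuine gap, and neither of your patches closes it. The weight on the $j$-th sampled tuple is $\text{sim}(\mathbf{e}_i,\mathbf{e}_{\pi(j)})/\sum_{t\in O}\text{sim}(\mathbf{e}_i,\mathbf{e}_t)$. It is a function of \emph{which} tuple $\pi(j)$ was drawn, i.e.\ of the same draw that determines $c_j=x_{\pi(j)}$. Conditioning on the embeddings of $C$ changes nothing, since they are fixed from the start. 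Conditioning on the embeddings of the drawn tuples essentially fixes the drawn tuples, so the $c_j$ lose exactly the randomness the corollary relies on. Because \textsc{SimVote} rests on similarity to $t_i$ being predictive of the label, $w_j$ and $c_j$ are correlated, and $\text{score}(t_i)$ is not centered at $\mu(X)$. Lemma~\ref{lemma:simvote} cannot repair this: it concerns the average of the scores over $C\setminus O$, not the distribution of an individual score. Concretely, split a cluster evenly into an embedding group $a$ with label $1$ and a group $b$ with label $0$, with $\text{sim}(a,a)=\text{sim}(b,b)=1$ and $\text{sim}(a,b)=\delta\in(0,1)$. Every weight is at most $1/(\delta k)$, so $v=1/\delta$ is a legitimate constant, and the scores averaged over $C\setminus O$ are indeed close to $1/2=\mu(X)$. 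Yet for $t_i$ in group $a$, $\text{score}(t_i)=k_a/(k_a+\delta k_b)\approx 1/(1+\delta)$ with overwhelming probability, where $k_a,k_b$ count the sampled tuples in each group. Hence $Pr[|\text{score}(t_i)-\mu(X)|\ge\epsilon]\to 1$ for $\epsilon<\frac{1-\delta}{2(1+\delta)}$, while the bound you invoke tends to $0$ as $k$ grows. The inequality is false here, not merely unjustified.

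Your second patch fails as well. The $1/k$ in the corollary's $\hat{\mu}_w$ rescales the estimator itself, whereas $v$ enters only through the bound on $\max_j w_j$. So $|\hat{\mu}_w-\mu|\ge\epsilon$ and $|\text{score}(t_i)-\mu|\ge\epsilon$ remain different events for every choice of $v$. The corollary must be restated with $\hat{\mu}_w=\sum_j w_jc_j$, which is what its appendix proof actually manipulates, and even then it needs weights fixed independently of the draw. What is provable is a different concentration. View $\text{score}(t_i)$ as the ratio of the two \emph{unweighted} sample means $\frac1k\sum_{t\in O}s_{it}x_t$ and $\frac1k\sum_{t\in O}s_{it}$, with $s_{it}=\text{sim}(\mathbf{e}_i,\mathbf{e}_t)$. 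A without-replacement Bernstein bound on numerator and denominator (given similarities bounded away from $0$) makes it concentrate around the similarity-weighted mean $\sum_{t\in C}s_{it}x_t/\sum_{t\in C}s_{it}$, not around $\mu(X)$. That changes the statement, and it also changes the case analysis. The identification $Pr[g(t_i)\neq x_i]=\mu(X)$ (resp.\ $1-\mu(X)$) is inherited from \textsc{UniVote}, where the decision is constant on the cluster. For \textsc{SimVote}, the event $\text{score}(t_i)\le lb$ selects $t_i$ in a label-dependent way. Finally, the $\xi$ step is one-sided. With $c=-v\ln l\,(6\hat{\sigma}^2+2\epsilon)/(3\epsilon^2)>0$, the condition $\xi^2-\xi+c\le 0$ holds only for $\xi$ between the two roots, which are real only if $c\le 1/4$. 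So $\xi\ge\frac12-\sqrt{\frac14-c}$ alone does not bound the exponent; at $\xi=1$ the factor $\frac{n-k}{n-1}$ vanishes. ``Accepting the same sign conventions'' skips exactly this check.
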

Theorem~\ref{theo:simvote} extends a theoretical guarantee to \textsc{SimVote}. Although each tuple receives a personalized score, Lemma~\ref{lemma:simvote} ensures that these scores concentrate around the same mean as uniform sampling. The condition on $\xi$ in Theorem~\ref{theo:simvote} is slightly more complex than in Theorem~\ref{theo:univote} due to the variance derived from the weighting mechanism. Despite this, the final guarantee takes the same form.

\proofsketch
\sloppy
We have $k=\xi n\ge {n}/{2} - \sqrt{n^2/4 + {n^2 v lnl(6\hat{\sigma}^2 + 2 \epsilon)/{3\epsilon^2}}}$.
For the case $score(t_i)\le lb$, we have $g(t_i)=0$. Hence, $Pr[g(t_i)\neq x_i]=Pr[x_i\neq 0]=Pr[x_i= 1]={\mu}(X)$.
Then $Pr[Pr[g(t_i)\neq x_i]\ge lb+\epsilon]=Pr[\mu(X)\ge lb+\epsilon]\le Pr[\mu(X)\ge score(t_i)+\epsilon]\le Pr[|\mu(X)-score(t_i)|\ge \epsilon]$.
According to Lemma~\ref{lemma:simvote}, we have $\mu(score(t_i))=\mu(X)$ and based on Corollary~\ref{corollary:1}, we have $Pr[|\mu(X)-score(t_i)|\ge \epsilon]\le  2exp(\frac{-3k\epsilon^2}{({6}\hat{\sigma}^2 + {2\epsilon})v}\cdot\frac{n - k}{n - 1})
< 2\exp(n\ln l)=2l^n$.
For another case that $score(t_i)\ge ub$, we have $g(t_i)=1$. Hence, $Pr[g(t_i)\neq x_i]=Pr[x_i\neq 1]=Pr[x_i=0]=1-{\mu}(X)$.
Then we have $Pr[Pr[g(t_i)\neq x_i]\ge 1-(ub-\epsilon)]=Pr[1-\mu(X)\ge 1-(ub-\epsilon)]=Pr[\mu(X)\le ub-\epsilon]\le Pr[\mu(X)\le score(t_i)-\epsilon]\le Pr[|\mu(X)-score(t_i)|\ge \epsilon]$.
Based on the preceding analysis, we have already established that $Pr[|\mu(X)-score(t_i)|\ge \epsilon]\le 2 l^n$.

\section{Experimental Study}
\label{sec:exp}

In this section, we present the experimental setup in \cref{sec:exp:setup} and evaluate our approaches comprehensively in the following facets: 
\ding{202} Compare the baselines on the semantic filter and present the results of overall effectiveness and efficiency in \cref{sec:exp-filter}.
\ding{203} Analyze the effects of hyper-parameters on CSV in \cref{sec:exp-parameter}.
\ding{204} Investigate the efficacy of re-clustering in CSV in \cref{sec:exp:reclustering}.
\ding{205} Explore the disparity between practical and theoretical results in \cref{sec:exp-theo}.
\ding{206} Evaluate the generalizability of CSV on diverse synthetic queries in \cref{sec:exp:synthetic}. 
\ding{207} Study the impacts of employing different embedding models and LLMs in appendix~\ref{app:exp}. 

\subsection{Experimental Setup}
\label{sec:exp:setup}

\begin{table*}[t]
    \caption{Profile of Test Semantic Queries}
    \vspace{-3mm}
    \centering
    \footnotesize
    \label{tab:queryset}
    \setlength{\tabcolsep}{2pt}
    \begin{tabular}{c c|c l c}
    \toprule
     \textbf{Dataset} & \textbf{Query} & \textbf{Query Type}& \textbf{Predicate} & \textbf{Ground-truth}\\
    \midrule
   \multirow{3}{*}{IMDB-Review} & RV-Q1     & semantic filter & The \{\kwnospace{review}\} is positive. & available  \\
        & RV-Q2    & semantic filter & The \{\kwnospace{review}\} explicitly recommends that others watch the movie. & GPT-4o \\
        & RV-Q3    &  semantic filter & The \{\kwnospace{review}\} focuses on factual descriptions like technical details. & GPT-4o \\
    \multirow{3}{*}{Codebase} & CB-Q1  & semantic filter & The \{\kwnospace{AboutMe}\} contains a link to social media. & GPT-4o \\
        & CB-Q2    & semantic filter & {\makecell[l]{The \{\kwnospace{AboutMe}\} text suggests that the user has either experience in \\computer science or expresses an interest in the field of computer science.}} & GPT-4o\\
        & CB-Q3  & semantic filter &  The \{\kwnospace{AboutMe}\} focuses on factual identification details. & GPT-4o \\
    TC &   TC     & semantic filter  & The \{\kwnospace{comment}\} contains hate speech or offensive language. & available \\
    Airdialogue &AD-Q1/Q2/Q3/Q4 & semantic filter & The \{\kwnospace{dialogue}\} has outcome \kwnospace{book}/\kwnospace{cancel}/\kwnospace{no\_flight}/\kwnospace{no\_reservation}. & available \\
    Fever & Fever    & multi-column filter & The \{\kwnospace{claim}\} is supported by the \{\kwnospace{evidence}\}. & available \\
    \bottomrule
    \end{tabular}
\end{table*}

\stitle{Datasets.} 
Following~\cite{patel2024lotus,liu2023,biswal2024text2sql, DBLP:journals/pacmmod/ZeighamiSP25} we evaluate five datasets for semantic filter.
\textbf{\ding{172} IMDB-Review:} The IMDB-Review~\cite{IMDB-review} dataset is a binary sentiment classification benchmark derived from the movie review domain.
It comprises 50,000 labeled reviews, which are full-length user-written reviews from Internet Movie Database (IMDB), each expressing either a positive or negative opinion about a film. The dataset is evenly split between the two sentiment classes.
\textbf{\ding{173} Codebase:} The Codebase dataset~\cite{bird} consists of a structured table named \kw{user} with 9,378 rows, which includes a column titled \kw{AboutMe} containing long-form self-introduction texts. 
\textbf{\ding{174} Airdialogue:} The Airdialogue dataset~\cite{airdialogue} is a classification task derived from the airline ticketing domain. It comprises 402,035 unique dialogs, each annotated with one of five possible categories: \kw{book} $(51.40\%)$, \kw{cancel} $(1.46\%)$, \kw{no\_flight} $(23.08\%)$, \kw{no\_reservation} $(23.89\%)$, and \kw{change} $(0.17\%)$.
\textbf{\ding{175} TC:} The Twitter Hate Speech and Offensive Language (TC) dataset~\cite{DBLP:conf/icwsm/tc} is a classification benchmark derived from the social media domain. It comprises 24,783 labeled tweets, each annotated into one of three categories: \kw{hate speech}, \kw{offensive language}, or \kw{neither}. The dataset was collected from Twitter using hate speech-related keywords and subsequently annotated via crowdsourcing. 
\textbf{\ding{176} Fever:} The Fact Extraction and Verification (Fever) dataset~\cite{DBLP:conf/naacl/fever} is a large-scale benchmark for claim verification in the fact-checking domain, constructed by altering sentences extracted from Wikipedia. In this work, we use a cleaned version from which  incomplete entries have been removed to ensure consistency and reliability.

\stitle{Queries.}
Table~\ref{tab:queryset} summarizes the 14 test queries used in our experiments. 
For RV-Q2/Q3 and CB-Q1/Q2/Q3, ground-truth labels are not publicly available; we therefore use GPT-4o as a proxy to obtain these labels via direct LLM calls. 
Among the 12 queries, Fever is a multi-column semantic filter that references two columns, the remaining queries are single-column predicates.
The task on Airdialogue is a multi-class classification problem that determines the category of a dialog. We decompose the task into four binary semantic filter queries, AD-Q1/Q2/Q3/Q4, where each query is a predicate for one of the four classes: \kwnospace{book}/\kwnospace{cancel}/\kwnospace{no\_flight}/\kwnospace{no\_reservation}.
In addition to predicates shown in Table~\ref{tab:queryset}, we prepend a following instruction to the prompt for all Airdialogue queries: `The following is a dialog between an airline ticketing agent and a customer. The outcome of the dialog will be one of the following 5 categories. [book]: the agent has booked a flight for the customer (not including the flight change). [cancel]: the agent canceled the existing valid reservation for the customer. [change]: the agent changed the existing flight reservation of the customer and successfully found a new one. [no\_reservation]: the customer wants to change or cancel the flight, but there is no valid reservation under this customer. [no\_flight]: the customer aims to book a flight from departure to destination but finds no flights between departure and destination.'

\begin{table*}[t]
\centering
\footnotesize
\caption{The Accuracy and F1 Score of Semantic Filter Queries}
    \vspace{-3mm}
\begin{tabular}{c | lc c c c cccccc c c}
\toprule
 & \textbf{Method} & \textbf{RV-Q1} & \textbf{RV-Q2} & \textbf{RV-Q3} & \textbf{CB-Q1} & \textbf{CB-Q2} & \textbf{CB-Q3} & \textbf{AD-Q1} & \textbf{AD-Q2} & \textbf{AD-Q3} & \textbf{AD-Q4} & \textbf{TC} & \textbf{Fever} \\
\midrule
\multirow{5}{*}{\textbf{Acc.}} & Reference & 0.9480 & 0.8229  & 0.9097  & 0.9731 & 0.7782  & 0.7033 & 0.8972 & 0.9139 & 0.8112 & 0.8737 & 0.7203  &  0.8258\\
 & \lotus  & 0.9480  & 0.8679  & 0.9948  & 0.9595 & 0.6853 & 0.6996 & 0.8938 & 0.9066 & 0.9192 & 0.8690  & 0.5196  &  0.7396\\
& \BARGAIN   &   0.9186   & 0.7881  & 0.9948   &   0.9681      & 0.5308 & 0.6493  & 0.7423   & 0.8972 & 0.3885 & 0.9563 & 0.7872 & 0.5177\\
& \CSVU    & 0.9307  & 0.7723 & 0.9527 & 0.9673 & 0.7810 & 0.7090 & 0.8948 & 0.9193 & 0.8723 & 0.8975  & 0.7477  & 0.7526 \\
& \CSVS  & 0.9305  & 0.7718  & 0.9398  & 0.9671 & 0.7813 & 0.7376  & 0.8968 & 0.9193 & 0.8722 & 0.8977  & 0.7799  & 0.7524 \\
\midrule
\multirow{5}{*}{\textbf{F1}} & Reference & 0.9489 & 0.7633 & 0.0880 & 0.6344 & 0.7557 & 0.6374 & 0.9056 & 0.8165 & 0.7002 & 0.1841  & 0.8060  & 0.8947\\
& \lotus & 0.9489 & 0.7113 &0.0972  & 0.4947 & 0.6115 & 0.3107 & 0.9029 & 0.8048 & 0.8393 & 0.1788  & 0.6007  & 0.7955 \\
& \BARGAIN   & 0.9140   & 0.5616 & 0.0296    & 0.5954  & 0.1979 & 0.0947 & 0.6878 & 0.7722 & 0.3429  &  0.3697 & 0.8589  & 0.5831 \\
& \CSVU & 0.9336 & 0.7097 & 0.0880  & 0.0623 & 0.7601 & 0.6340 & 0.9039 & 0.8249 & 0.7800 & 0.2179   & 0.8348 & 0.8588 \\
& \CSVS & 0.9335 & 0.7092 & 0.0906 & 0.0679 & 0.7555 & 0.6589  & 0.9052 & 0.8249 & 0.7799 & 0.2175  & 0.8550 & 0.8586 \\
\bottomrule
\end{tabular}

\label{t:filter_overall}
\end{table*}

\begin{figure*}
    \begin{tabular}[t]{c}
        \centering
        \includegraphics[width=0.6\columnwidth]{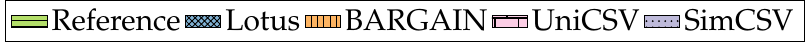}
	\end{tabular}
    \begin{tabular}[h]{c}
        \subfigure[Query Time]{
				\includegraphics[width=0.67\columnwidth]{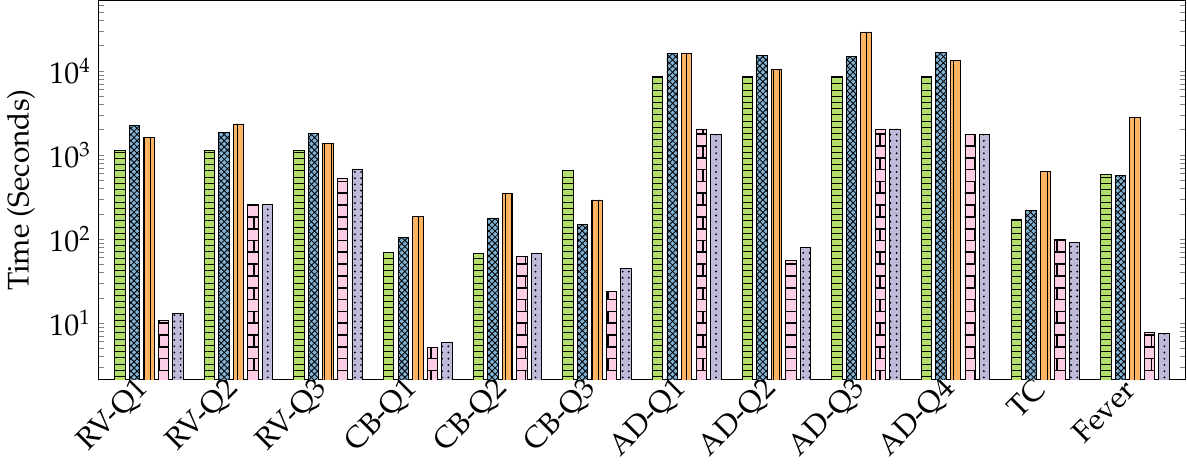}
			\label{fig:T_heatmap}
		} 
        \subfigure[Total \# LLM Calls] {
			\includegraphics[width=0.67\columnwidth]{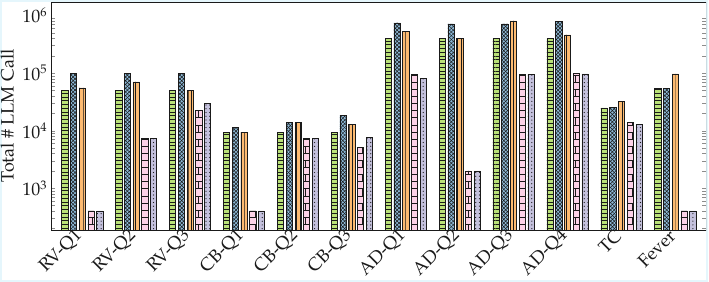}
			\label{fig:F_heatmap}
		} 
        \subfigure[Total \# Token] {
			\includegraphics[width=0.67\columnwidth]{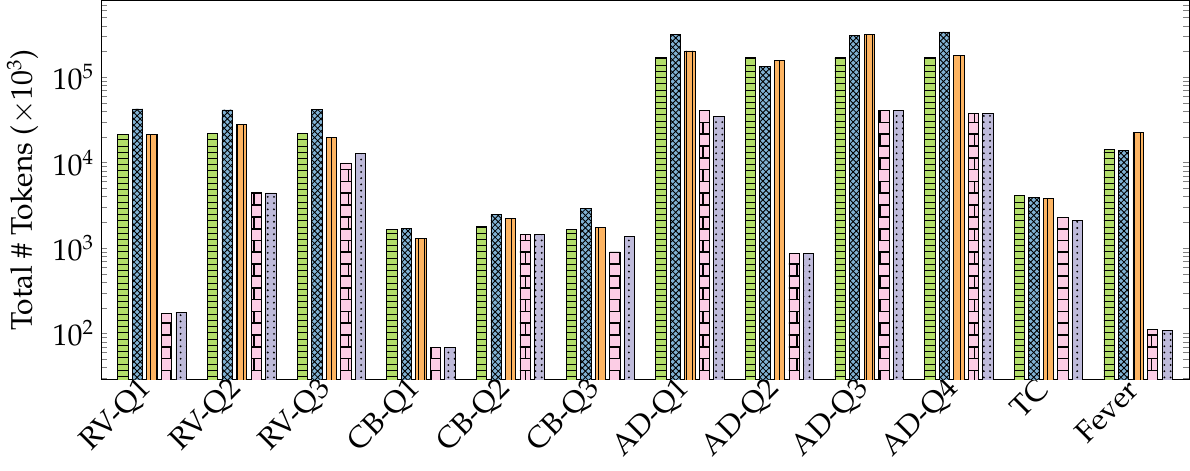}
			\label{fig:F_heatmap}
		}
	\end{tabular}
    
    \vspace{-5mm}
	\caption{Comparisons of Query Time, \# LLM Calls and Token Usage for Semantic Filter}
        \label{fig:casestudy}
    \vspace{-3mm}
\end{figure*}

\stitle{Algorithms.}
We compare our approaches with the following existing methods.
\textbf{\ding{172} \basefilter}: Following the definition in Eq.~\eqref{eq:sem_filter}, we sequentially invoke the LLM for each tuple and retrieve the results to establish a baseline for the semantic filter.
\textbf{\ding{173} \lotus}: \lotus introduces state-of-the-art approaches for the semantic filter~\cite{patel2024lotus}. We adopt the approach as a baseline for the respective tasks.
\textbf{\ding{174} \BARGAIN}: \BARGAIN~\cite{DBLP:journals/pacmmod/ZeighamiSP25} adopts a similar model cascade framework as~\lotus while providing strong theoretical guarantees on  Precision, Recall, and Accuracy. We configure \BARGAIN with an accuracy target of 0.85 and a tolerance of 0.05.
For both \lotus and \BARGAIN we use the default hyper-parameters provided in their respective source code repositories. Both methods use LlaMA-3.2-3B for initial filtering and LlaMA-3.1-8B as the larger model for secondary processing.

For our approaches,
\textbf{\ding{176} \CSVU} and \textbf{\ding{177} \CSVS} are two algorithms for the semantic filter (Algorithm~\ref{alg: semantic_filter}), instantiated with \textsc{UniVote} (Algorithm~\ref{alg:univote}) and \textsc{SimVote} (Algorithm~\ref{alg:simvote}), respectively.
By default, we employ LLaMA-3.1-8B~\cite{touvron2023llama} as the backbone LLM and E5-Large~\cite{wang2022e5} as the embedding model. 
When input text exceeds the context length supported by the embedding model, it is segmented into smaller chunks, each embedded independently. The final embedding is computed as the mean of all segment embeddings to ensure a consistent representation across variable-length inputs.
For clustering, we employ the {K-means algorithm from~\cite{PedregosaVGMTGBPWDVPCBPD11}} and set the default number of clusters to 4.
To prevent excessive re-clustering during the filtering process, the maximum number of re-cluster iterations is capped at 3. Once a voting failure occurs in the remaining clusters, we resort to sequential LLM invocation to ensure bounded errors. 
To better accommodate the diversity of predicates, we further introduce BM25 score~\cite{JonesWR00} that captures lexical similarity, as a distance metric for clustering, used jointly with the embedding Euclidean distance. A hyper-parameter $\lambda$ is introduced to balance the relative contributions between distances, i.e., $\lambda$ for L2 distance and $(1 - \lambda)$ for BM25.
We set $\lambda = 0.4$ for CB-Q1 and TC, and $\lambda = 1.0$ for the remaining queries.
Other hyper-parameters are as follows, $lb = 0.15$, $\xi=0.005$, $\text{min\_sample}=101$.

\stitle{Evaluation Metrics.}
We evaluate different methods by Accuracy, Recall, F1 score; and efficiency by execution time, \# LLM calls, and token consumption.
This multifaceted evaluation enables a balanced analysis of both predictive effectiveness and computational efficiency.
For efficiency, {execution time} measures the end-to-end wall-clock time required to complete the semantic operator.
{\# LLM call} refers to the total number of invocations of the LLMs. For consistency, we count calls based on the number of prompts issued, irrespective of whether multiple tuples are batched within a single prompt.
{Token consumption} captures the total number of tokens processed, including both input (prompt) and output tokens.

\stitle{Implementation Details}
The experiments are conducted locally on a single 80GB NVIDIA A100 GPU, with vLLM~\cite{kwon2023efficientvllm} serving as the inference engine. For all approaches, batch prompt processing is handled using the \textit{batch\_completion} method from LiteLLM~\cite{litellm2023}, with a temperature of 0.7 and a maximum output token limit of 32.

\begin{table*}[t]
\centering
\footnotesize
\caption{Hyper-parameter Effects on the Performance of \CSVU and \CSVS}
    \vspace{-3mm}
\begin{tabular}{lr|cccccc|cccccc}
\toprule
\multirow{3}{*}{\bf Parameter} & \multirow{3}{*}{\bf Value}
& \multicolumn{2}{c}{\textbf{RV-Q1}} & \multicolumn{2}{c}{\textbf{CB-Q2}} & \multicolumn{2}{c}{\textbf{AD-Q2}} & \multicolumn{2}{c}{\textbf{RV}} & \multicolumn{2}{c}{\textbf{CB-Q2}} & \multicolumn{2}{c}{\textbf{AD-Q2}} \\
 & & \textbf{Acc.} & \textbf{F1} & \textbf{Acc.} & \textbf{F1} & \textbf{Acc.} & \textbf{F1}& \textbf{Acc.} & \textbf{F1} & \textbf{Acc.} & \textbf{F1} & \textbf{Acc.} & \textbf{F1} \\
\cmidrule{3-14}
& & \multicolumn{6}{c}{\textbf{\CSVU}} & \multicolumn{6}{|c}{\textbf{\CSVS}} \\
\midrule
\multirow{4}{*}{\# clusters} 
 & 2  & 0.9079 & 0.9138 & 0.7780 & 0.7492 & 0.9168 & 0.8210 & 0.9082 & 0.9140 & 0.7739 & 0.7450 & 0.9168 & 0.8209 \\
 & 4  & 0.9307 & 0.9336 & 0.7852 & 0.7601 & 0.9192 & 0.8248 & 0.9305  & 0.9335 & 0.7803 & 0.7539 & 0.9193 & 0.8249 \\
 & 8  & 0.9367 & 0.9367 & 0.7829 & 0.7597 & 0.9202 & 0.8264 & 0.9388  & 0.9384 & 0.7774 & 0.7534 & 0.9202 & 0.8264 \\
 & 16 & 0.9424 & 0.9438 & 0.7815 & 0.7564 & 0.9200 & 0.8262 & 0.9379 & 0.9396 & 0.7788 & 0.7546 & 0.9199 & 0.8260 \\
\midrule
\multirow{5}{*}{\shortstack{\\sampling\\rate $\xi$ ($\text{\textperthousand}$)}} 
 & 5 & 0.9307 & 0.9336 & 0.7852 & 0.7601 & 0.9192 & 0.8248 & 0.9305 & 0.9335 & 0.7803 & 0.7539 & 0.9193 & 0.8249 \\
 & 10 & 0.9304 & 0.9334 & 0.7794 & 0.7518 & 0.9192 & 0.8248 & 0.9459 & 0.9470 & 0.7756 & 0.7493 & 0.9193 & 0.8249 \\
 & 15 & 0.9304 & 0.9334 & 0.7802 & 0.7537 & 0.9193 & 0.8250 & 0.9439 & 0.9453 & 0.7828 & 0.7568 & 0.9192 & 0.8248 \\
 & 20 & 0.9308 & 0.9337 & 0.7803 & 0.7536 & 0.9191 & 0.8246& 0.9306 & 0.9335 & 0.7827 & 0.7565 & 0.9191 & 0.8248 \\
 & 25 & 0.9307  & 0.9336 & 0.7788 & 0.7513 & 0.9192 & 0.8248 & 0.9307  & 0.9337 & 0.7861 & 0.7603 & 0.9191 & 0.8246 \\
\midrule
\multirow{4}{*}{\shortstack{lower\\bound}}
 & 0.10  & 0.9302 & 0.9333 & 0.7803 & 0.7547 & 0.9193 & 0.8249 & 0.9312 & 0.9341 & 0.7814 & 0.7551 & 0.9193 & 0.8249 \\
 & 0.15 & 0.9307 & 0.9336 & 0.7852 & 0.7601 & 0.9192 & 0.8248 & 0.9305  & 0.9335 & 0.7803 & 0.7539 & 0.9193 & 0.8249 \\
 & 0.20  & 0.9304 & 0.9334 & 0.7852 & 0.7595 & 0.9193 & 0.8249 & 0.9304 & 0.9334 & 0.7839 & 0.7575 & 0.9193 & 0.8248 \\
 & 0.50  & 0.9304 & 0.9334 & 0.7959 & 0.7956 & 0.9193 & 0.8249 & 0.9304  & 0.9334 & 0.7953 & 0.7950 & 0.9193 & 0.8345 \\
\bottomrule
\end{tabular}
\label{t:hyperparameters}
\end{table*}
\subsection{Overall Evaluation of Semantic Filter}
\label{sec:exp-filter}

In this section, we compare our two approaches, \CSVU and \CSVS, against the baselines on both efficiency and effectiveness. 

\stitle{Efficiency.} 
Fig.~\ref{fig:casestudy} illustrates LLM calls, execution time, and token consumption of different approaches. \CSVU and \CSVS achieve substantial efficiency gains by reducing LLM calls  by 1.28-$200\times$ compared to \basefilter, 1.81-$355\times$ compared to \lotus, and 1.68-$200\times$ compared to \BARGAIN. Since LLM calls dominate both runtime and token usage, these reductions translate directly into one to three orders of magnitude improvements across all datasets. For instance, on RV-Q1, \CSVU and \CSVS require only 404 calls, completing in under 13 seconds with approximately 170k tokens. In contrast, the baselines require tens of thousands of calls, resulting in runtimes exceeding 1,000 seconds and token usage exceeding 20 million.
The higher cost incurred by \lotus and \BARGAIN stems from their two-stage model-cascaded filtering. In many cases, the smaller 3B model fails to learn a threshold with sufficient precision, leading to more than half of the data being routed again to the larger 8B model.
In the worst cases, i.e., RV-Q1 for \lotus and AD-Q3 for \BARGAIN, the smaller model fails to filter any tuple, thereby increasing cost across all three metrics.

\stitle{Effectiveness.}
Table~\ref{t:filter_overall} reports the Accuracy and F1 scores of all approaches across the 12 semantic filter queries. Overall, \CSVU and \CSVS demonstrate comparable performance with \basefilter, and outperform \lotus and \BARGAIN, except CB-Q1.
In contrast, \BARGAIN exhibits highly unstable quality compared to \basefilter: 
while it achieves strong F1 scores and Accuracy on AD-Q4, it suffers from substantial degradation on CB-Q2 and AD-Q3. We speculate that this instability is because it uses the logits of LLM as the confidence signals of the predication, which is unreliable due to the inherent overconfidence of neural networks~\cite{DBLP:conf/icml/GalG16, DBLP:conf/naacl/GengCWKNG24}.
One notable exception is the lower F1 score of our approaches on CB-Q1. This query explicitly mentions the entity `social link', thereby possesses an extremely low selectivity of 0.033.
The rare positives make Recall unstable with sampling failures under a default setting of $lb=0.15$ (many samples contain few or no positives) and subsequent voting overconfidence bypasses the re-clustering. When we change $lb$ to 0.01, the F1 score reaches 0.5825, but around half of the tuples are processed by linear LLM invocation. 

The distinction between \CSVU and \CSVS is marginal in terms of Accuracy and F1 scores on well-clustered datasets, indicating their comparable effectiveness. In less clustered cases like CB-Q2 and AD-Q1, \CSVS exhibits a slight edge, showcasing more resilience to noisy or ambiguous cluster boundaries. This robustness enhances filter efficiency under challenging conditions.
Moreover, \CSVU and \CSVS exhibit strong stability, consistently decreasing time and token consumption while maintaining effectiveness across datasets, queries, and varying sizes, from thousands (e.g., CB) to hundreds of thousands (e.g., AD).

\subsection{Hyper-parameters Analysis}
\label{sec:exp-parameter}
We analyze three hyper-parameters, number of clusters, sample ratio, and lower bound, that affect the effectiveness and efficiency of \CSVU and \CSVS as follows.

\stitle{(1) Number of Clusters.}
As shown in Table~\ref{t:hyperparameters}, enlarging the cluster size enhances practical performance, although it does not impact the theoretical error bound.
Accuracy and F1 score notably improve when \#clusters range from 2 to 4, with gains diminishing thereafter.

For efficiency, depicted in Fig.~\ref{fig:para_filter}, the number of clusters does not affect the total LLM calls, such as on RV-Q1 and AD-Q2, as the sample ratio dictates queries per cluster, maintaining a stable sampling volume. In smaller datasets like CB-Q2, the number of samples per cluster is primarily regulated by \textit{min\_sample}, resulting in a linear increase in total LLM calls with more clusters.
Furthermore, in AD-Q2, setting the cluster count to 2 triggers re-clustering due to insufficient semantic distinction, leading to increased LLM calls, runtime, and token usage.

\begin{figure}[t]
    \begin{tabular}[t]{c}
        \centering
        \includegraphics[width=0.6\columnwidth]{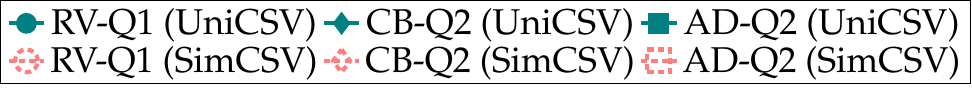}
	\end{tabular}
    \begin{tabular}[t]{c}
        \subfigure[\# Clusters] {
				\includegraphics[width=0.32\columnwidth]{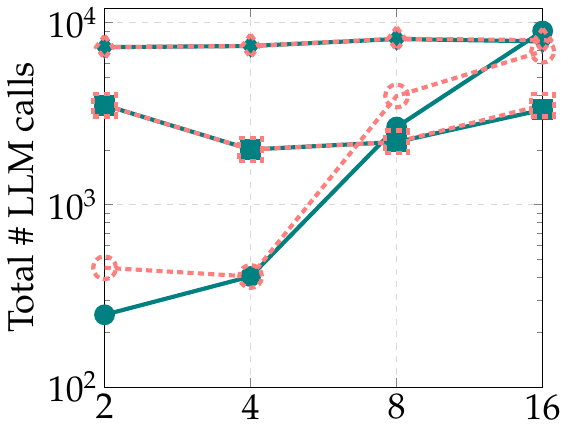}
			\label{fig:T_heatmap}
		} 
        \hspace{-0.3cm}
        \subfigure[Sampling Rate ($\text{\textperthousand}$)] {
			\includegraphics[width=0.32\columnwidth]{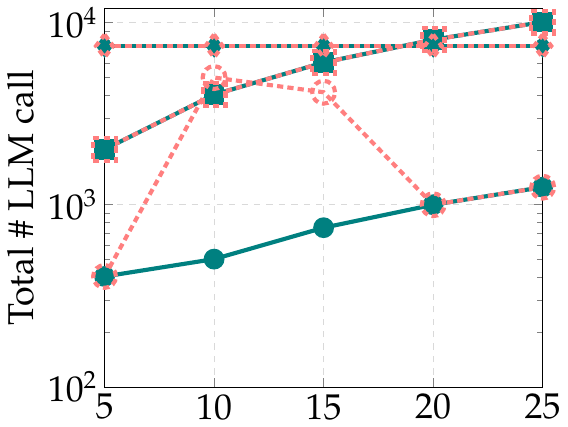}
			\label{fig:F_heatmap}
		}
        \hspace{-0.3cm}
        \subfigure[lower bound $lb$] {
			\includegraphics[width=0.32\columnwidth]{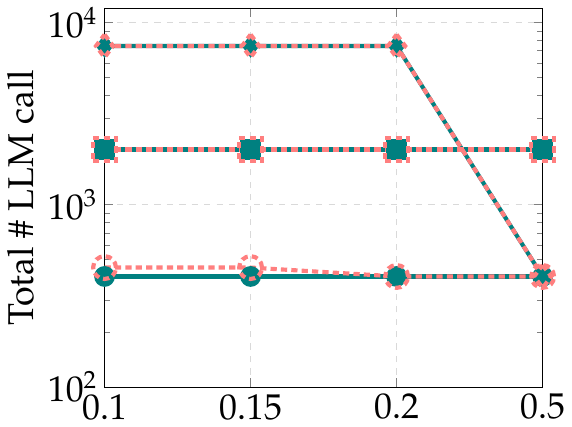}
			\label{fig:F_heatmap}
		}
	\end{tabular}
    \begin{tabular}[t]{c}
        \subfigure[\# Clusters] {
				\includegraphics[width=0.32\columnwidth]{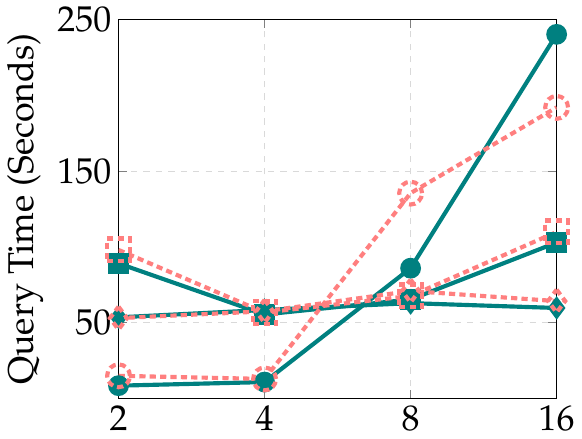}
			\label{fig:T_heatmap}
		} 
        \hspace{-0.3cm}
        \subfigure[Sampling Rate ($\text{\textperthousand}$)] {
			\includegraphics[width=0.32\columnwidth]{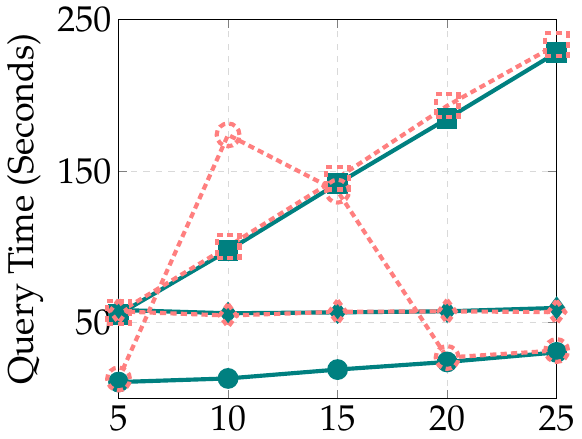}
			\label{fig:F_heatmap}
		}
        \hspace{-0.3cm}
        \subfigure[lower bound $lb$] {
			\includegraphics[width=0.32\columnwidth]{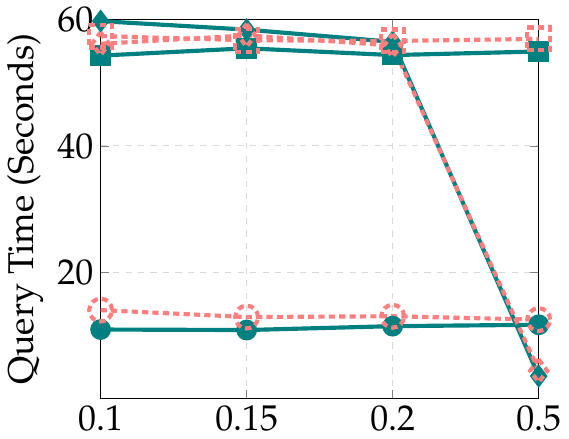}
			\label{fig:F_heatmap}
		}
	\end{tabular}
     \begin{tabular}[t]{c}
        \subfigure[\# Clusters] {
				\includegraphics[width=0.32\columnwidth]{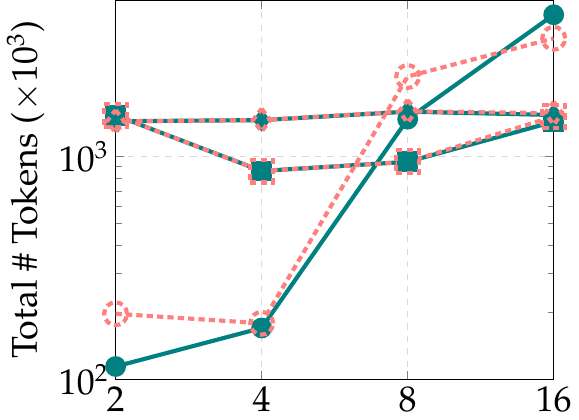}
			\label{fig:T_heatmap}
		} 
        \hspace{-0.3cm}
        \subfigure[Sampling Rate ($\text{\textperthousand}$)] {
			\includegraphics[width=0.32\columnwidth]{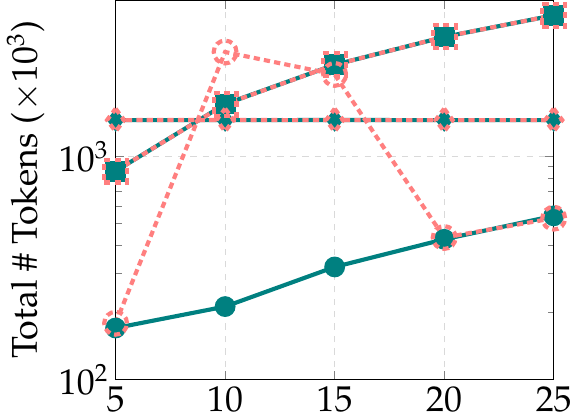}
			\label{fig:F_heatmap}
		}
        \hspace{-0.3cm}
        \subfigure[lower bound $lb$] {
			\includegraphics[width=0.32\columnwidth]{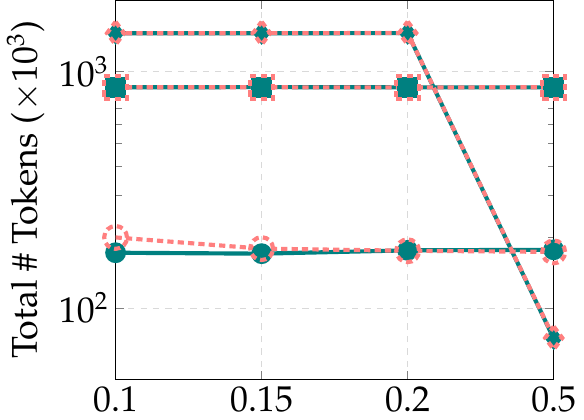}
			\label{fig:F_heatmap}
		}
	\end{tabular}
    \vspace{-3mm}
	\caption{Hyper-parameter Effects on Query Time, \# LLM Calls and Token Usage for Semantic Filter}
        \label{fig:para_filter}
\end{figure}

\stitle{(2) Sample Ratio.}
As indicated in Table~\ref{t:hyperparameters}, the sample ratio has almost no impact on the Accuracy and F1 score of \CSVU and \CSVS, suggesting that a small sampling ratio suffices. We will delve into this further in the subsequent experiment in \cref{sec:exp-theo}.

In Fig.~\ref{fig:para_filter}, an increase in the sample ratio generally leads to a linear growth in LLM calls across most datasets. While on CB-Q2, LLM calls remain consistent despite the increase in sample ratio. This stability stems from the relatively small size of the dataset, where the sampling number is \textit{min\_sample} in most clusters.
Notably, on RV-Q1 with \CSVS, a significant rise in LLM calls occurs when the sampling rate reaches $10\text{\textperthousand}$. This surge is attributed to partial fallbacks to linear processing, which occurs even within individual clusters. Unlike \CSVU, \CSVS employs a more adaptable voting strategy that may trigger more LLM queries when consensus is weak, even within internally coherent clusters. This illustrates the sensitivity of \CSVS to intra-cluster disagreement, particularly when the sampling density exceeds a threshold that reveals ambiguous cases.

\stitle{(3) Lower Bound.}
By default, we set $ub=1-lb$. The number of LLM calls, Accuracy, and F1 score remain steady when $lb$ is no greater than $0.2$. Once $lb$ hits 0.5 - enforcing voting irrespective of confidence - the LLM call count dramatically decreases, e.g., on CB-Q2, as this $lb$ value averts linear processing and utilizes clustering. Furthermore, on the CB-Q2 query, setting $lb$ to 0.5, which mandates voting solely based on consensus regardless of confidence, yields final accuracy and F1 scores surpassing those of the baselines. This implies that in situations where direct LLM inference is unreliable, i.e., on CB-Q2 queries, leveraging cluster-based consensus can enhance filtering performance beyond what traditional methods can achieve.
These results illustrate that Accuracy and F1 scores stay relatively constant across a broad range of values for these parameters, showcasing the algorithm's robustness to moderate fluctuations in sampling and voting thresholds.

\subsection{Impact of Re-clustering}
\label{sec:exp:reclustering}

We assess the impact of the recursive re-clustering mechanism through an ablation study, comparing both effectiveness metrics (Accuracy and F1) and efficiency metrics (\# LLM calls and re-clustering time) with and without re-clustering. We use the default $lb = 0.15$ as re-clustering is enabled.
When re-clustering is disabled, we fix $lb = ub = 0.5$. This configuration forces the algorithm to commit to a voting decision regardless of confidence, bypassing the recursive re-clustering mechanism.
The results are summarized in Table~\ref{tab:recluster}. 
We observe two distinct behavioral patterns across the evaluated queries. For queries RV-Q1, CB-Q1, AD-Q2, re-clustering is not triggered, yielding nearly identical performance between the two configurations. In RV-Q1 and AD-Q2, the initial clustering produces sufficiently pure clusters where the voting confidence consistently exceeds the threshold bounds, resulting in zero overhead in both additional LLM calls and runtime. In CB-Q1, the sampling failure from the low selectivity leads to over-confident voting results, and the re-clustering is not triggered. 
When using $lb=0.01$ for CB-Q1, re-clustering will be triggered and the accuracy is improved.
In contrast, queries RV-Q3, CB-Q2, and CB-Q3 exhibit substantial degradation when re-clustering is disabled. Accuracy drops by up to 9.7\% and F1 scores decline by more than 12\% in CB-Q3. Furthermore, the number of LLM calls increases significantly when re-clustering is enabled, indicating that a substantial portion of tuples in these cases require finer-grained cluster refinement. This underscores the mechanism's critical role in correcting low-confidence assignments and sustaining prediction quality for queries where the initial clustering fails to separate semantic categories cleanly.
Importantly, despite the increase in LLM calls, the computational overhead of re-clustering remains modest. Even in the most demanding scenarios, re-clustering accounts for less than 3.3\% of total runtime (CB-Q3), demonstrating that the iterative refinement process adds negligible latency relative to the dominant cost of LLM inference.

\begin{table}[t]
    \centering
    \scriptsize
    \caption{Ablation Study of Re-clustering}
    \vspace{-1mm}
     \label{tab:recluster}
        \setlength{\tabcolsep}{2pt}
        \begin{tabular}{c | l cccccc}
        \toprule
          & Method & \textbf{RV-Q1} & \textbf{RV-Q3} & \textbf{CB-Q1} & \textbf{CB-Q2} & \textbf{CB-Q3} & \textbf{AD-Q2}  \\
        \midrule
        \multirow{4}{*}{\textbf{Acc.}} & \CSVU (w/o RC) & 0.9229 & 0.9939  & 0.9677 & 0.7475 & 0.6403 & 0.9243  \\
            &   \CSVS (w/o RC) & 0.9230 & 0.9939 & 0.9677 & 0.7469 & 0.6403 & 0.9240  \\
            &   \CSVU (w/ RC) & 0.9301 & 0.9527 &0.9673 & 0.7813  & 0.7090 & 0.9193 \\
            &   \CSVS (w/ RC) & 0.9305 & 0.9398 & 0.9671 & 0.7813 &  0.7376 & 0.9193  \\
        \midrule
         \multirow{4}{*}{\textbf{F1}} & \CSVU (w/o RC) & 0.9272 & 0 & 0.0619 & 0.7010 & 0.5302 & 0.8347 \\
            &   \CSVS (w/o RC) & 0.9273 & 0  & 0.0619 & 0.7003 & 0.5302 & 0.8345   \\
            &   \CSVU (w/ RC) & 0.9336 & 0.0880 & 0.0623 & 0.7601 & 0.6340 & 0.8249  \\
            &   \CSVS (w/ RC) & 0.9335 & 0.0906 & 0.0679 & 0.7555 & 0.6589 & 0.8249  \\
        \midrule
        \multirow{4}{*}{\textbf{\# Calls}} & \CSVU (w/o RC) & 404 & 404 & 404 & 404 & 404 & 2008 \\
            &   \CSVS (w/o RC) & 404 & 404 & 404 & 404 & 404 & 2008  \\
            &   \CSVU (w/ RC) & 404 & 22760 & 404 & 7423 & 5196  & 2008  \\
            &   \CSVS (w/ RC) &  404 & 29683 & 404 & 7378 & 7586 & 2008  \\
         \midrule
       
        \multirow{2}{*}{\textbf{RC Time (\%)}} & \CSVU (w RC) & 0 & 0.44 & 0 & 1.47 & 3.27 & 0  \\
            &  \CSVS (w RC) & 0 & 0.29 & 0 & 1.37  & 2.05  & 0 \\
        \bottomrule    
        \end{tabular}
\end{table}

\begin{table}[t]
    \centering
\footnotesize
    \caption{Gap between Theory and Practice}
    \label{tab:theoretical}
    \vspace{-1mm}
    \begin{tabular}{c |cc r | cc r}
    \toprule
    \multirow{2}{*}{$\mathbf{\epsilon}$}& \multicolumn{3}{c}{\CSVU}& \multicolumn{3}{c}{\CSVS} \\
     & \textbf{Est.}& \textbf{Err.} & \textbf{$\xi (\text{\textperthousand}) $} & \textbf{Est.} & \textbf{Err.} & \textbf{$\xi (\text{\textperthousand})$}   \\
    \midrule
    $0.10$ & $0.9991$ & $0.0628$ & $13.0$ & $0.9980$ & $0.0622$ & $26.4$\\
    $0.15$ & $0.9997$ & $0.0633$ & $6.4$ & $0.9997$ & $0.0635$ & $12.7$\\
    $0.20$ & $0.9997$ & $0.0636$ & $3.9$ & $0.9997$ &$0.0637$ & $7.8$\\
    $0.25$ & $0.9999$ & $0.0637$ & $2.7$ & $0.9999$ &$0.0635$ & $5.4$\\
    $0.30$ & $0.9999$ & $0.0635$ & $2.0$ & $0.9999$ & $0.0637$ & $4.0$\\
    \bottomrule
    \end{tabular}
    
\end{table}

\subsection{Theoretical Analysis}
\label{sec:exp-theo}
To investigate the gap between theory and practice, we analyze a representative cluster on RV, which contains 14,608 embeddings, and the confidence interval within this cluster is measured at 0.9942, using voting algorithms \CSVU and \CSVS. Specifically, we vary the theoretical error bound $\epsilon$ to compute the required sample ratio $\xi$ based on Theorems~\ref{theo:univote} and \ref{theo:simvote}, then we report the empirical voting output, denoted as Est., and the error between LLM-inferred output and voting output, denoted as Err., i.e., Est. is the ratio of sampled $x_i$ for which $f(t_i)=x_i$, and Err. is the difference between the sampled instances and all instances.
We set $lb=0.15$ and $ub=1-lb$ by default, and set $l=0.9996$, i.e., the failure probability is less than $0.579\%$.

The results are shown in Table~\ref{tab:theoretical}, for both \textsc{UniVote} and \textsc{SimVote}, the practical error is much smaller than the theoretical guaranteed error, and increasing the theoretical error bound $\epsilon$ leads to a substantial reduction in the required sample ratio. Therefore, this supports that small sample ratios are adequate to guarantee strong theoretical error bounds and maintain high empirical effectiveness.
Moreover, although \textsc{SimVote} theoretically yields looser error bounds, its sampling rate $\xi$ is always larger than \textsc{UniVote}. Nevertheless, the empirical performance of \textsc{SimVote} remains comparable to \textsc{UniVote}, suggesting that its semantic voting mechanism is robust even under relaxed theoretical constraints. 

\subsection{Testing on Synthetic Queries}
\label{sec:exp:synthetic}

To evaluate CSV over a broader and more diverse predicate space, we construct a synthetic query suite with controlled variations in semantics and difficulty. 
Concretely, we instruct GPT 5.2 to generate three types of queries for two datasets (IMDB-Review and TC): (1) queries that explicitly refer entities/concepts in the tuple texts (\emph{Explicit} predicates); (2) queries that convey abstract, subjective, interpretive semantics (\emph{Interpretive} predicates); and (3) queries that combine the explicit mentions with interpretive semantics (\emph{Hybrid} predicates). 
For each type, we generate three difficulty levels (easy/moderate/hard), where difficulty reflects semantic subtlety and linguistic complexity.
Each (type, difficulty) group contains 10 distinct queries, yielding 90 queries per dataset and a total of 180 queries across the two datasets. 
We generate these queries by providing sampled tuples and few-shot demonstrations to GPT-5.2. 
We evaluate \CSVU, \CSVS, and linear LLM invocation baseline (\basefilter) on these synthetic queries, respectively. 
For predicates with strong lexical anchoring (Explicit and Hybrid), we use a hybrid clustering distance that combines lexical and semantic signals (weighted sum of BM25 similarity and Euclidean distance),  while for Interpretive predicates, we keep the vanilla Euclidean distance. We use the results of \basefilter as the proxy of ground-truth.

\begin{figure}[t]
    \centering
    \includegraphics[width=0.65\linewidth]{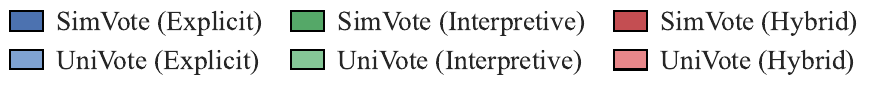}

 \subfigure[IMDB-Review Dataset]{
    \begin{minipage}{\linewidth}
        \includegraphics[width=\linewidth]{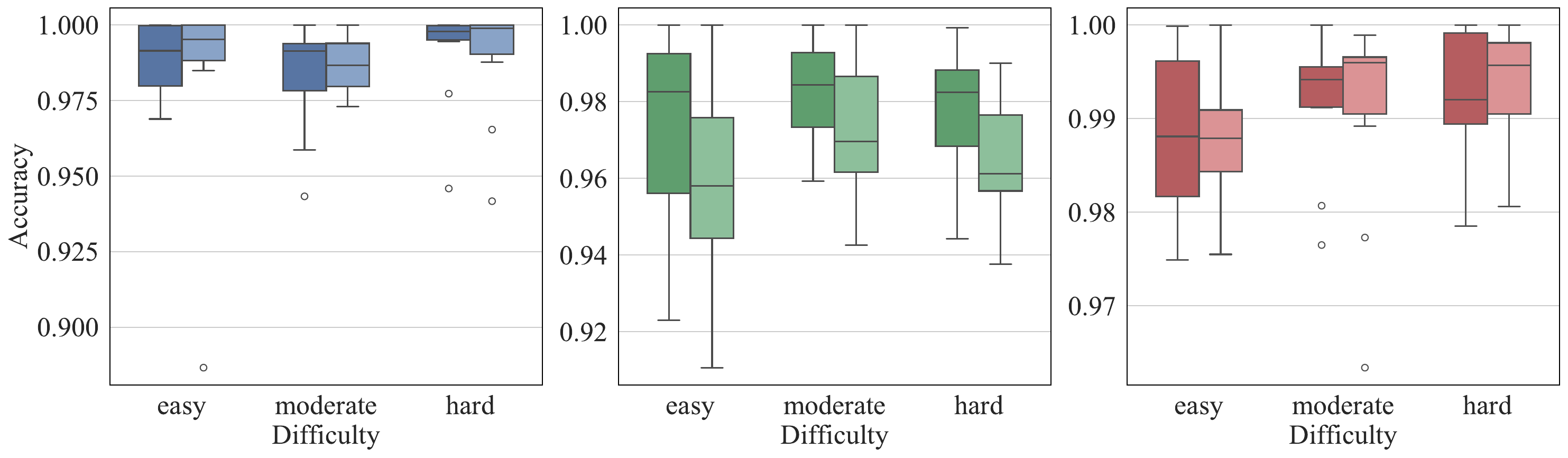}
    \newline
        \includegraphics[width=\linewidth]{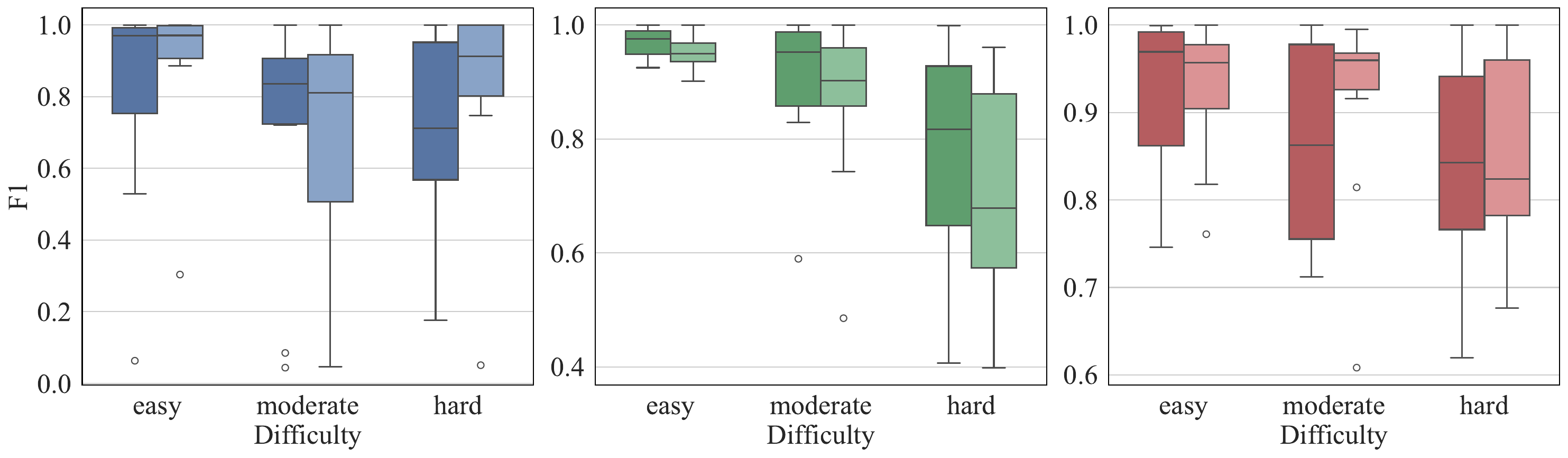}
        \centering
        \includegraphics[width=\linewidth]{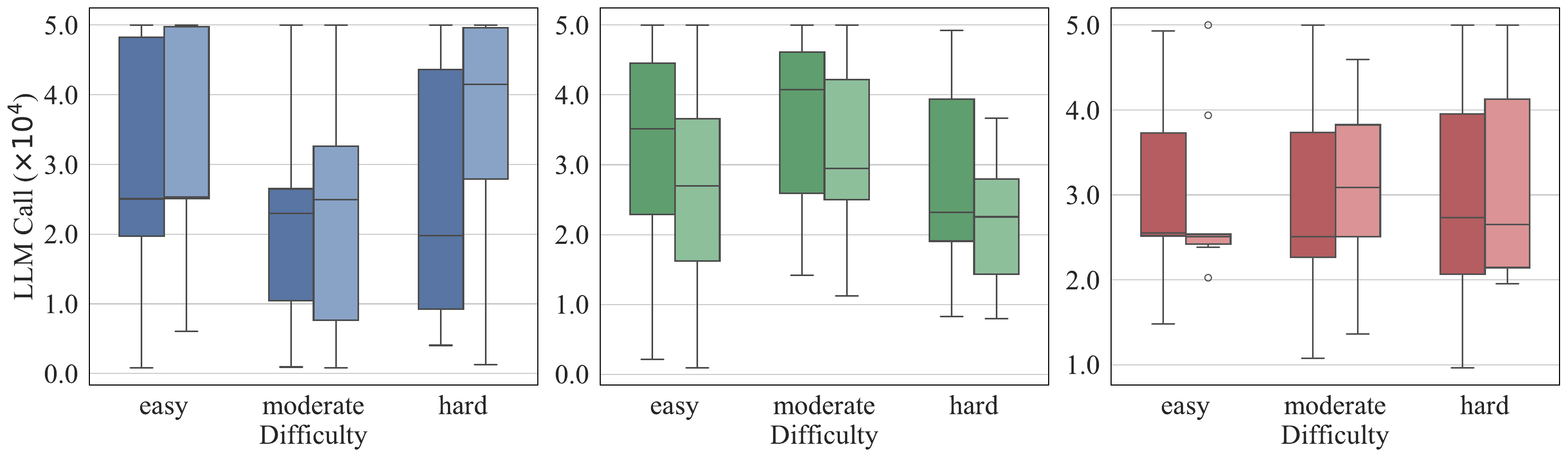}
    \end{minipage}
    }

    \subfigure[Comment Dataset]{
    \begin{minipage}{\linewidth}
        \includegraphics[width=\linewidth]{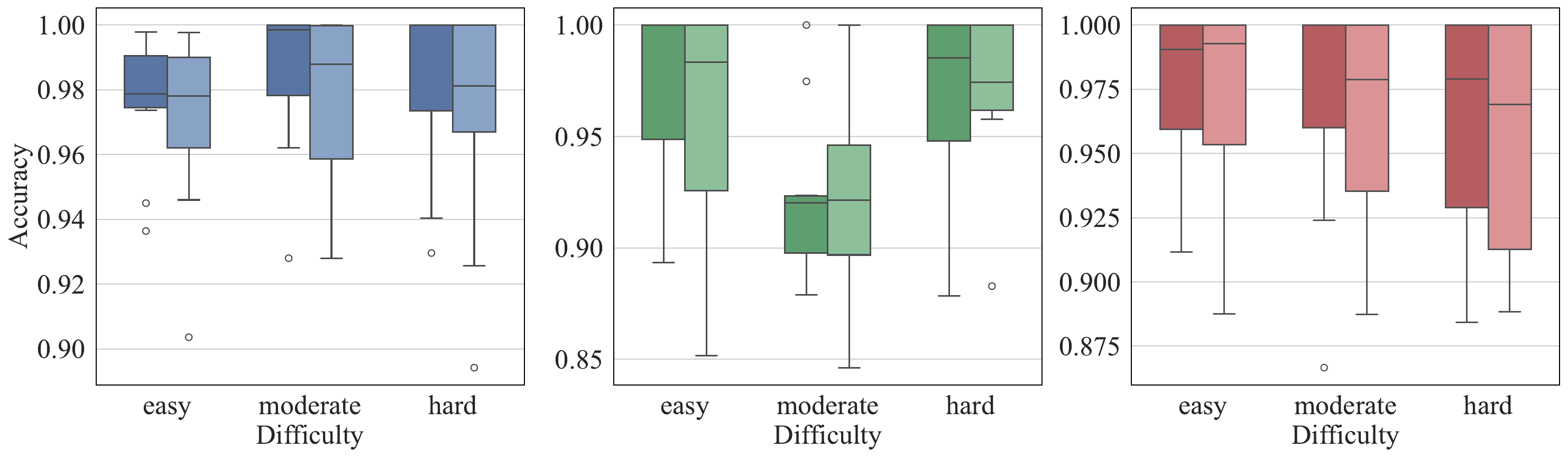}
    \newline
        \includegraphics[width=\linewidth]{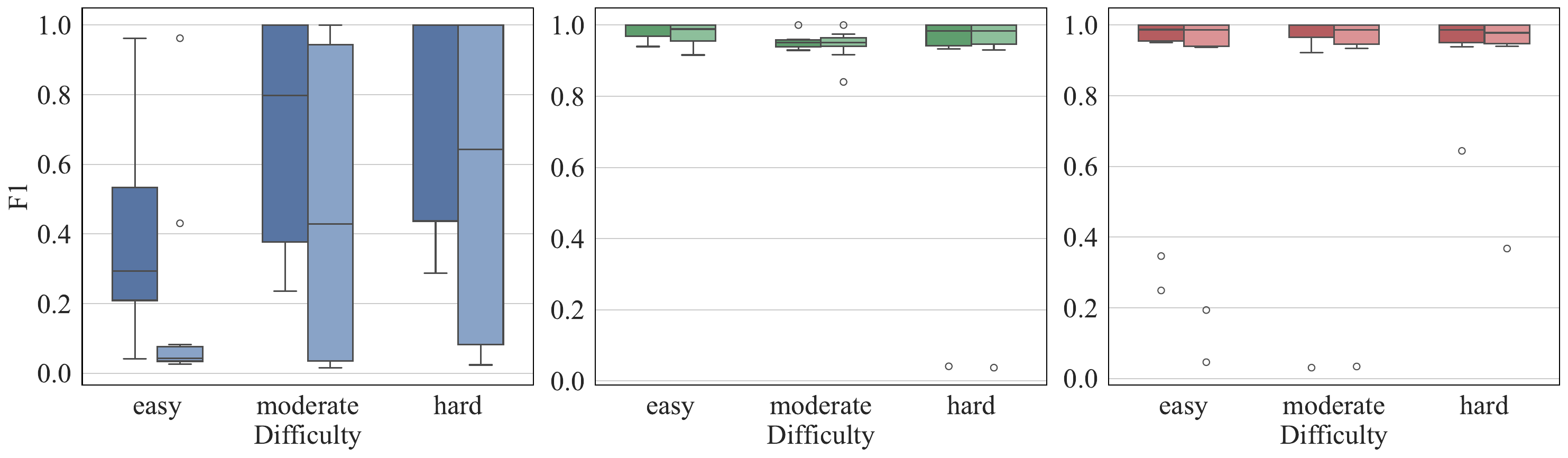}
        \centering
        \includegraphics[width=\linewidth]{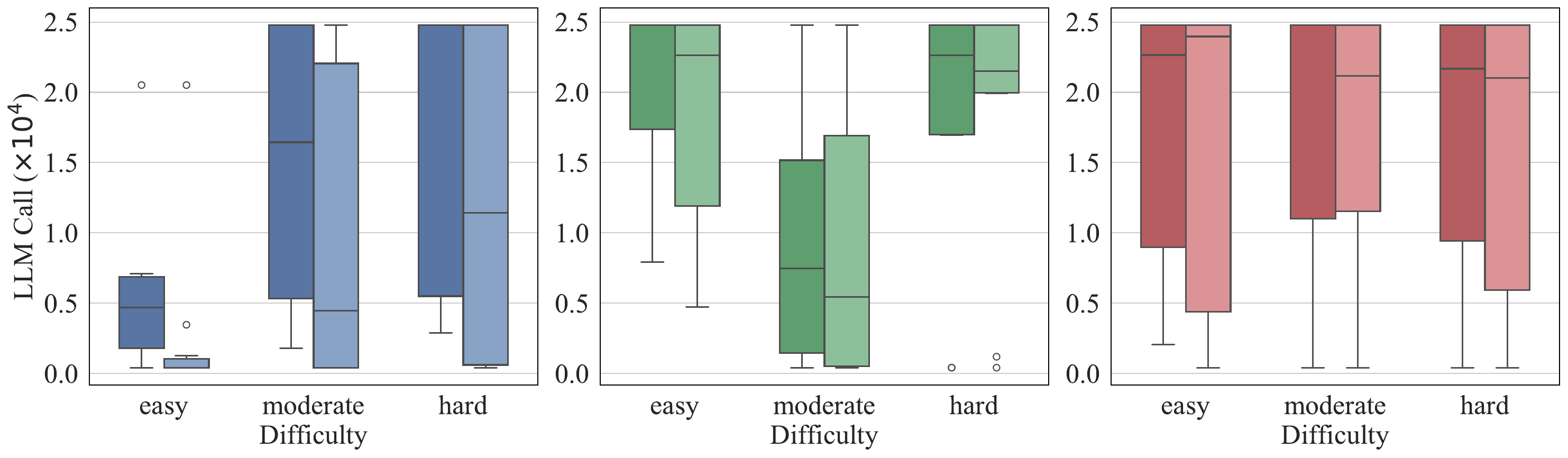}
    \end{minipage}
    }

    \caption{\CSVU and \CSVS on Synthetic Queries}
    \label{fig:auto_test}
\end{figure}

Fig.~\ref{fig:auto_test} reports the box-plot over Accuracy, F1, and \# LLM calls across datasets, query types, and difficulty levels.
Overall, \CSVU and \CSVS achieve high Accuracy across the synthetic suite, while consistently reducing LLM calls relative to the baseline Reference. 
We also observe nuanced failure modes that help characterize the generality of our method. For hard Interpretive/Hybrid predicates, F1 drops relatively to easy/moderate predicates, this is preliminarily caused by extremely low selectivity as there are very few true positives, making the random samples hardly contain enough positives for reliable voting. In such cases, Accuracy can remain high due to label skew (true negatives), while F1 reveals missed rare positives. Similar phenomena also occur in Explicit predicates, where CSV requires more LLM calls to achieve comparable F1 scores. The hybrid of BM25 and embedding distance improves clustering quality for these predicates.
In summary, these results verify that CSV supports a broader predicate space, and reveal a clear limitation regime of our method.

\section{Related Work}
\label{sec:related}

\stitle{General ML-based Analytical Query Processing.}
\sloppy
With the success of deep learning, 
Video database management systems (VDBMS)~\cite{DBLP:journals/pvldb/BangKCMA23, DBLP:journals/pvldb/RomeroHPKZK22, DBLP:journals/pvldb/KangBZ19, DBLP:journals/pvldb/KangGBHZ20, DBLP:journals/pvldb/KangMVBZ20, DBLP:journals/pvldb/XiaoZLWT023} are designed and developed for video data analysis, where deep learning models are used as UDFs in DBMS for image classification, object detection, etc. Oriented for video analytical queries, these systems involve a wide range of optimization techniques, encompassing end-to-end optimization of the preprocessing and query pipeline~\cite{DBLP:journals/pvldb/KangMVBZ20}, declarative query hints~\cite{DBLP:journals/pvldb/RomeroHPKZK22}, fine-grained model selection~\cite{DBLP:conf/sigmod/CaoSHAK22}, query-agnostic semantic index~\cite{DBLP:journals/pvldb/BangKCMA23}, and fast evaluation algorithms for specific operators~\cite{DBLP:journals/pvldb/KangBZ19, DBLP:journals/pvldb/BangKCMA23}.
Apart from video analysis, probabilistic predicates~\cite{DBLP:journals/pvldb/DingAL22, DBLP:conf/sigmod/LuCKC18, DBLP:journals/pvldb/YangWHLLW22} with ML models are used in a broader range of applications, where cheaper proxy models are constructed and their correlations are exploited.
For hybrid queries of ML and relational operators, RAVEN~\cite{DBLP:conf/sigmod/ParkSBSIK22} conducts cross optimization of ML and relational queries, exploiting data statistics to select across multiple executions. 
In addition, some systems also adopt adaptive query processing, which adjusts query plans and resource allocation, or choose cheaper proxy models, following runtime statistics dynamically.
ABAE~\cite{DBLP:journals/pvldb/KangGBHSZ21}, InQuest~\cite{DBLP:journals/pvldb/RussoHK0Z23} focus on optimizing aggregate queries with ML models by introducing sampling-based aggregate algorithms using proxy models.  

\stitle{LLM-based Semantic Query Processing.} LLMs open up a new query paradigm, semantic query processing, for structured data~\cite{DBLP:journals/corr/abs-2504-01157, liu2025optimizing} and unstructured data~\cite{DBLP:conf/icde/WangF25, DBLP:journals/corr/abs-2410-12189, DBLP:journals/corr/abs-2409-00847}. DocETL~\cite{DBLP:journals/corr/abs-2410-12189},  
ZenDB~\cite{DBLP:journals/corr/abs-2405-04674}, DocWanger~\cite{DBLP:journals/corr/abs-2504-14764}, 
Aryn~\cite{DBLP:journals/corr/abs-2409-00847} and Unify~\cite{DBLP:conf/icde/WangF25} leverage LLMs for analytics of unstructured text by semantic queries, where the query languages range from extended SQL dialect~\cite{DBLP:journals/corr/abs-2405-04674}, declarative operations~\cite{DBLP:journals/corr/abs-2410-12189} to natural language~\cite{DBLP:journals/corr/abs-2504-14764, DBLP:conf/icde/WangF25, DBLP:journals/corr/abs-2409-00847}. In these document analytics systems, LLMs are also used as optimization agents in the procedure of query evaluation, such as query planning~\cite{DBLP:journals/corr/abs-2409-00847}, query rewriting, prompt refinement and operation decomposition~\cite{DBLP:journals/corr/abs-2504-14764}. 
UQE~\cite{dai2024uqe} extends SQL by allowing natural language join and selection predicates, which intrinsically support semantic join and semantic filter by LLMs.  
Palimpzest~\cite{liu2025palimpzest} is an LLM-powered analytical system, with logical and physical optimizations to reduce the query time and financial cost of LLM invocations.
Abacus~\cite{russo2025abacus} is an optimizer built on Palimpzest, which integrates a Pareto optimization algorithm and cost estimation for query planning. 
Lotus~\cite{patel2024lotus} formulates LLM-based semantic operations in tables. Furthermore, Lotus proposes optimized algorithms for expensive semantic operations, with batched inference and LLM cascading. 
\BARGAIN \cite{DBLP:journals/pacmmod/ZeighamiSP25} utilizes the same model cascade framework as \lotus. Beyond accuracy alone, it ensures provable bounds on recall, precision, and accuracy for semantic operations.
\cite{akillioglu2025} evaluates the semantic query performance of PostgreSQL with pgAI and DuckDB with FlockMTL~\cite{DBLP:journals/corr/abs-2504-01157}, highlighting the challenges of enforcing structured outputs, batch processing, and query planning.
In these systems, semantic filters are fundamental operators. 

\section{Conclusion}
\label{sec:conclusion}
This paper presents an efficient and scalable approach, Clustering-Sampling-Voting (CSV), to semantic filtering, addressing the core computational challenges of integrating LLMs into data systems. 
By clustering similar tuples, sampling representatives, and inferring labels through theoretical guarantees voting, CSV reduces redundant LLM calls while maintaining comparable effectiveness.
Extensive experiments validate the efficacy of our proposed CSV, showcasing a reduction in the number of LLM calls to 1.28-$200\times$ compared to Reference, and 1.81-$355\times$ compared to Lotus.

\balance
\bibliographystyle{ACM-Reference-Format}
\bibliography{ref}

\newpage
\appendix

\section{Proof of Corollary ~\ref{corollary:1}}
\label{app:proof}

\begin{corollary}
    Let $X = \{x_1, ..., x_n\}$ be a finite population of size $n$, where $x_1, \cdots, x_n$ are binary variables with a mean $\mu$. Let $c_1, ..., c_k$ be sampled without replacement from $X$, and each $c_i$ is associated with a weight $w_i$, where $w_i \ge 0$ and $\sum_{i = 1}^kw_i = 1$. Let $\hat{\mu}$ be the mean of $c_1, \cdots, c_k$, i.e., $\hat{\mu} = \frac{1}{k}\sum_{i = 1}^k w_ic_i$. For any $\epsilon > 0$, we have
    $$
    Pr[| \hat{\mu}_w - \mu| \ge \epsilon] \le 2 \exp(\frac{-3k\epsilon^2}{(6 \hat{\sigma}^2 + 2 \epsilon)v}\cdot \frac{n - k}{n - 1}),
    $$
    where $\hat{\sigma}^2 = \frac{1}{k-1}\sum_{i=1}^k(c_i - \hat{\mu})^2$ is the sample standard deviation, and $v$ is a constant such that $\max_i w_i \le \frac{v}{k}$.
\end{corollary}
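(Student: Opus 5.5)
The plan is to reduce the weighted statement to the unweighted Bernstein bound of Lemma~\ref{lemma:b-bound}, following the decomposition indicated after Corollary~\ref{corollary:1}. I will read the weighted estimator as $\hat{\mu}_w=\sum_{i=1}^k w_ic_i$, since $\sum_i w_i=1$ makes this the natural normalization; the stray factor $\frac{1}{k}$ in the statement appears to be a typo. Because the weights sum to one, the target deviation is
\[
\hat{\mu}_w-\mu=\sum_{i=1}^k y_i, \qquad y_i:=w_i(c_i-\mu).
\]
Each $c_i$ has marginal mean $\mu$ under simple random sampling, so $\mathbb{E}[y_i]=0$. The rest of the argument supplies the two Bernstein ingredients for the $y_i$: an almost-sure range bound and a variance bound.

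\emph{Step 1 (range).} Since $c_i\in\{0,1\}$ and $\mu\in[0,1]$, we have $|c_i-\mu|\le 1$. Hence $|y_i|\le w_i\le \max_i w_i\le v/k$, so Lemma~\ref{lemma:b-bound} applies with $R=v/k$ in place of $R=1$.

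\emph{Step 2 (variance).} The variance of $\sum_i y_i$ is at most $\sum_i w_i^2\sigma^2$, where $\sigma^2$ is the population variance. This uses the fact that sampling without replacement makes the pairwise covariances nonpositive, and the $\frac{n-k}{n-1}$ correction can be kept aside. Next, $\sum_i w_i^2\le(\max_i w_i)\sum_i w_i\le v/k$, so the total variance proxy is $v\sigma^2/k$. As in Lemma~\ref{lemma:b-bound}, I then replace $\sigma^2$ by the sample variance $\hat{\sigma}^2$.

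\emph{Step 3 (exponent).} Bernstein's exponent for a sum with variance proxy $V$ and range $R$ is $\epsilon^2/(2V+2R\epsilon/3)$. Substituting $V=v\hat{\sigma}^2/k$ and $R=v/k$ gives
\[
\frac{\epsilon^2}{2v\hat{\sigma}^2/k+2v\epsilon/(3k)}=\frac{3k\epsilon^2}{(6\hat{\sigma}^2+2\epsilon)v}.
\]
Multiplying by the finite-population factor $\frac{n-k}{n-1}$ yields exactly the claimed bound. Equivalently, one can view the result as Lemma~\ref{lemma:b-bound} with the effective sample size $k$ replaced by $k/v$.

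\emph{Main obstacle.} Lemma~\ref{lemma:b-bound} is stated for the plain mean of an exchangeable sample, whereas the weighted sum $\sum_i w_i c_i$ is not a symmetric function of the sample. So the lemma cannot be cited verbatim, particularly with respect to the finite-population factor $\frac{n-k}{n-1}$.

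I would first try to rerun the proof behind Lemma~\ref{lemma:b-bound} (the Serfling/Bardenet--Maillard martingale argument) with fixed, deterministic weights. The martingale differences there become $w_i$ times the original increments. Steps 1 and 2 bound these increments by $v/k$ and their conditional variances in sum by $v\sigma^2/k$, and the reverse-martingale structure that produces the $\frac{n-k}{n-1}$ factor is unaffected by scaling each increment.

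If that route becomes messy, the fallback is Hoeffding's convex-ordering reduction from sampling without replacement to sampling with replacement. That reduction gives the same exponent without the finite-population factor. Since $\frac{n-k}{n-1}\le 1$, dropping the factor only strengthens the exponent, so the stated inequality would hold a fortiori.

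In either route, the weights are treated as fixed, or as depending only on embeddings that are independent of the labels. This is what lets the $y_i$ keep mean zero and the range and variance bounds hold conditionally.
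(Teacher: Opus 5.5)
Your proposal is correct to the same standard as the paper and takes the same route. The paper's appendix proof also writes $\hat{\mu}_w-\mu=\sum_{i}y_i$ with $y_i=w_i(c_i-\mu)$, uses $R\le\max_i w_i\le v/k$ and $\sum_i w_i^2\le v/k$, and applies Lemma~\ref{lemma:b-bound} directly to the $y_i$ at threshold $\epsilon/k$; this is algebraically the same as your sum-form Bernstein exponent, but the paper omits your justification for using an unweighted, exchangeable-sample lemma on weighted terms. The one step that neither you nor the paper actually earns is substituting $\hat{\sigma}^2$ for the population variance: your Hoeffding-reduction fallback, whose convex-ordering comparison does carry over to fixed nonnegative weights via a simple coupling, yields the bound with $\sigma^2$, so the sample-variance form is inherited from the way Lemma~\ref{lemma:b-bound} is stated rather than derived.
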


\proofsketch
$\hat{\mu}_w - \mu =\sum_{i=1}^k w_ic_i-\sum_{i=1}^k w_i\mu=\sum_{i = 1}^k w_i(c_i-\mu)$. Let $y_i = w_i(c_i - \mu)$, we have $\hat{\mu}_w - \mu=\sum_{i = 1}^k y_i$.
We have the mean of $y_i$ equals to $0$ by the law of big numbers, since $c_i$ is sampled from $X$ with mean $\mu$.
Apply Lemma~\ref{lemma:b-bound} for $y_i$, we have
\begin{align}
    Pr\left[\left|\frac{1}{k}\sum_{i=1}^k y_i\right| \ge \epsilon\right] \le 2\exp(-\frac{k\epsilon^2}{2\hat{\sigma}_y^2 + {2R\epsilon}/3}\cdot\frac{n - k}{n - 1}).\label{eq:cons2}
\end{align}
As $c_i$ i.i.d. of $x_i$, $\hat{\sigma}_y^2=\frac{1}{k-1}\sum_{i=1}^k(w_i(c_i-\mu))^2=\frac{1}{k}\hat{\sigma}^2\sum_{i=1}^kw_i^2$. 
Since $|y_i|=w_i(c_i-\mu)\le 1$, $R \le \max_i w_i \le \frac{v}{k}$, $Pr [| \hat{\mu}_w - \mu| \ge \epsilon]=Pr \left[\left|\sum_{i=1}^k y_i\right| \ge k\epsilon\right]$, and $\sum_{i = 1}^k w_i^2 \le v/k$, According to Eq.~\eqref{eq:cons2}, we have
\begin{align*}
&Pr [| \hat{\mu}_w - \mu| \ge \epsilon]=Pr\left[\left|\sum_{i=1}^k y_i\right|\ge \epsilon\right]=Pr\left[\left|\frac{1}{k}\sum_{i=1}^k y_i\right|\ge \frac{\epsilon}{k}\right]\\
\le&2\exp(-\frac{k(\epsilon/k)^2}{2\hat{\sigma}_y^2 + {2\epsilon}/3k}\cdot\frac{n - k}{n - 1})=2\exp(\frac{-3k\epsilon^2}{({6}\hat{\sigma}^2 + {2\epsilon})v}\cdot\frac{n - k}{n - 1}).
\end{align*}
\eop

\section{Impact of Embedding Models \& LLMs}
\label{app:exp}

\stitle{(1) Impact of Different Embedding Models.}
We study how the embedding model affects \CSVU and \CSVS by comparing BGE-Large-en (BGE)~\cite{bge_embedding}, Qwen-0.6B (Qwen)~\cite{qwen3embedding}, and the default E5-Large~(E5)\cite{wang2022e5} model across all datasets, with results in Table~\ref{tab:model-eff} and Fig.~\ref{fig:model-time}.
For effectiveness, gauged by Accuracy and F1, differences are relatively small. 
BGE is slightly better on RV-Q1 but performs slightly worse on AD-Q2. These results imply that embedding quality has a limited impact on the ultimate result quality.
Efficiency metrics unveil more notable distinctions. BGE leads to increased resource utilization on both RV-Q1 and AD-Q2 due to its embeddings generating less coherent clusters, necessitating more LLM queries for resolution. Qwen performs suboptimally on RV-Q1. Nonetheless, all embedding models maintain significantly enhanced efficiency compared to the baselines, achieving speedups ranging from $1.1\times$ to $18.95\times$, showcasing the robustness of the voting strategy even with suboptimal embedding models.

\stitle{(2) Impact of Different Backbones.}
We evaluate the performance of three backbone LLMs: LlaMA3-8B, a quantized int4 version of LlaMA3-70B, and GPT-4o. The LlaMA models are deployed by local vLLM engine while GPT-4o is used by remote invocation.
The results are detailed in Table~\ref{tab:backbones} and Fig.~\ref{fig:backbone}. All three models exhibit comparable efficiency and effectiveness metrics, suggesting that \CSVU and \CSVS are not overly sensitive to model selection.
LlaMA3-70B excels in performance on RV-Q1, indicating that its larger capacity enables finer semantic discrimination in intricate cases. GPT-4o shows a significant advantage on CB-Q2, surpassing other models by approximately $10\%$ in Accuracy. This improvement likely stems from GPT-4o's superior handling of complex reasoning tasks. However, GPT-4o is not always dominant;  since many queries are adequately resolved by smaller models, and its utilization can sometimes trigger content management limitations.
For efficiency, LLM call volume and token usage remain relatively consistent across CB-Q2 and AD-Q2. On RV-Q1, LlaMA3-70B incurs a higher number of calls, reflecting its sensitivity to nuanced semantic boundaries. Nonetheless, it still achieves a $6.76\times$ speedup over baseline methods. Query latency varies across models due to variations in size, server throughput, and access constraints.

\begin{table}[h]
\centering
\footnotesize
\caption{Semantic Filter with Different LLM Backbones}
    \vspace{-3mm}
\label{tab:backbones}
\resizebox{\linewidth}{!}{
\begin{tabular}{cc cccccc}
\toprule
 \multirow{2}{*}{\bf Algo.} &  \multirow{2}{*}{\bf Model} & \multicolumn{2}{c}{\textbf{RV-Q1}} & \multicolumn{2}{c}{\textbf{CB-Q2}} & \multicolumn{2}{c}{\textbf{AD-Q2}} \\
 & & \textbf{Acc.} & \textbf{F1} & \textbf{Acc.} & \textbf{F1} & \textbf{Acc.} & \textbf{F1} \\
\midrule
\multirow{3}{*}{\CSVU} 
 & LlaMa-8B    & 0.9307  & 0.9336  & 0.7799  & 0.7517  & 0.9190  & 0.8248  \\
 & LlaMa-70B   & 0.9520  & 0.9523  & 0.7715 & 0.8079  & 0.9194  & 0.8251  \\
 & GPT-4o      & 0.9306  & 0.9335 & 0.8870  & 0.8814  & 0.9194   & 0.8250  \\
\midrule
\multirow{3}{*}{\CSVS}
 & LlaMa-8B    & 0.9305   & 0.9335 & 0.7803 & 0.6988  & 0.9193  & 0.8249  \\
 & LlaMa-70B   & 0.9521  & 0.9524 & 0.7701   & 0.8068  & 0.9194   & 0.8251  \\
 & GPT-4o      & 0.9305  & 0.9335 & 0.8666 & 0.8687  & 0.9193  & 0.8250  \\
\bottomrule
\end{tabular}
}
\end{table}

\begin{figure}[h]
    \begin{tabular}[h]{c}
        \centering
        \includegraphics[width=0.7\columnwidth]{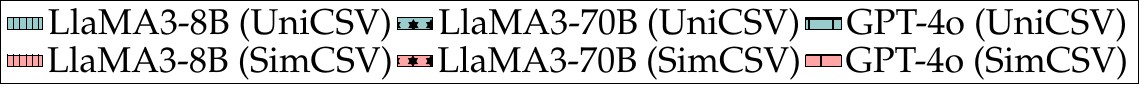}
	\end{tabular}
     \begin{tabular}[h]{c}
        \subfigure[Total \# LLM Calls] {
				\includegraphics[width=0.32\columnwidth]{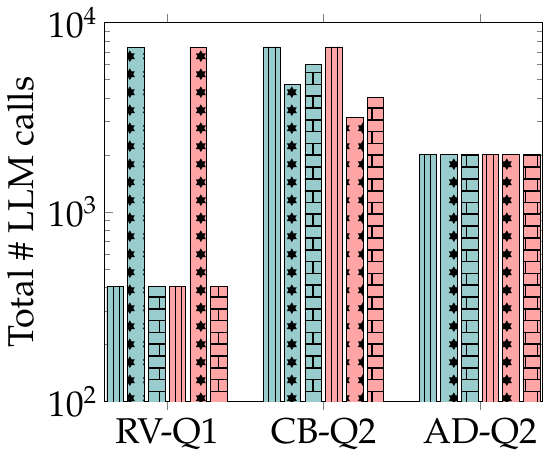}
			\label{fig:T_heatmap}
		} 
        \hspace{-0.3cm}
        \subfigure[Query Time (seconds)] {
			\includegraphics[width=0.32\columnwidth]{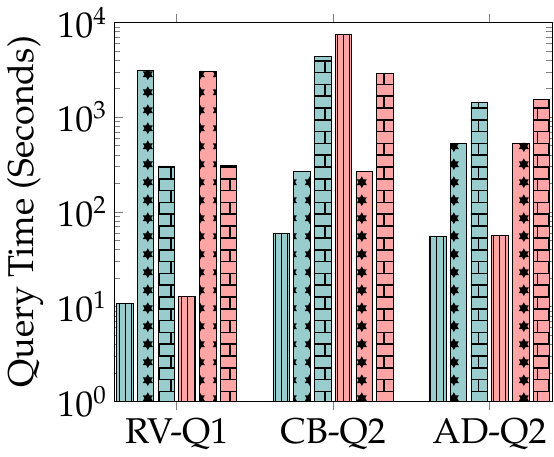}
			\label{fig:F_heatmap}
		}
        \hspace{-0.3cm}
        \subfigure[Total \# Tokens ($\times 10^3$)] {
			\includegraphics[width=0.32\columnwidth]{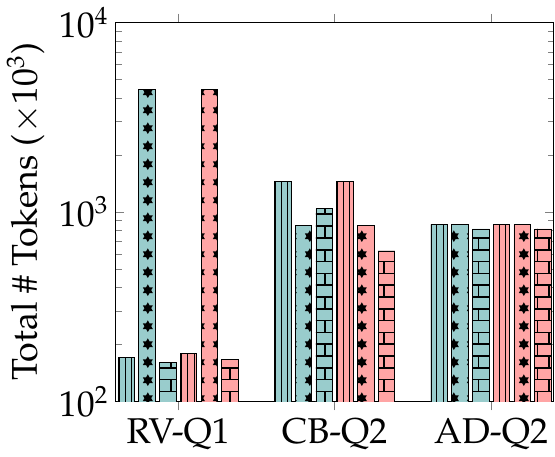}
			\label{fig:F_heatmap}
		}
	\end{tabular}
    \vspace{-5mm}
	\caption{Comparisons on Different LLM Backbones}
        \label{fig:backbone}
\end{figure}

\begin{table}[h]
\centering
\footnotesize
\caption{Semantic Filter with Different Embedding Models}
\label{tab:model-eff}
    \vspace{-3mm}

\begin{tabular}{cr cccccc}
\toprule
 \multirow{2}{*}{\bf Algo.} & \textbf{Model} & \multicolumn{2}{c}{\textbf{RV-Q1}} & \multicolumn{2}{c}{\textbf{CB-Q2}} & \multicolumn{2}{c}{\textbf{AD-Q2}} \\
 & & \textbf{Acc.} & \textbf{F1} & \textbf{Acc.} & \textbf{F1} & \textbf{Acc.} & \textbf{F1} \\
\midrule
\multirow{3}{*}{\CSVU} 
 & E5 & 0.9307 & 0.9336 & 0.7852 & 0.7601 & 0.9192 & 0.8248 \\
 & BGE      & 0.9527 & 0.9531 & 0.7827 & 0.7594 & 0.9122 & 0.8036 \\
 & Qwen  & 0.9227 & 0.9235 & 0.7777 & 0.7540 & 0.9172 & 0.8192 \\
\midrule
\multirow{3}{*}{\CSVS}
 & E5 & 0.9305 & 0.9335 & 0.7815  & 0.7555 & 0.9193 & 0.8249 \\
 & BGE      & 0.9520 & 0.9524 & 0.7806  & 0.7587 & 0.8947 & 0.7755 \\
 & Qwen  & 0.9223 & 0.9231 & 0.7794 & 0.7500 & 0.9172 & 0.8192 \\
\bottomrule
\end{tabular}

\end{table}

\begin{figure}[h]
    \begin{tabular}[t]{c}
        \centering
        \includegraphics[width=1.0\columnwidth]{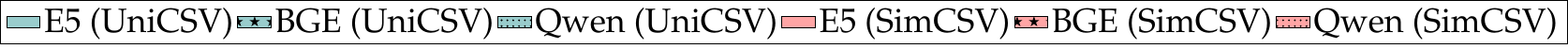}
	\end{tabular}
     \begin{tabular}[t]{c}
        \subfigure[Total \# LLM Calls] {
				\includegraphics[width=0.32\columnwidth]{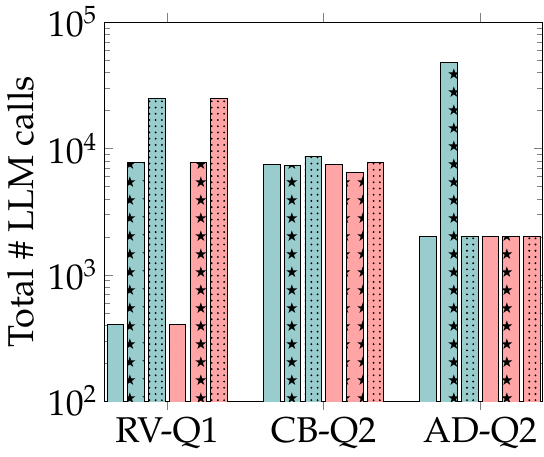}
			\label{fig:T_heatmap}
		} 
        \hspace{-0.3cm}
        \subfigure[Query Time (seconds)] {
			\includegraphics[width=0.32\columnwidth]{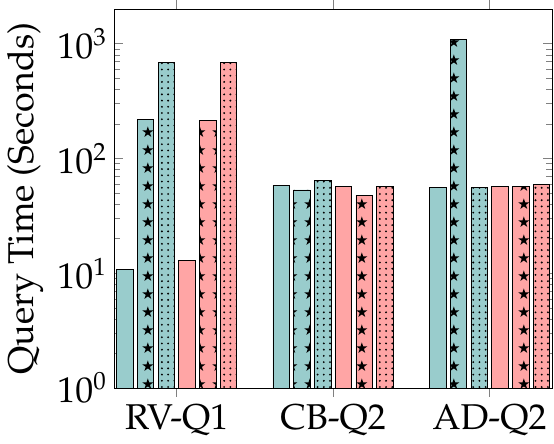}
			\label{fig:F_heatmap}
		}
        \hspace{-0.3cm}
        \subfigure[Total \# Tokens ($\times 10^3$)] {
			\includegraphics[width=0.32\columnwidth]{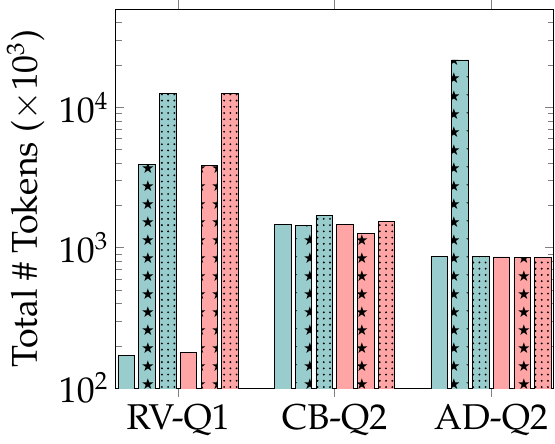}
			\label{fig:F_heatmap}
		}
	\end{tabular}
    \vspace{-5mm}
	\caption{Comparisons of Using Different Embedding Models}
        \label{fig:model-time}
\end{figure}


\end{document}